\documentclass[12pt,a4paper]{article}

\usepackage[utf8]{inputenc}
\usepackage[english]{babel}
\usepackage[left=2cm,right=2cm,
    top=2.5cm,bottom=2.5cm,bindingoffset=0cm]{geometry}
    \usepackage{graphicx}
      \usepackage{amsmath,amssymb,euscript,amsthm,amsfonts,mathrsfs,amscd}
\usepackage{color}
\usepackage{amsthm}
\newtheorem{theorem}{Theorem}
\newtheorem{remark}{Remark}
\newtheorem{proposition}{Proposition}
\numberwithin{equation}{section}
\usepackage[affil-it]{authblk}
\def \rT {\rm T}
\def \bZ {\textbf{Z}}
\def \bz {\textbf{z}}

 \title{Asymptotic behaviour of weighted differential entropies in a Bayesian problem }
\author{Mark Kelbert \thanks{Electronic address: \texttt{mark.kelbert@gmail.com}}}

\author{Pavel Mozgunov%
  \thanks{Electronic address: \texttt{pmozgunov@gmail.com}; corresponding author}}
\affil{International Laboratory of Stochastic Analysis and Its Applications\\ National Research University Higher School of Economics\\ Moscow, Russia}
\affil{Department of Mathematics \\ Swansea University\\ Swansea, UK}
\markboth{\hfill{\footnotesize\rm   Mark Kelbert, Pavel Mozgunov }\hfill}
{\hfill{\footnotesize\sl   Asymptotic behaviour of weighted differential entropies in a Bayesian problem}\hfill}
\date{} \begin{document}
\maketitle
\begin{abstract}
We consider a Bayesian problem of estimating of probability of success in a series of conditionally independent trials with binary outcomes. We study the asymptotic behaviour of differential entropy for posterior probability density function conditional on $x$ successes after $n$ conditionally independent trials, when $n \to \infty$. It is shown that after an appropriate normalization in cases $x \sim n$ $x$ $\sim n^\beta$ ($0<\beta<1$) limiting distribution is Gaussian and the differential entropy of standardized RV converges to differential entropy of standard Gaussian random variable. When $x$ or $n-x$ is a constant the limiting distribution in not Gaussian, but still the asymptotic of differential entropy can be found explicitly.

Then suppose that one is interested to know whether the coin is fair or not and for large $n$ is interested in the true frequency. To do so the concept of weighted differential entropy introduced in \cite{Belis1968} is used when the frequency $\gamma$ is necessary to emphasize. It was found that the weight in suggested form does not change the asymptotic form of  Shannon, Renyi, Tsallis and Fisher entropies, but change the constants. The main term in weighted Fisher Information is changed by some constant which depend on distance between the true frequency and the value we want to emphasize.

In third part we derived the weighted versions of Rao-Cram\'er, Bhattacharyya and Kullback inequalities.
This result is applied to the Bayesian problem described above. The asymptotic forms of these inequalities are obtained for a particular class of weight functions.
\end{abstract}

\textbf{AMS subject classification:} 94A17, 62B10, 62C10

\textbf{Key words:} {weighted differential entropy, Renyi entropy, Tsallis entropy, Fisher information, Rao-Cram\'er inequality, Bhattacharyya inequality, Kullback inequality}

\tableofcontents

\section{Introduction}
Let ${\rm U}$ be a random variable (RV) that uniformly distributed in interval $[0,1]$. Given a realization of this RV $p$, consider a sequence of conditionally independent identically distributed $\xi_i$, where
$\xi_i=1$ with probability $p$ and $\xi_i = 0$ with probability $1-p$. Let $x_i$, each $0$ or $1$, be an outcome in trial $i$. Denote by $S_n = \xi_1+ \ldots + \xi_n$, by $\textbf{x}=(x_i,$ $i =1,...,n) $ and by $x=x(n) = \sum_{i=1}^{n} x_i$. Note that RVs $(\xi_i)$ are positively correlated. Indeed,
$P(\xi_i=1,\xi_j=1)=\int_0^1p^2 dp=1/3$ if $i\not=j$,
but $P(\xi_i=1)P(\xi_j=1)=(\int_0^1p dp)^2=1/4$.

 The probability that after $n$ trials the exact sequence $\textbf{x}$ will appear:
\begin{equation}
 \mathbb{P}(\xi_1 = x_1,...,\xi_n=x_n) = \int_0^1 p^x(1-p)^{n-x}{\rm d} p=\frac{1}{(n+1) {n\choose x}}.
 \end{equation}
This implies that the posterior probability density function (PDF) of the number of $x$ successes after $n$ trials is uniform:

$$ \mathbb{P}(S_n=x) = \frac{1}{(n+1)}, x=0, \ldots, n.$$

The posterior PDF given the information that after $n$ trials one observes $x$ successes takes the form

\begin{equation}
\displaystyle f_{p|S_n}(p|\xi_1 = x_1,...,\xi_n=x_n) = (n+1) {n\choose x} p^x (1-p)^{n-x}.
\label{pdf}
\end{equation}
Note that conditional distribution given in (\ref{pdf}) is a Beta-distribution $B(x+1,n-x+1)$. \textquotedblleft It is known that Beta-distribution is asymptotically normal with its mean and variance as $x$ and $(n-x)$ tend to infinity, but this fact is lacking a handy reference\textquotedblright (see \cite[p.1]{mit}). That is why, we give the proof of this fact in two cases.

The RV $Z^{(n)}$ with PDF (\ref{pdf}) has the following expectation:
\begin{equation}
\mathbb{E}[Z^{(n)}|S_n=x]= \displaystyle \frac{x+1}{n+2}, \end{equation}

\noindent and the following variance:
\begin{equation}\displaystyle \mathbb{V}[Z^{(n)}|S_n=x]=\frac{(x+1)(n-x+1)}{(n+3)(n+2)^2}.
\end{equation}
Recall: $h_d(f)$ is the differential entropy of some RV $Z$ with PDF $f$:
\begin{equation}
h_{d}(f) = - \int_{\mathbb{R}} f(z) {\rm log} (f(z)) {\rm d}z
\end{equation}
with convention $0 {\rm log} 0 = 0$. 
Note that after a linear transformation of RV $Z$ to RV $X$ with some PDF $g(x)$ where $X=d_1Z + d_2$ differential entropy of RV $X$ transforms in the following way \cite{Cover, Kelbert2014}:
\begin{equation}
h_d(g)=h_d(f)+ {\rm log}d_1
\label{property}
\end{equation}
Let $\bar{Z}$ be a standard Gaussian RV with PDF $\varphi$ then the differential entropy of $\bar{Z}$ \cite{Kelbert2014}:
$$h_d(\varphi) = \frac{1}{2}{\rm log} \left(2 \pi e \right).$$

\textit{The goal of the first part} of the work is to study the asymptotic behaviour of differential entropy of the following  RVs:
\begin{enumerate}
\item $Z_{\alpha}^{(n)}$ with PDF $f_{\alpha}^{(n)}$ given in (\ref{pdf}) when $x = x(n) \sim \alpha n$, where $0<\alpha<1$
\item $Z_{\beta}^{(n)}$ with PDF $f_{\beta}^{(n)}$ given in (\ref{pdf}) when $x = x(n) \sim n^\beta$, where $0<\beta<1$
\item $Z_{x}^{(n)}$ with PDF $f_{x}^{(n)}$ given in (\ref{pdf}) when  $x=c_1$ and $Z_{n-x}^{(n)}$ with PDF $f_{n-x}^{(n)}$ given in (\ref{pdf}) when $n-x(n)=c_2$ where $c_1$ and $c_2$ are some  constants. 
\end{enumerate}

We will demonstrate that the limiting distributions of standardized RV when $n \to \infty$ in the cases 1 and 2 are Gaussian. However, the asymptotic normality does not imply automatically the limiting form of differential entropy. In general the problem of taking the limits under the sign of entropy is rather delicate and was extensively studied in literature, cf., i.e., \cite{Dobrushin1960,Kelbert2010}. In the third case the limiting distribution is not Gaussian, but still the asymptotic of differential entropy can be found explicitly. 

\textit{In second part} of the paper (section 3) we suppose that one is interested to know whether the coin is fair or not and for large $n$ is interested in true frequency. So the goal of a statistical experiment in twofold: on the initial stage an experimenter is mainly concerns whether the coin is fair (i.e. $p=1/2$) or not. As the size of a sample grows, he proceeds to estimating the true value of the parameter anyway. We want to quantify the differential entropy of this experiment taking into account its two sided objective. It seems that quantitative measure of information gain of this experiment is provided by the concept of weighted differential entropy  \cite{Clim2008, Belis1968,suhov2015, suhov2015_1}. In our case $\phi(x)$ is a weight function that underline the importance of $0.5$.

The goal of the second part of work is to study the weighted Shannon (\ref{shannon}), Renyi (\ref{renyi}), Tsallis (\ref{tsal}) and Fisher (\ref{fisher}) entropies \cite{Cover}:

\begin{equation}
 h^{\phi}(f) = - \int_{\mathbb{R}} \phi^{(n)}(p) f(p) {\rm log} f(p) dp,
 \label{shannon}
 \end{equation}
 
\begin{equation}H^{\phi}_{\nu}(f) = \frac{1}{1-\nu} {\rm log} \int_{\mathbb{R}}  \phi^{(n)}(z) \left( f(z) \right)^{\nu}{\rm d}z
\label{renyi}
\end{equation}

\begin{equation}
S^{\phi}_{q}(f) = \frac{1}{q-1} \left(1- \int_{\mathbb{R}}  \phi^{(n)}(z) \left( f(z) \right)^{q}{\rm d}z \right)
\label{tsal}
\end{equation}

\begin{equation}I^{\phi}(\theta) = \mathbb{E} \left(  \phi^{(n)}(Z) \left( \frac{\partial}{\partial \theta} {\rm log} f(Z;\theta) \right)^2 { \Big | \theta  }  \right) 
\label{fisher}
\end{equation}
 \noindent where $Z=Z^{(n)}$ is a RV with PDF $f$ given in (\ref{pdf}) and $\phi^{(n)}(p)$ is a weight function that underline the importance of some particular value. The following special cases are considered:
\begin{enumerate}
\item  $\phi^{(n)}(p) = 1$ 
\item  $\phi^{(n)}(p)$ depends both on $n$ and $p$
\end{enumerate}

We will denote by $\gamma$ the frequency that we want to emphasize (the $0.5$ in the example above). We assume that $\phi(x) \geq 0$ for all $x$.
Choosing the weight function we adopt the following normalization rule:
\begin{equation}
\int_{\mathbb{R}} \phi^{(n)}(p) f^{(n)}(p) dp = 1
\label{norm}
\end{equation}

It can be easily checked that if weight function $\phi^{(n)}(p)$ satisfies (\ref{norm}) then the Renyi weighted entropy (\ref{renyi}) and Tsallis weighted entropy (\ref{tsal}) tend to Shannon's weighted entropy as $\nu \to 1$ and $q \to 1$ correspondingly.

Considering the goal of including the weight function - emphasizing some particular value, we consider the following weight function:

\begin{equation}
\phi^{(n)}(p) = \Lambda^{(n)}(\gamma) p^{\gamma \sqrt{n}} (1-p)^{(1-\gamma)\sqrt{n}},
\label{weight}
\end{equation}
where $\Lambda^{(n)}(\gamma)$ is found from the normalizing condition (\ref{norm}) and is given explicitly in (\ref{lambda}). This weight function is selected as a model example with a twofold goal to emphasize a particular value $\gamma$ for moderate $n$, while preserving the true frequency $p^*$.

\textit{In the third part} of paper (Section 4,5 and 6) we recall the statistical experiment with binary outcomes where the main objective is to find out whether the probabilities of success and failure are equal. In other words, the statistical decisions in a neighbourhood of a particular value $\gamma=1/2$ are especially sensitive. It is clear that if an experimenter  wrongly declares that the parameter of interest is in a small neighbourhood of particular value $\gamma=1/2$ than the penalty of this error should be more severe than for a similar error far from the sensitive area. Similar models of sensitive estimator appear in many fields of statistics. For this reason we start with the general framework and then specialize it to the case of binary trials as an example. 

Consider RV $\bZ \in \mathbb{R}^d$ with PDF $f(\bz)$ or family of RV $\bZ_\theta \in \mathbb{R}^d$ with PDF $f_\theta$ where $\theta \in \Theta \subset \mathbb{R}^m$ is the vector of parameters of PDF $f_\theta$. Denote $\bz = \left[z_1, \ldots, z_d \right]^{\rT}$. Let $\phi(.)$ be the positive weight function that emphasizes particular value $\gamma$ ,  $\mathbb{E}^\phi_\theta(\bZ)$ be the weighted expectation of random vector $\bZ$ with PDF $f_\theta$
\begin{equation}
 g(\theta) \equiv  \mathbb{E}^\phi_\theta(\bZ) = \int_{\mathbb{R}^d} \bz f_\theta(\bz) \phi(\bz) {\rm d}\bz
 \label{expect}
\end{equation}
and $\mathbb{E}_\theta(\bZ)$ be the classic expectation of random vector $\bZ$ with PDF $f_\theta$ 
\begin{equation}
 e(\theta) \equiv  \mathbb{E}_\theta(\bZ) = \int_{\mathbb{R}^d} \bz f_\theta(\bz)  {\rm d}\bz.
 \label{expectclasic}
\end{equation}

 Quantitative measures of information gain of experiments of the type described above are provided by the weighted Shannon differential entropy \cite{Belis1968,suhov2015, suhov2015_1}
 \begin{equation}
 h^{\phi}(f_\theta)= - \int_{\mathbb{R}^d}\phi(\bz)f_\theta(\bz) \log f_\theta(\bz){\rm d}\bz,
 \label{shannon}
 \end{equation}
the weighted ($m \times m$) Fisher information matrix
 \begin{equation}{\rm I}^{\phi}(\theta) = \mathbb{E}_\theta^\phi  \left[ \left( \frac{\partial}{\partial \theta} {\rm log} f_\theta(\bZ) \right) \left( \frac{\partial}{\partial \theta} {\rm log} f_\theta(\bZ) \right)^{\rT} \right]
\label{fisher}
\end{equation}
where $\frac{\partial}{\partial \theta}$ is the notation for the gradient (the vector $\frac{\partial}{\partial \theta} {\rm log} f_\theta(\bZ)$ is the \textit{score}), and the weighted Kullback-Leibler divergence of $g$ from $f$ \cite{Kelbert2015}
\begin{equation}
\mathbb{D}^\phi(f||g) = \int_{\mathbb{R}^d} \phi(\bz)f(\bz) {\rm log} \frac{f(\bz)}{g(\bz)} {\rm d}\bz.
\label{kullback}
\end{equation}
For simplicity we assume that the inverse Fisher matrix exists, but, in a general case, under inverse we understand the Moore-Penrose pseudoinverse. Also it is shown that in this context it is more convenient to study the calibrated  Kullback-Leibler divergence defined in \cite{Kelbert2015}:
\begin{equation}
K^\phi(f||g)= \int_{\mathbb{R}^d} \phi(\bz) \frac{f(\bz)}{C(f)} {\rm log} \frac{f(\bz) C(g)}{g(\bz)C(f)} {\rm d} \bz = \mathbb{D}(\tilde{f}||\tilde{g})
\end{equation}
where $C(f) = \int_{\mathbb{R}^d} \phi(\bz) f(\bz) {\rm d}\bz$,
$\tilde{f} = {\phi(\bz)f(\bz)} C(f)^{-1}$ and $\mathbb{D}(f||g)$ is the standard Kullback-Leibler divergence of $g$ from $f$
\begin{equation}
\mathbb{D}(f||g) = \int_{\mathbb{R}^d} f(\bz) {\rm log} \frac{f(\bz)}{g(\bz)} {\rm d}\bz.
\end{equation}
The goal of the third part is twofold. Firstly, the weighted analogous of the Rao-Cram\'er, Bhattacharyya and Kullback inequalities will be derived in a general case. Secondly, these inequalities will be illustrated in the example  described above which has an independent interest.

\section{Asymptotic of Shannon's differential entropy}
\begin{theorem}
Let $\tilde{Z}_{\alpha}^{(n)}=n^{\frac{1}{2}} (\alpha(1-\alpha))^{-\frac{1}{2}} (Z_{\alpha}^{(n)} - \alpha) $ 
 be a RV with PDF $\tilde{f}_{\alpha}^{(n)}$. Let $\bar{Z} \sim  \mathcal{N}(0,1)$ be the standard Gaussian RV, then

\textbf{(a)} $\tilde{Z}_{\alpha}^{(n)}$ weakly converges to $\bar{Z}$:
$$ \tilde{Z}_{\alpha}^{(n)} \Rightarrow \bar{Z} \ {\rm as} \ n \to \infty. $$
\textbf{(b)} The differential entropy of $\tilde{Z}_{\alpha}^{(n)}$converges to differential entropy of $\bar{Z}$:
$$\lim_{n \rightarrow \infty} h(\tilde{f}_{\alpha}^{(n)})= \frac{1}{2} {\rm log} \left(2 \pi e \right).$$
\textbf{(c)} The Kullback-Leibler divergence of $\varphi$ from $\tilde{f}_{\alpha}^{(n)}$ tends to $0$ as $n \to \infty$:
$$\lim_{n \rightarrow \infty} \mathbb{D}(\tilde{f}_{\alpha}^{(n)}||\varphi)=0.$$
\label{theorem1}
\end{theorem}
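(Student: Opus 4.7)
The plan is to handle the three parts in order, with (c) reduced to (b), and with (b) established directly from the explicit Beta entropy formula so that weak convergence in (a) does not need to be upgraded to entropy convergence by soft arguments.

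For part (a), I would start from the PDF in (\ref{pdf}), apply the change of variables $p=\alpha+\sqrt{\alpha(1-\alpha)/n}\,z$, and show pointwise density convergence $\tilde{f}_\alpha^{(n)}(z)\to\varphi(z)$. The Jacobian produces the correct scaling factor, Stirling's formula handles $(n+1)\binom{n}{x}$, and a Taylor expansion of $x\log p+(n-x)\log(1-p)$ about $p=\alpha$ (with $x/n\to\alpha$) supplies the Gaussian exponent $-z^2/2$ plus an $o(1)$ remainder that is uniform on compact sets. Scheff\'e's lemma then upgrades pointwise convergence to $L^1$ convergence of densities, which in particular yields $\tilde{Z}_\alpha^{(n)}\Rightarrow\bar{Z}$.

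For part (b), I would use the classical identity for the differential entropy of Beta$(a,b)$ with $a=x+1$ and $b=n-x+1$,
\[
h(f_\alpha^{(n)}) = \log B(a,b) - (a-1)\psi(a) - (b-1)\psi(b) + (a+b-2)\psi(a+b),
\]
where $\psi$ is the digamma function. Substituting Stirling's expansion in $\log B(a,b)=\log\Gamma(a)+\log\Gamma(b)-\log\Gamma(a+b)$ and $\psi(z)=\log z - 1/(2z)+O(z^{-2})$ in the remaining terms, the dominant Bernoulli-entropy contributions $n[\alpha\log\alpha+(1-\alpha)\log(1-\alpha)]$ and the $n\log n$ terms cancel between the two groups, leaving $h(f_\alpha^{(n)})=\tfrac12\log\bigl(2\pi e\,\alpha(1-\alpha)/n\bigr)+o(1)$. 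Applying the linear-transformation identity (\ref{property}) with $d_1=\sqrt{n/(\alpha(1-\alpha))}$ then gives $h(\tilde{f}_\alpha^{(n)})\to\tfrac12\log(2\pi e)$.

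For part (c), I would use the identity
\[
\mathbb{D}(\tilde{f}_\alpha^{(n)}||\varphi) = -h(\tilde{f}_\alpha^{(n)}) + \tfrac12\log(2\pi) + \tfrac12\,\mathbb{E}\bigl[(\tilde{Z}_\alpha^{(n)})^2\bigr],
\]
which combined with (b) reduces everything to checking that $\mathbb{E}[(\tilde{Z}_\alpha^{(n)})^2]\to 1$. The explicit mean and variance formulas already displayed in the excerpt, with $x\sim\alpha n$, show that the recentering bias $\sqrt{n/(\alpha(1-\alpha))}\bigl((x+1)/(n+2)-\alpha\bigr)$ is $O(n^{-1/2})$ and that the scaled variance tends to $1$, so $\mathbb{E}[(\tilde{Z}_\alpha^{(n)})^2]\to 1$ and the divergence vanishes.

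The main obstacle is the bookkeeping in step (b): both $\log B(a,b)$ and $(a-1)\psi(a)+(b-1)\psi(b)-(a+b-2)\psi(a+b)$ contain terms of order $n\log n$, $n$, and $\log n$ that must cancel exactly in order to leave the $-\tfrac12\log n$ term with the right coefficient. The saving grace is that the Bernoulli-entropy-like quantity $n[\alpha\log\alpha+(1-\alpha)\log(1-\alpha)]$ appears with opposite signs in the two groups, so a consistent application of Stirling produces the cancellations automatically.
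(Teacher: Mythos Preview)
Your proposal is correct, and parts (b) and (c) follow exactly the same route as the paper: the paper also computes $h(f_\alpha^{(n)})$ by reducing the two logarithmic integrals to digamma expressions (this is precisely the Beta-entropy identity you quote, with $a=x+1$, $b=n-x+1$), applies Stirling and the digamma expansion to obtain $h(f_\alpha^{(n)})=\tfrac12\log\bigl(2\pi e\,\alpha(1-\alpha)/n\bigr)+o(1)$, and then uses (\ref{property}); and for (c) the paper uses the very same decomposition $\mathbb{D}(\tilde f_\alpha^{(n)}\|\varphi)=-h(\tilde f_\alpha^{(n)})+\tfrac12\log(2\pi)+\tfrac12\mathbb{E}[(\tilde Z_\alpha^{(n)})^2]$ together with the explicit Beta moments.

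The only genuine difference is in part (a). The paper proves weak convergence via characteristic functions: it writes $\phi(t)=\mathbb{E}[e^{it\tilde Z_\alpha^{(n)}}]$ as a Laplace-type integral and evaluates it by the saddle-point method, obtaining $\phi(t)\to e^{-t^2/2}$. Your route---change of variables, Taylor expansion of $x\log p+(n-x)\log(1-p)$ around $p=\alpha$, Stirling for the normalizing constant, and then Scheff\'e---is more elementary and actually yields a stronger conclusion ($L^1$ convergence of densities, hence total-variation convergence) at no extra cost. The paper's saddle-point argument has the advantage of being reusable in its later sections (the same integrals recur when weights are inserted), whereas your approach is cleaner if one only cares about Theorem~\ref{theorem1} itself.
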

\begin{proof}
\textbf{(a)} Let $x=x(n)=\alpha n$ where $0<\alpha<1$ and consider RV $$\tilde{Z}_\alpha^{(n)}=n^{\frac{1}{2}} (\alpha(1-\alpha))^{-\frac{1}{2}} (Z_{\alpha}^{(n)} - \alpha) .$$ We proceed by the method of characteristic functions, and establish that:

\begin{equation}
\phi(t)=\mathbb{E}[e^{it\tilde{Z}_{\alpha}^{(n)}}]\to e^{-t^2/2}
\label{gaussian}
\end{equation}
for all $t \in \mathbb{R}$.
Indeed

$ \displaystyle \phi(t) = \int_0^1e^{it\frac{(p-\alpha)\sqrt{n}}{\sqrt{\alpha(1-\alpha)}}}f^{(n)}_{\alpha}(p){\rm d} p = (n+1) {n\choose x}  e^{it\frac{(-\alpha)\sqrt{n}}{\sqrt{\alpha(1-\alpha)}}}\int_0^1e^{it\frac{p\sqrt{n}}{\sqrt{\alpha(1-\alpha)}}}p^x (1-p)^{n-x}{\rm d} p $

\noindent and consider the integral:
\begin{equation}
I(t,\alpha,n)= \int_0^1e^{n(it\frac{p}{\sqrt{\alpha(1-\alpha)n}}+\alpha  {\rm log}p + (1-\alpha) {\rm log}(1-p))}{\rm d} p.
\label{integral11}
\end{equation}
Denote $g(p)=it\frac{p}{\sqrt{\alpha(1-\alpha)n}}+\alpha  {\rm log}p + (1-\alpha) {\rm log}(1-p)$. The integrand in (\ref{integral11}) has a narrow sharp peak, and the integral is
completely dominated by the maximum of ${\rm Re}[g(p)]$ when $n \to \infty$.
For fixed values of $t$, $\alpha$ and $n \to \infty$, it can be studied by the saddle point method \cite[Theorem 1.3, p.170]{Fedoruk1977}:
\begin{equation}
 I(t,\alpha,n) \simeq e^{ng(p^*)}\sqrt{\frac{ 2 \pi}{-n g''(p^*)}} \left(1+ O \left(\frac{1}{n} \right)\right).\end{equation}
Find the point of maximum of ${\rm Re}[g(p)]$ and deform initial contour $[0,1]$ into the steepest descent contour through the saddle point:
$$p^{*} = \alpha + it \frac{\sqrt{(1-\alpha)\alpha}}{\sqrt{n}} + O \left(\frac{1}{n} \right).$$
So, $\phi(t)$ takes the form:
$$ \phi(t) = e^{- t^2} \displaystyle (n+1) {n\choose x}(p^*)^x (1-p^*)^{n-x} \sqrt{\frac{ 2 \pi}{-n g''(p^*)}} + O \left(\frac{1}{n} \right). $$
Here and below $x= \lfloor \alpha n \rfloor$. Next, by Stirling's formula:
$$ (n+1) {n\choose x} \simeq (n+1) \frac{n^n}{x^x (n-x)^{(n-x)}} \sqrt{\frac{n}{2 \pi x (n-x)}}.$$
So, the straightforward computation yields:
$$ \begin{array}
 {l} \displaystyle (p^*)^x (1-p^*)^{n-x}\simeq \alpha^x (1-\alpha)^{(n-x)} e^{it \sqrt{(1-\alpha)\alpha n} + \frac{(1-\alpha)t^2}{2} -it \sqrt{(1-\alpha)\alpha n} + \frac{\alpha t^2}{2}}= \\ \displaystyle = e^{\frac{t^2}{2}} \left(\frac{x}{n} \right) ^x \left(\frac{n-x}{n}\right)^{n-x}. \end{array}$$
It can be checked that next term in asymptotic of ${\rm log}p^*$ (as well as ${\rm log}(1-p^*)$) is decaying to $0$ after multiplication of $\alpha n$ and $(1-\alpha)n$, correspondingly.

We have for $t \in \mathbb{R}$
$$ \begin{array}
 {l} \displaystyle \phi (t) \simeq  e^{- t^2} \frac{(n+1)n^n}{x^x (n-x)^{(n-x)}} \sqrt{\frac{n}{2 \pi x (n-x)}}e^{\frac{t^2}{2}} \left(\frac{x}{n} \right) ^x \left(\frac{n-x}{n}\right)^{n-x}\sqrt{\frac{ 2 \pi x (n-x)}{n^3}}\simeq \\ \displaystyle \simeq  e^{-\frac{t^2}{2}} \end{array}.$$
This fact establishes pointwise convergence of characteristic function to its Gaussian limit and it completes the proof of part (a).

\textbf{(b)}
Write the differential entropy in the form:
\begin{equation}
\displaystyle h(f_{\alpha}^{(n)})= -\left( {\rm log} \left[(n+1) {n\choose x}\right] + (n+1) {n\choose x} x I_1 + (n+1) {n\choose x} (n-x) I_2 \right)\end{equation}
where
\label{hdiff}
\begin{equation}
I_1 = \int_0^1  p^x (1-p)^{n-x} {\rm log} p {\rm d} p,
\label{int1}
\end{equation}
\begin{equation}
I_2 =\int_0^1  p^x (1-p)^{n-x} {\rm log} (1-p) {\rm d} p.
\label{int2}
\end{equation}
Integrals $I_1$ and $I_2$ can be computed explicitly by reducing to the standard integral
\begin{equation}
\int_0^1x^{\mu-1}(1-x^r)^{\nu-1}{\rm log}x{\rm d} x = \frac{1}{r^2} B\left(\frac{\mu}{r},\nu \right) \left(\psi\left(\frac{\mu}{r}\right) - \psi\left(\frac{\mu}{r}+\nu\right)\right)
\label{ryz}
\end{equation}
\noindent where $\psi(x)$ is the digamma function, and $B(x,y)$ is the Beta-function \cite[\#4.253.1]{rg2007} and in considering case $r\equiv 1, \mu-1 \equiv x, \nu-1 \equiv n-x$.

For integral $I_1$, we get:
$$ U_1 = (n+1) {n\choose x} x I_1 = -x(\psi(n+2)-\psi(x+1)).$$
Similarly, for the second integral $I_2$, we obtain:
$$ U_2 = (n+1) {n\choose x} (n-x) I_2 = -(n-x)(\psi(n+2)-\psi(n-x+1)).$$
After summation of these two integrals and using the asymptotic for digamma function \cite[\#8.362.2]{rg2007}, we obtain:
$$U_1+U_2 = x{\rm log}x - n{\rm log}n + (n-x){\rm log}(n-x) - \frac{1}{2}+O \left(\frac{1}{n} \right).  $$
Next, we apply the Stirling formula to the first term in (\ref{hdiff}):
$$ \begin{array} {l} \displaystyle U_0 = {\rm log} \left[(n+1) {n\choose x}\right] = n{\rm log}n -x{\rm log}x -(n-x){\rm log}(n-x) + \\ + \displaystyle  \frac{1}{2}{\rm log}n  - \frac{1}{2}{\rm log}\alpha - \frac{1}{2}{\rm log}(1-\alpha) -{\rm log}(\sqrt{2 \pi}) + O \left(\frac{1}{n} \right). \end{array}.$$
Here as before $x= \lfloor \alpha n \rfloor$. So, we obtain the following asymptotic of the differential entropy:
\begin{equation}
 \lim_{n \rightarrow \infty} \left[h(f_{\alpha}^{(n)})-\frac{1}{2} {\rm log}\frac{2 \pi e [\alpha (1-\alpha)]}{n} \right] = 0.
\label{res1}
\end{equation}
Due to (\ref{property}), the differential entropy of RV $\tilde{Z}_\alpha^{(n)}$ has the form:
\begin{equation}
 \lim_{n \rightarrow \infty} \left[h(\tilde{f}_{\alpha}^{(n)}) \right] = \frac{1}{2} {\rm log} \left( 2 \pi e \right).
\end{equation}
\textbf{(c)} By the definition of the the Kullback-Leibler divergence:
$$ \mathbb{D}(\tilde{f}_{\alpha}^{(n)}||\varphi)= - h(\tilde{f}_{\alpha}^{(n)}) - \int_0^1 \tilde{f}_{\alpha}^{(n)}(p) \log \varphi(p) {\rm d}p $$
$$= - \frac{1}{2} {\rm log} \left( 2 \pi e \right) + \frac{1}{2} \log (2 \pi) + \frac{1}{2} \int_0^1 p^2 \tilde{f}_{\alpha}^{(n)} {\rm d}p + O \left( \frac{1}{n} \right) =  O \left( \frac{1}{n} \right),   $$
 $\int_0^1 p^2 \tilde{f}_{\alpha}^{(n)} {\rm d}p = 1 +O \left( \frac{1}{n} \right) $ is the second moment of $\tilde{Z}_\alpha^{(n)}$. It completes the proof.
 \end{proof}

\begin{theorem}

Let $\tilde{Z}_{\beta}^{(n)}=n^{1-\beta/2}(Z_{\beta}^{(n)}-n^{\beta-1})$ be a RV with PDF $\tilde{f}_{\beta}^{(n)}$ and $\bar{Z} \sim  \mathcal{N}(0,1)$ then

\textbf{(a)} $\tilde{Z}_{\beta}^{(n)}$ weakly converges to $\bar{Z}$:
$$\tilde{Z}_{\beta}^{(n)} \Rightarrow \bar{Z} \ {\rm as} \ n \to \infty. $$
\textbf{(b)} The differential entropy of $\tilde{Z}_{\beta}^{(n)}$ converges to differential entropy of $\bar{Z}$:
 $$ \lim_{n \rightarrow \infty} h(\tilde{f}_{\beta}^{(n)})=\frac{1}{2} {\rm log} \left(2 \pi e \right). $$
 \textbf{(c)} The Kullback-Leibler divergence of $\varphi$ from $\tilde{f}_{\beta}^{(n)}$ tends to $0$ as $n \to \infty$:
$$\lim_{n \rightarrow \infty} \mathbb{D}(\tilde{f}_{\beta}^{(n)}||\varphi)=0.$$
\label{theorem2}
\end{theorem}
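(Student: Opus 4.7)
The plan is to follow the same three-step outline as in the proof of Theorem \ref{theorem1}, adapting every calculation to the regime $x=\lfloor n^\beta \rfloor$ in which the conditional mean $x/n\sim n^{\beta-1}$ and the conditional standard deviation $\sim n^{\beta/2-1}$ both shrink to $0$, but stay much larger than $1/n$. From the explicit variance formula $\mathbb{V}[Z^{(n)}|S_n=x]=\frac{(x+1)(n-x+1)}{(n+3)(n+2)^2}$, one checks immediately that centering by $n^{\beta-1}$ and rescaling by $n^{1-\beta/2}$ normalizes $Z^{(n)}_\beta$ to unit variance, so the standardization stated in the theorem is the correct one.

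For part (a), I would write the characteristic function of $\tilde Z^{(n)}_\beta$ as $(n+1)\binom{n}{x}e^{-itn^{\beta/2}}\int_0^1 e^{ng_n(p)}{\rm d}p$ with
\[
g_n(p)=itn^{-\beta/2}p+\tfrac{x}{n}\log p + \tfrac{n-x}{n}\log(1-p),
\]
and apply the saddle-point method of \cite[Theorem 1.3]{Fedoruk1977} just as in Theorem \ref{theorem1}. The saddle equation $g_n'(p)=0$ yields $p^*=n^{\beta-1}+itn^{\beta/2-1}+\ldots$, and a direct computation gives $g_n''(p^*)\sim -n^{1-\beta}$, so the effective Gaussian width near the saddle is of order $n^{\beta/2-1}$. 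The crucial point to verify is that, although $p^*$ lies close to the endpoint $0$, it is $\sim n^{\beta/2}$ such widths away from it, so the steepest-descent contour can be freely deformed inside $(0,1)$. Expanding $ng_n(p^*)$ to second order in $t$ produces an imaginary linear-in-$t$ term which exactly cancels the prefactor $e^{-itn^{\beta/2}}$, while the Stirling expansion of $(n+1)\binom{n}{x}$ combines with the saddle-point Jacobian $\sqrt{2\pi/(-ng_n''(p^*))}$ to give $1$, leaving $\phi(t)\to e^{-t^2/2}$.

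For part (b), the decomposition
\[
h(f_\beta^{(n)})=-\log\!\left[(n+1)\binom{n}{x}\right]-U_1-U_2,
\]
with $U_1=-x[\psi(n+2)-\psi(x+1)]$ and $U_2=-(n-x)[\psi(n+2)-\psi(n-x+1)]$, remains valid via formula (\ref{ryz}). The key change is that the digamma expansion $\psi(z)=\log z-\tfrac{1}{2z}+O(z^{-2})$ must now be applied with $z=x+1\sim n^\beta$, so the $-1/(2z)$ correction, after multiplication by $x$, contributes a bounded constant rather than a vanishing term. Combining this with Stirling's formula for $\binom{n}{x}$ (keeping in mind that $x\log x=\beta n^\beta\log n$ appears inside $\log x!$) one arrives at $h(f_\beta^{(n)})=\tfrac{1}{2}\log(2\pi e\, n^{\beta-2})+o(1)$, and the scaling identity (\ref{property}) with $d_1=n^{1-\beta/2}$ yields part (b). Part (c) then repeats verbatim the Kullback-Leibler computation in Theorem \ref{theorem1}(c), using (b) together with the fact that the second moment of $\tilde Z^{(n)}_\beta$ tends to $1$, which follows directly from the closed-form variance. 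The main obstacle throughout is the careful bookkeeping in part (b): because $x\sim n^\beta$ rather than $\sim n$, several error terms in Stirling and in the digamma expansion now enter at order $O(1)$ instead of $O(n^{-1})$ and must be tracked so that they assemble into the correct $\tfrac{1}{2}\log(2\pi e)$ limit.
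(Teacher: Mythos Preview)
Your treatment of parts (b) and (c) is essentially the paper's own argument: the same decomposition $h(f_\beta^{(n)})=-(U_0+U_1+U_2)$ via the integral (\ref{ryz}), Stirling for $U_0$, the digamma expansion for $U_1+U_2$, and then the scaling rule (\ref{property}); part (c) is verbatim. One small remark: the $-1/(2z)$ correction in $\psi(z)$ already contributes the constant $-1/2$ in Theorem~\ref{theorem1} as well (there $z\sim\alpha n$ and one multiplies by $x=\alpha n$), so this is not a new obstacle peculiar to the $x\sim n^\beta$ regime; the genuine change is that the overall remainder is $O(n^{-\beta})$ rather than $O(n^{-1})$.

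Where you diverge from the paper is in part (a). The paper explicitly switches to the \emph{method of moments}: it writes the $k$-th central moment of $\tilde Z_\beta^{(n)}$ in closed form as a terminating ${}_2F_1$, shows the odd moments are $O(n^{-\beta/2})$ and the even ones tend to $(k-1)!!$, and invokes the moment convergence criterion. Your proposal instead reuses the saddle-point argument of Theorem~\ref{theorem1}(a). This is a legitimate alternative, and your observation that the saddle $p^*\sim n^{\beta-1}$ sits $\sim n^{\beta/2}$ Gaussian widths away from the endpoint $0$ is exactly the check one needs to justify the contour deformation. The trade-off is this: the moment method avoids any contour analysis near a moving boundary and gives clean closed-form control of every moment, at the price of a somewhat opaque hypergeometric computation; your saddle-point route is more uniform with Theorem~\ref{theorem1} and arguably more transparent analytically, but one must be slightly careful that the Fedoruk-type expansion remains valid when the phase $g_n$ itself varies with $n$ and the critical point drifts toward $\partial[0,1]$. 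Since the same $n$-dependent phase already appears in the paper's Theorem~\ref{theorem1}(a), your argument operates at the same level of rigor.
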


\begin{proof}
\textbf{(a)} Let $x=x(n)=n^\beta$ where $0<\beta<1$ and consider $\tilde{Z}_\beta^{(n)}$ such that
$$ \tilde{Z}_{\beta}^{(n)}=n^{1-\beta/2}(Z_{\beta}^{(n)}-n^{\beta-1}).$$
In this case, it is more convenient to proceed by the method of moments. We use the following classical result. Let $f_n$ be a sequence of distribution functions with finite moments $\mu_k(n)$, and $\mu_k(n)$ tends to $\nu_k$ for each $k$ as $n \to \infty$ where $\nu_k$ are moments of distribution $f$, and the distribution $f$ is uniquely defined by its moments, then $f_n$ weakly converges to $f$ as $n \to \infty$ \cite{moments}.

Consider RV $ \displaystyle \tilde{Z}_\beta^{(n)} =n^{1-\beta /2}(Z_\beta^{(n)}-n^{\beta -1})$ where $Z_\beta^{(n)}$ has PDF (\ref{pdf}) when $x= \lfloor n^\beta \rfloor$ and compute all moments of $\tilde{Z}_\beta^{(n)}$. 
First, $\mathbb{E}(\tilde{Z}_\beta^{(n)}) \to 0$ as $n \to \infty$ because
$\mathbb{E}(Z_\beta^{(n)})= n^{\beta-1}+ O \left(\frac{1}{n} \right)$. 
Next, we check that $ \mathbb{E}\left[\left(\tilde{Z}_\beta^{(n)}\right)^2\right]=n^{2(1-\beta/2)} \mathbb{E}(Z_\beta^{(n)}-n^{\beta -1})^2 \to 1$ as $n \to \infty$.
Compute central moments for any $k>1$:
\begin{equation}
\mathbb{E}\left[\left(\tilde{Z}_\beta^{(n)}\right)^k\right] = n^{k-\frac{\beta k}{2}} (1-n^{1-\beta})^{-k} (1-n^{\beta-1})^k \  _2F_1[-k,n^{\beta}+1;n+2;n^{1-\beta}] 
\label{moments}
\end{equation}
where $_2F_1[-k,n^{\beta}+1;n+2;n^{1-\beta}]$ is the hypergeometric function, which, in this case, is the polynomial:
$$ _2F_1[-k,n^{\beta}+1;n+2;n^{1-\beta}] = 
\sum_{i=0}^{k} (-1)^i {k\choose i} \frac{(n^\beta+1)_i}{(n+2)_i}n^{i(1-\beta)}$$ where
$(q)_n$ is the rising Pochhammer symbol. For $n>0$
$$(q)_n= q(q+1)...(q+n-1)$$ \noindent and $(q)_0=1$.

Consider asymptotic of terms separately:
$$n^{k-\frac{\beta k}{2}} (1-n^{1-\beta})^{-k} (1-n^{\beta-1})^k \simeq O(n^{\frac{k \beta}{2}})$$
\noindent and
\begin{equation} _2F_1[-k,n^{\beta}+1;n+2;n^{1-\beta}] \simeq O(n^{-[0.5+0.5k] \beta}) 
\label{asymp}
\end{equation}
 where
$\lfloor k \rfloor$ is the integer part of $k$. 
For $k$ odd:

\begin{equation} n^{k(1-\beta/2)} \mathbb{E}(Z_\beta^{(n)}-n^{\beta -1})^k = O(n^{\frac{k \beta}{2}}) O(n^{-[0.5+0.5k] \beta}) \simeq O(n^{-\beta /2 }) \to 0
\label{odd}
\end{equation} as $n \to \infty$. For $k$ even:

\begin{equation} n^{k(1-\beta/2)} \mathbb{E}(Z_\beta^{(n)}-n^{\beta -1})^k = O(n^{\frac{k \beta}{2}}) O(n^{-[0.5+0.5k] \beta}) = O(1).
\label{even}
\end{equation}

We see that every even central moment tends to a constant which is the coefficient in front of term $n^{-[0.5+0.5k] \beta}$ in the hypergeometric function. For $k$ even, we have:
\begin{equation} \displaystyle n^{k(1-\beta/2)} \mathbb{E}(Z_\beta^{(n)}-n^{\beta -1})^k \to (k-1)!.
\label{even}
\end{equation}
These imply that RV $\tilde{Z}_\beta^{(n)}$ weakly converges to the standard Gaussian RV.

\textbf{(b)}
Write the differential entropy in the form:

\begin{equation} \begin{array} {l} \displaystyle h(f_{\beta}^{(n)})= -\left( {\rm log}\left[(n+1) {n\choose x} \right] + (n+1) {n\choose x} x I_1 + (n+1) {n\choose x} (n-x) I_2 \right)= \\ \displaystyle = -(U_0+U_1 + U_2) \end{array}
\label{hdiff22}
\end{equation}
where $I_1$ and $I_2$ are defined in (\ref{int1}) and (\ref{int2}) and can be computed explicitly by (\ref{ryz}).

As before, we apply the Stirling formula for $U_0$:
$$\displaystyle \begin{array} {l} U_0 =   n{\rm log}n -x{\rm log}x -(n-x){\rm log}(n-x) + {\rm log}n \\ \displaystyle  + \frac{1}{2} (- {\rm log} n^{\beta} - {\rm log}(1-n^{\beta-1})) - \frac{1}{2} {\rm log}( 2 \pi)+ O \left(\frac{1}{n^\beta}  \right) \end{array}.$$
As far as $0<\beta<1$ the reminder tends to $0$ as $n \to \infty$. Note that the rate of decaying depends on parameter $\beta$, contrary to reminder in Theorem 1. Now
$U_1+U_2$ can be computed as follows:
$$\displaystyle U_1+U_2 = x{\rm log}x - n{\rm log}n + (n-x){\rm log}(n-x) - \frac{1}{2} + O \left(\frac{1}{n^\beta}  \right). $$
So, we proved that 
 $$\lim_{n \rightarrow \infty} \left[h(f_{\beta}^{(n)})-\frac{1}{2} {\rm log}\frac{2 \pi e (1-n^{\beta-1})}{n^{2-\beta}} \right] = 0 $$
Due to (\ref{property}), the differential entropy of RV $\tilde{Z}_{\beta}^{(n)}$ has the form:
  $$ \lim_{n \rightarrow \infty} h(\tilde{f}_{\beta}^{(n)})=\frac{1}{2} {\rm log} \left(2 \pi e \right). $$
 \textbf{(c)} Similarly, by the definition of the the Kullback-Leibler divergence:
$$ \mathbb{D}(\tilde{f}_{\beta}^{(n)}||\varphi)= - h(\tilde{f}_{\beta}^{(n)}) - \int_0^1 \tilde{f}_{\beta}^{(n)}(p) \log \varphi(p) {\rm d}p $$
$$= - \frac{1}{2} {\rm log} \left( 2 \pi e \right) + \frac{1}{2} \log (2 \pi) + \frac{1}{2} \int_0^1 p^2 \tilde{f}_{\beta}^{(n)} {\rm d}p + O \left( \frac{1}{n^{\beta}} \right) =  O \left( \frac{1}{n^{\beta}} \right),   $$
$\int_0^1 p^2 \tilde{f}_{\beta}^{(n)} {\rm d}p = 1 +O \left( \frac{1}{n^{\beta}} \right) $ is the second moment of $\tilde{Z}_\beta^{(n)}$.

\end{proof}

\begin{theorem}
Let $\tilde{Z}_{c_1}^{(n)}=n{Z}_{c_1}^{(n)}$ be a RV with PDF  $\tilde{f}_{{c_1}}^{(n)}$ and $\tilde{Z}_{n-c_2}^{(n)}=n{Z}_{n-c_2}^{(n)}$ be a RV with PDF $\tilde{f}_{n-c_2}^{(n)}$. Denote  $H_k =1 +\frac{1}{2} + \ldots + \frac{1}{k}$ the partial sum of harmonic series and  $\gamma$ the Euler-Mascheroni constant, then
$$ \textbf{(a)} \  \lim_{n \rightarrow \infty} h(\tilde{f}_{c_1}^{(n)})  = c_1 + \sum_{i=0}^{c_1-1} {\rm log}(c_1-i)- c_1(H_{c_1} - \gamma)+1.$$
$$ \textbf{(b)} \  \lim_{n \rightarrow \infty} h(\tilde{f}_{n-c_2}^{(n)})  = c_2 + \sum_{i=0}^{c_2-1} {\rm log}(c_2-i)- c_2(H_{c_2} - \gamma)+1.$$

\label{theorem3}

\end{theorem}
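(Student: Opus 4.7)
The plan is to follow the template established in Theorems \ref{theorem1} and \ref{theorem2}: write $h(f_{c_1}^{(n)})=-(U_0+U_1+U_2)$ with $U_0=\log\bigl[(n+1)\binom{n}{c_1}\bigr]$, $U_1=(n+1)\binom{n}{c_1}c_1 I_1$ and $U_2=(n+1)\binom{n}{c_1}(n-c_1) I_2$, evaluate the integrals $I_1,I_2$ of (\ref{int1})--(\ref{int2}) by the closed form (\ref{ryz}) in terms of digamma values, and then apply the scaling rule (\ref{property}) with $d_1=n$ to pass from $h(f_{c_1}^{(n)})$ to $h(\tilde f_{c_1}^{(n)})=h(f_{c_1}^{(n)})+\log n$. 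The new feature here, compared with the previous two theorems, is that $c_1$ is a fixed constant while $n\to\infty$, which simplifies some pieces but sharpens the accuracy required in others.

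For $U_0$ the polynomial identity $\binom{n}{c_1}=n(n-1)\cdots(n-c_1+1)/c_1!$ (no Stirling needed, since $c_1$ is fixed) immediately yields $U_0=(c_1+1)\log n-\log(c_1!)+o(1)$, and I would recognize $\log(c_1!)=\sum_{i=0}^{c_1-1}\log(c_1-i)$, producing the sum appearing in the statement. For $U_1=-c_1\bigl[\psi(n+2)-\psi(c_1+1)\bigr]$ I would combine the standard expansion $\psi(n+2)=\log n+o(1)$ with the exact identity $\psi(c_1+1)=H_{c_1}-\gamma$ for positive integers, giving $U_1=-c_1\log n+c_1(H_{c_1}-\gamma)+o(1)$; this is where the $-c_1(H_{c_1}-\gamma)$ contribution to the final answer is born.

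The main obstacle is $U_2=-(n-c_1)\bigl[\psi(n+2)-\psi(n-c_1+1)\bigr]$: a diverging prefactor $n-c_1$ multiplies a vanishing difference of digammas, so the crude leading-order $\psi(x)\sim\log x$ would lose the finite constant we need. To capture it I would use the exact telescoping identity $\psi(n+2)-\psi(n-c_1+1)=\sum_{k=n-c_1+1}^{n+1}\frac{1}{k}$, a sum of exactly $c_1+1$ terms each equal to $\frac{1}{n}+O(1/n^2)$, hence equal to $\frac{c_1+1}{n}+O(1/n^2)$; multiplication by $n-c_1$ then produces the clean constant $U_2=-(c_1+1)+o(1)$, which becomes the additive $c_1+1$ in the final formula after the overall sign flip in $h=-(U_0+U_1+U_2)$.

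Assembling, the $(c_1+1)\log n$ in $U_0$ and the $-c_1\log n$ in $U_1$ collapse to a single $\log n$ inside $U_0+U_1+U_2$, which contributes $-\log n$ to $h(f_{c_1}^{(n)})$ and is then annihilated by the $+\log n$ coming from (\ref{property}), leaving precisely the stated limit in part \textbf{(a)}. Part \textbf{(b)} follows at no extra cost by the reflection symmetry $p\mapsto 1-p$: this unit-Jacobian involution of $[0,1]$ sends the density $f_{n-c_2}^{(n)}$ to $f_{c_2}^{(n)}$, so $h(f_{n-c_2}^{(n)})=h(f_{c_2}^{(n)})$, and the same scaling $d_1=n$ contributes the same $+\log n$, reducing (b) to (a) with $c_1$ replaced by $c_2$.
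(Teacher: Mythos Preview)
Your proof is correct and follows the same overall decomposition $h=-(U_0+U_1+U_2)$ as the paper, with the same use of (\ref{ryz}) to express $U_1,U_2$ via digamma values and of (\ref{property}) to absorb the $\log n$. Your execution differs from the paper's in three small but pleasant ways: for $U_0$ you exploit that $c_1$ is fixed and expand $\binom{n}{c_1}$ as a polynomial, whereas the paper invokes Stirling; for $U_2$ you replace the paper's asymptotic expansion $\psi(n-c_1+1)\simeq\log n+\frac{1/2-c_1}{2n}$ by the exact telescoping $\psi(n+2)-\psi(n-c_1+1)=\sum_{k=n-c_1+1}^{n+1}1/k$, which makes the emergence of the constant $-(c_1+1)$ completely transparent; and for part \textbf{(b)} you invoke the $p\mapsto 1-p$ symmetry to reduce to part \textbf{(a)}, while the paper repeats the computation from scratch with the roles of $x$ and $n-x$ swapped. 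None of this changes the structure of the argument, but each substitution is the more elementary option and avoids importing asymptotic error terms where exact identities suffice.
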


\begin{proof}

\textbf{(a)}
Let $x=c_1$ where $c_1$ is a some integer constant.
Consider the differential entropy:
$ h(f_{c_1}^{(n)})= -(U_0+U_1+U_2)$ where $U_0$, $U_1$ and $U_2$ defined in (\ref{hdiff22}).
Applying the Stirling formula for $U_0$:
$$U_0 = {\rm log}n - {\rm log}(x!) + x{\rm log}n + O \left(\frac{1}{n} \right).$$
Next, we compute $U_1+U_2$ via formula (\ref{ryz}) as before. The only difference will be in asymptotic of digamma functions \cite[\#8.365.3, \#8.365.4]{rg2007}, because of $x=c_1$ where $c_1$ is constant:

$\psi(n-x+1)\simeq {\rm log}n + \frac{1/2-x}{2n}$, and $\psi(x+1)=H_x - \gamma$, here $H_x$ is the partial sum of harmonic series and $\gamma$ stands for the Euler-Mascheroni constant.
Using that $x=c_1$:
$$ \lim_{n \rightarrow \infty} \left[h(f_{c_1}^{(n)})+{\rm log}n \right] =c_1 + \sum_{i=0}^{c_1-1} {\rm log}(c_1-i)- c_1(H_{c_1} - \gamma)+1.$$
Due to (\ref{property}) it can be written in the following form:
$$   \lim_{n \rightarrow \infty} h(\tilde{f}_{c_1}^{(n)})  = c_1 + \sum_{i=0}^{c_1-1} {\rm log}(c_1-i)- c_1(H_{c_1} - \gamma)+1.$$
\textbf{(b)}
Let $n-x(n)=c_2$ where $c_2$ is some integer constant. In a similar way we compute $h(f_{n-c_2}^{(n)})$ where $n-x=c_2$ and $c_2$ is a constant. The asymptotic of digamma function is given as follows \cite[\#8.365.4]{rg2007}:
$$\psi(n-x+1)= H_{c_2} - \gamma \ {\rm where} \ x=n-c_2, $$
and the final result for differential entropy:
$$  h(f_{n-c_2}^{(n)}) =  -{\rm log}n + c_2-c_2(H_{c_2}-\gamma) + \sum_{i=0}^{c_2-1} {\rm log}(c_2-i)+1 + O \left(\frac{1}{n} \right) .$$
In terms of standardized RV $\tilde{Z}^{(n)}_{n-c_2}$ we obtain  due to (\ref{property}):
$$  \lim_{n \rightarrow \infty} h(\tilde{f}_{n-c_2}^{(n)})  = c_2 + \sum_{i=0}^{c_2-1} {\rm log}(c_2-i)- c_2(H_{c_2} - \gamma)+1.$$
\end{proof}

\section{Asymptotic of weighted differential entropies}
The normalizing constant in the weight function (\ref{weight}) is found from the condition (\ref{norm}). We obtain that:
\begin{equation} \Lambda^{(n)}(\gamma)=\frac{\Gamma{(x+1)}\Gamma{(n-x+1)}\Gamma{(n+2+\sqrt{n})}}{\Gamma{(x+\gamma \sqrt{n}+1)}\Gamma{(n-x+1+ \sqrt{n} - \gamma \sqrt{n})} \Gamma{(n+2)}}.
\label{lambda}
\end{equation}
We denote by $\psi^{(0)}(x) = \psi(x)$ and by $\psi^{(1)}(x)$ the digamma function and its first derivative respectively.
\begin{equation}  \psi^{(n)}(x) = \frac{{\rm d^{n+1}}}{{\rm d}x^{n+1}} {\rm log} 
\left( \Gamma(x) \right)
\end{equation}
In further calculations we will need the asymptotic of these functions:
$$ \displaystyle \psi(x) = {\rm log}(x) - \frac{1}{2x} + O \left(\frac{1}{x^2} \right) \ {\rm as} \ x \to \infty,$$
$$ \displaystyle \psi^{(1)}(x) = \frac{1}{x} + \frac{1}{2x^2} +O \left(\frac{1}{x^3} \right) \ {\rm as} \ x \to \infty.$$

\begin{proposition}
Let $Z^{(n)}$ be a RV with  $f^{(n)}$ - conditional PDF after $n$ trials given by (\ref{pdf}), $h^{\phi}(f_{\alpha}^{(n)})$ - the weighted Shannon entropy of $Z^{(n)}$ given in (\ref{shannon}) . When $x=\alpha n$ {\rm (}$0<\alpha<1${\rm )}  and the weight function $\phi^{(n)}(p)$ is given in (\ref{weight})
\begin{equation}
\lim_{n \rightarrow \infty} \left( h^{\phi}(f_{\alpha}^{(n)})- \frac{1}{2} {\rm log} \left( \frac{2 \pi e \alpha (1-\alpha)}{n} \right) 
\right) = \frac{(\alpha-\gamma)^2}{2\alpha(1-\alpha)}.
\end{equation}
If the $\alpha = \gamma$ then the asymptotic of  $h^{\phi}(f)$ is exactly the asymptotic of differential Shannon's entropy with $\phi^{(n)}(p)=1$.
\end{proposition}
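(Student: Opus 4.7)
The decisive observation is that $\phi^{(n)}(p)f_\alpha^{(n)}(p)$ is itself a Beta density. Indeed, by (\ref{pdf}) and (\ref{weight}) it is proportional to $p^{x+\gamma\sqrt{n}}(1-p)^{n-x+(1-\gamma)\sqrt{n}}$, and the normalization (\ref{norm}) forces it to coincide with the density of $P\sim\mathrm{Beta}(a,b)$ where $a=x+\gamma\sqrt{n}+1$, $b=n-x+(1-\gamma)\sqrt{n}+1$, $a+b=n+\sqrt{n}+2$. Since $\log f_\alpha^{(n)}(p)=\log[(n+1)\binom{n}{x}]+x\log p+(n-x)\log(1-p)$ and $\int_0^1\phi^{(n)}f_\alpha^{(n)}=1$, the weighted Shannon entropy splits as
\begin{equation*}
h^\phi(f_\alpha^{(n)})=-\log\!\left[(n+1)\binom{n}{x}\right]-x\,\mathbb{E}[\log P]-(n-x)\,\mathbb{E}[\log(1-P)],
\end{equation*}
and the classical Beta identities $\mathbb{E}[\log P]=\psi(a)-\psi(a+b)$, $\mathbb{E}[\log(1-P)]=\psi(b)-\psi(a+b)$ reduce the entire problem to an asymptotic analysis of the digamma function at the three points $a$, $b$, $a+b$.

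The core step is to expand $\psi$ to order $1/n$ at each of these points, using $\psi(y)=\log y-\frac{1}{2y}+O(y^{-2})$ recalled at the top of Section~3 together with Taylor expansions such as $\log a=\log(\alpha n)+\gamma/(\alpha\sqrt{n})+1/(2\alpha n)-\gamma^2/(2\alpha^2 n)+O(n^{-3/2})$ and its analogues at $b$ and $a+b$. I would then form $x[\psi(a)-\psi(a+b)]+(n-x)[\psi(b)-\psi(a+b)]$ with $x=\alpha n$ and check that the $\sqrt{n}$-order contributions cancel as $(\gamma-\alpha)+(\alpha-\gamma)=0$; this cancellation is exactly what the specific exponent $\sqrt{n}$ in (\ref{weight}) is designed to produce. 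The finite remainder should come out as $\alpha n\log\alpha+(1-\alpha)n\log(1-\alpha)-\gamma^2/(2\alpha)-(1-\gamma)^2/(2(1-\alpha))+O(n^{-1/2})$.

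Combining this with the Stirling expansion of $\log[(n+1)\binom{n}{x}]$ already computed in the proof of Theorem~\ref{theorem1}(b), the $n\log n$, $x\log x$ and $(n-x)\log(n-x)$ pieces cancel against those produced by $\log\alpha$ and $\log(1-\alpha)$, and one is left with
\begin{equation*}
h^\phi(f_\alpha^{(n)})=\tfrac{1}{2}\log\!\frac{2\pi\alpha(1-\alpha)}{n}+\frac{\gamma^2}{2\alpha}+\frac{(1-\gamma)^2}{2(1-\alpha)}+O(n^{-1/2}).
\end{equation*}
Subtracting $\tfrac{1}{2}\log\!\bigl(2\pi e\alpha(1-\alpha)/n\bigr)$ produces $\tfrac{\gamma^2}{2\alpha}+\tfrac{(1-\gamma)^2}{2(1-\alpha)}-\tfrac{1}{2}$, and the elementary identity $\gamma^2(1-\alpha)+(1-\gamma)^2\alpha-\alpha(1-\alpha)=(\alpha-\gamma)^2$ shows this equals $\tfrac{(\alpha-\gamma)^2}{2\alpha(1-\alpha)}$, which is the claimed limit; it vanishes precisely when $\alpha=\gamma$, recovering Theorem~\ref{theorem1}(b) as expected.

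The only delicate point is bookkeeping: all three digamma expansions must be kept coherent to order $1/n$ so that the apparent $O(\sqrt{n})$ divergence cancels exactly and the correct $O(1)$ constant survives. Once that balance is secured, the remainder of the argument is purely mechanical.
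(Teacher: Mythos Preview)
Your proposal is correct and follows essentially the same route as the paper: both reduce the weighted entropy to $-\log[(n+1)\binom{n}{x}]-x(\psi(a)-\psi(a+b))-(n-x)(\psi(b)-\psi(a+b))$ with $a=x+\gamma\sqrt{n}+1$, $b=n-x+(1-\gamma)\sqrt{n}+1$, then expand the digamma terms and combine with the Stirling expansion of the binomial coefficient. Your framing of $\phi^{(n)}f_\alpha^{(n)}$ as a Beta$(a,b)$ density is a clean way to justify the digamma identities (the paper simply quotes the integral formula (\ref{ryz})), and your explicit algebraic identity $\gamma^2(1-\alpha)+(1-\gamma)^2\alpha-\alpha(1-\alpha)=(\alpha-\gamma)^2$ makes transparent a simplification that the paper leaves implicit.
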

\begin{proof}
The Shannon differential entropy of PDF $f^{(n)}(p)=f(p)$ given in (\ref{pdf}) and weight function $\phi^{(n)}(p)$ given in (\ref{weight}) takes the form:
$$ h^{\phi}(f) = {\rm log} \left[(n+1) {n\choose x} \right] + x \int_0^1 {\rm log}(p) \phi^{(n)}(p) f(p) dp + (n-x)\int_0^1 {\rm log}(1-p) \phi^{(n)}(p) f(p) dp  $$
The integrals can be computed explicitly \cite{rg2007} (page 552):

$$
\int_0^1x^{\mu-1}(1-x^r)^{\nu-1}{\rm log}(x){\rm d} x = \frac{1}{r^2} \mathbb{B}\left(\frac{\mu}{r},\nu \right) \left(\psi\left(\frac{\mu}{r}\right) - \psi\left(\frac{\mu}{r}+\nu\right)\right),
$$

\noindent Applying this formula for integral, we get:

$ \displaystyle \int_0^1 {\rm log}(p) \phi^{(n)}(p) f(p) dp = \psi(x+z+1) - \psi(n+\sqrt{n} +2)$,
 where $z= \gamma \sqrt{n}$ and $\psi(x)$ is a digamma function.

$ \displaystyle \int_0^1 {\rm log}(1-p) \phi^{(n)}(p) f(p) dp = \psi(n-x+\sqrt{n} - z +1) - \psi(n+ \sqrt{n} +2)$

So we have that

$ \displaystyle h^{\phi}(f) ={\rm log} \left[(n+1) {n\choose x} \right] + x \psi(x+z+1) + (n-x)\psi(n-x+\sqrt{n} - z +1)  - n\psi(n+ \sqrt{n} +2)$.

By Stirling's formula we have that for $x=\alpha n$:

$\displaystyle  {\rm log} \left[(n+1) {n\choose x} \right]= n{\rm {\rm log}}(n) -x{\rm log}x -(n-x){\rm log}(n-x) + \frac{1}{2}{\rm log}(n)  - \frac{1}{2}{\rm log}(\alpha) - \frac{1}{2}{\rm log}(1-\alpha) -{\rm log}\sqrt{2 \pi} + O\left(\frac{1}{n} \right)$

Using the asymptotic for digamma function

$$ \displaystyle \psi(x+z+1)=  {\rm log}(x)+ \frac{\gamma \sqrt{n}}{x} + \frac{\alpha - \gamma^2}{2\alpha x} + O\left(\frac{1}{n^{3/2}} \right) $$
$$ \displaystyle \psi(n-x+\sqrt{n} - z +1)) = {\rm log}(n-x) + \frac{(1-\gamma)\sqrt{n}}{n-x} + \frac{2\gamma - \gamma^2-\alpha}{2(1-\alpha)(n-x)} + O\left(\frac{1}{n^{3/2}} \right) $$
$$ \displaystyle \psi(n+\sqrt{n} +2 ) = {\rm log}(n) + \frac{\sqrt{n}}{n}  + O\left(\frac{1}{n} \right),$$
 we get
\begin{equation}
 h_{\phi}^w(f^{(n)}) =  \frac{1}{2} {\rm log}\frac{2 \pi e [\alpha (1-\alpha)]}{n} + \frac{(\alpha-\gamma)^2}{2\alpha (1-\alpha)} + O\left(\frac{1}{n} \right)
 \label{hdifw1}
 \end{equation}
The first term in (\ref{hdifw1}) is differential entropy with weight $\phi \equiv 1$ of Gaussian RV. Moreover, note that the asymptotic of the weighted entropy exceeds  classical entropy studied above.  The only difference is constant, which tend to zero if $\gamma \to \alpha$.
\end{proof}

\begin{theorem}
Let $Z^{(n)}$ be a RV with  $f^{(n)}$ - conditional PDF after $n$ given by (\ref{pdf}) and with weighted Renyi differential entropy $H_{\nu}(f^{(n)})$ given in (\ref{renyi}).

\textbf{(a)} When both $(x)$ and $(n-x)$ tend to infinity as $n \to \infty$ in the case $\phi^{(n)}(p) = 1,$
\begin{equation}
\lim_{n \rightarrow \infty} \left(H_{\nu}(f^{(n)})-\frac{1}{2} {\rm log} \frac{2 \pi x(n-x)}{n^3} \right) = -  \frac{{\rm log}(\nu)}{2(1-\nu)}.
\end{equation}
For any fixed $n$ when $\nu \to 1$ Renyi's differential entropy of $Z^{(n)}$ tends to Shannon's differential entropy of $Z^{(n)}$.

\textbf{(b)} When $x=\alpha n$ {\rm (}$0<\alpha<1${\rm )} and the weighted function is given in (\ref{weight})
\begin{equation}
 \lim_{n \rightarrow \infty} \left(H^{\phi}_{\nu}(f^{(n)}_{\alpha}) - \frac{1}{2} {\rm log} \frac{2 \pi \alpha (1-\alpha)}{n} \right) =  -\frac{{\rm log}(\nu)}{2(1-\nu)} + \frac{(\alpha-\gamma)^2}{2\alpha (1-\alpha)\nu}. 
\end{equation}
For any fixed $n$ the Renyi weighted differential entropy tends to Shannon's weighted differential entropy RV with PDF given in (\ref{pdf}) as $\nu \to 1$.
\label{theorem1}
\end{theorem}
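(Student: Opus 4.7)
The plan is to observe that both $[f^{(n)}(p)]^\nu$ and $\phi^{(n)}(p)[f^{(n)}(p)]^\nu$ are of the product-of-powers form $c\cdot p^a(1-p)^b$, so the integral appearing in (\ref{renyi}) reduces to a Beta integral that can be evaluated in closed form, after which everything is handled by Stirling's formula exactly as in the proofs of Theorem 1(b) and Proposition 1.

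For part (a), I would write $\int_0^1 [f^{(n)}(p)]^\nu\,dp = M^\nu\,B(\nu x+1,\nu(n-x)+1)$ with $M=(n+1)\binom{n}{x}$, so that $H_\nu(f^{(n)})=[\nu\log M+\log B]/(1-\nu)$. Stirling on $M$ has already been carried out for $U_0$ in the first theorem, and Stirling on each of the three Gamma factors inside $B$ produces terms of order $n\log n$, $x\log x$ and $(n-x)\log(n-x)$ with overall coefficient $\nu$. When added to $\nu\log M$ these $O(n\log n)$ parts cancel exactly, leaving $\tfrac{\nu-1}{2}\log n+\tfrac{1-\nu}{2}\log(2\pi\alpha(1-\alpha))-\tfrac12\log\nu+O(1/n)$, which after division by $1-\nu$ gives the stated asymptotic once $\alpha(1-\alpha)/n$ is rewritten as $x(n-x)/n^3$.

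For part (b) the integral of interest is $\Lambda^{(n)}(\gamma)\,M^\nu\,B(\nu x+\gamma\sqrt n+1,\,\nu(n-x)+(1-\gamma)\sqrt n+1)$, and the new feature is the presence of $\sqrt n$-shifts in the arguments of the Gamma functions inside both the Beta factor and $\Lambda^{(n)}(\gamma)$. I would apply Stirling in the form $\log\Gamma(z+1)=z\log z-z+\tfrac12\log(2\pi z)+O(1/z)$ to $z=\nu\alpha n+c\sqrt n$ (and its analogues), together with $\log z=\log(\nu\alpha n)+c/(\nu\alpha\sqrt n)-c^2/(2\nu^2\alpha^2 n)+\cdots$. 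Expanding $z\log z$ to the $O(1)$ level produces, for each Gamma, three contributions: the unweighted $\nu\alpha n\log(\nu\alpha n)-\nu\alpha n$, a linear-in-$\sqrt n$ piece $c\sqrt n\log(\nu\alpha n)$, and a quadratic remainder $c^2/(2\nu\alpha)$. Doing this bookkeeping for the three Gamma factors in $B$ (with $c=\gamma,1-\gamma,1$) and for the three $\sqrt n$-shifted Gamma factors in (\ref{lambda}), the linear $\sqrt n$ terms will cancel between $\log\Lambda^{(n)}(\gamma)$ and $\log B$, and the surviving quadratic contributions sum to $\tfrac{1-\nu}{2\nu}\bigl[\gamma^2/\alpha+(1-\gamma)^2/(1-\alpha)-1\bigr]=\tfrac{1-\nu}{2\nu}\cdot(\alpha-\gamma)^2/(\alpha(1-\alpha))$; dividing by $1-\nu$ then yields the extra term $(\alpha-\gamma)^2/(2\nu\alpha(1-\alpha))$ on top of the part-(a) asymptotic.

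The main obstacle will be the algebraic verification that the $O(\sqrt n)$ contributions cancel: any miscount would produce a remainder diverging like $\sqrt n/(1-\nu)$, so this cancellation — which ultimately reflects the matching between the weight (\ref{weight}) and the normalization (\ref{norm}) — is the critical sanity check; once it is in place, the identity $\gamma^2/\alpha+(1-\gamma)^2/(1-\alpha)-1=(\alpha-\gamma)^2/(\alpha(1-\alpha))$ is a routine algebraic simplification. Finally, for the $\nu\to 1$ claim at fixed $n$, I would apply L'Hopital to (\ref{renyi}): the normalization (\ref{norm}) makes $\log\int\phi^{(n)}[f^{(n)}]^\nu\,dp$ vanish at $\nu=1$, and differentiating in $\nu$ at that point recovers exactly $-\int\phi^{(n)}f^{(n)}\log f^{(n)}\,dp$, which is (\ref{shannon}).
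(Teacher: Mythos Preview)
Your proposal is correct and follows essentially the same route as the paper: both parts reduce the Renyi integral to an explicit Beta function, apply Stirling to every Gamma factor, and collect terms (the paper packages the same product of Gammas as $U_1U_2U_3$, which is algebraically your $\Lambda^{(n)}M^{\nu}B$). Your explicit identification of the quadratic $O(1)$ contributions and the identity $\gamma^2/\alpha+(1-\gamma)^2/(1-\alpha)-1=(\alpha-\gamma)^2/(\alpha(1-\alpha))$ makes the mechanism for the extra term in~(b) more transparent than the paper's term-by-term bookkeeping, but the underlying computation is identical; one minor remark is that in~(a) it is cleaner to carry the Stirling expansion in $x$ and $n-x$ directly (as the paper does) rather than via $\alpha$, since the statement covers both regimes $x\sim\alpha n$ and $x\sim n^\beta$.
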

\begin{proof}

\textbf{(a)} 
In this case $\phi^{(n)}(p) \equiv 1$, so the Renyi entropy have the form:

$\displaystyle (1-\nu) H_{\nu}(f) = {\rm log} \int_{0}^1 \left( f(p) \right)^{\nu}{\rm d}p = \nu {\rm log} \left[(n+1) {n\choose x} \right]  + {\rm log} \left[ \int_0^1 p^{\nu x} (1-p)^{\nu(n-x)} \right] = U_0 + U_1  $

By Stirling formula:

$\displaystyle U_0 = \nu {\rm log} \left[(n+1) {n\choose x} \right] = 
\nu n {\rm log}(n) - \nu x {\rm log} (x) - \nu (n-x) {\rm log} (n-x) + \nu {\rm log} (n)   + \frac{\nu}{2} {\rm log}(n) - \frac{\nu}{2} {\rm log} (x) - \frac{\nu}{2} {\rm log}(n-x) - \frac{\nu}{2}{\rm log}(2 \pi)+ O\left(\frac{1}{n} \right)$

Consider the integral:

$\displaystyle  \int_0^1 p^{\nu x} (1-p)^{\nu(n-x)} = \mathbb{B}(\nu x+1, \nu(n-x)+1) = \frac{ \Gamma(\nu x+1) \Gamma(\nu(n-x)+1)}{\Gamma(\nu n +2)}$

So by Stirling formula again:

$\displaystyle U_1 =  {\rm log} \left[\frac{ \Gamma(\nu x+1) \Gamma(\nu(n-x)+1)}{\Gamma(\nu n +2)}\right] =$

$\displaystyle  =\left[\nu x {\rm log}(\nu) + \nu x {\rm log} (x) - \nu x + \frac{1}{2} {\rm log} (\nu) + \frac{1}{2} {\rm log} (x) + \frac{1}{2} {\rm log}(2 \pi) \right]$

$\displaystyle +  \left[ \nu (n-x) {\rm log} (\nu) + \nu (n-x) {\rm log} (n-x) - \nu (n-x) + \frac{1}{2} {\rm log} (\nu) + \frac{1}{2} {\rm log} (n-x) + \frac{1}{2} {\rm log} (2 \pi) \right] - $

$\displaystyle  - \left[ \nu n {\rm log}(n) + \nu n {\rm log}(n) - \nu n + \frac{1}{2} {\rm log} (\nu) + \frac{1}{2} {\rm log} (n) + \frac{1}{2} {\rm log} (2 \pi) \right] - {\rm log}(\nu) - {\rm log} (n) + O\left(\frac{1}{n} \right). $

We obtain that

$\displaystyle U_0 + U_1 = \frac{1- \nu}{2} {\rm log} (x)  +\frac{1- \nu}{2} {\rm log} (n-x) + \frac{1- \nu}{2} {\rm log} (2 \pi) - \frac{1}{2} {\rm log} (\nu) + \nu {\rm log} (n) - {\rm log}(n) - \frac{1- \nu}{2} {\rm log} (n) + O\left(\frac{1}{n} \right) = $

$\displaystyle = (1-\nu) \frac{1}{2} ( -{\rm log} (n) + {\rm log} (x) + {\rm log} (n-x) + {\rm log} (2 \pi) - 2 {\rm log} (n)) - \frac{1}{2} {\rm log} (\nu) + O\left(\frac{1}{n} \right)$

$\displaystyle = \frac{1- \nu}{2} {\rm log} \left( \frac{ 2 \pi x (n-x) } {n^3} \right) - \frac{1}{2} {\rm log} (\nu)+ O\left(\frac{1}{n} \right)$

So we have that:

\begin{equation}
 H_{\nu}(f) = \frac{1}{2} {\rm log} \left( \frac{ 2 \pi x (n-x) } {n^3} \right) - \frac{{\rm log} (\nu)}{2(1-\nu)}+ O\left(\frac{1}{n} \right),
 \end{equation}
note that it tends to Renyi differential entropy of Gaussian RV as $n \to \infty$.

Taking the limit when $\nu \to 1$ and applying L'Hopital's rule we get that:

\begin{equation} H_{\nu \to 1}(f) = \lim_{\nu \to 1}H_{\nu}(f) = \frac{1}{2} {\rm log} \left( \frac{ 2 e \pi x (n-x) } {n^3} \right) + O\left(\frac{1}{n} \right) .
 \end{equation}

For example, when $x= \alpha n$, $0<\alpha <1$ the Renyi entropy:

$$ H_{\nu \to 1}(f) = \frac{1}{2} {\rm log}\frac{2 \pi e [\alpha (1-\alpha)]}{n}+ O\left(\frac{1}{n} \right),$$
where the first term is Shannon's entropy of Gaussian RV with corresponding variance.

Or similarly  when $x= n^{\beta}$, $0<\beta < 1$ the Renyi entropy:
$$ H_{\nu \to 1}(f) = \frac{1}{2} {\rm log}\frac{2 \pi e (1-n^{\beta-1})}{n^{2-\beta}}+ O\left(\frac{1}{n^\beta} \right)$$
where the first term is Shannon's differential entropy of Gaussian RV with variance $ \displaystyle \sigma^2 = \frac{1-n^{\beta-1}}{n^{2-\beta}}$.

\textbf{(b)} 
In this case when $\phi^{(n)}(p)$ is given in (\ref{weight}) and $x=\alpha n$, the weighted Renyi entropy has the form:
$$H_{\nu}^{\psi}(f) = \frac{1}{1-\nu} {\rm log} \int_{0}^1  \phi^{(n)}(p) \left( f(p) \right)^{\nu}{\rm d}p$$

$\displaystyle \int_{0}^1  \phi^{(n)}(p) \left( f(p) \right)^{\nu}{\rm d}p =   U_1 U_2 U_3$, where

$\displaystyle  U_1 = \frac{\Gamma(\nu x + \gamma \sqrt{n}+1) \Gamma( \nu(n-x) + (1-\gamma)\sqrt{n}+1)}{\Gamma(\nu n + \sqrt{n} +2)}$
;
$\displaystyle  U_2= \left( \frac{\Gamma(n+2)}{\Gamma(x+1)\Gamma(n-x+1)} \right)^{\nu-1}$;

$\displaystyle U_3=\frac{\Gamma(n+ \sqrt{n}+2)}{\Gamma(x+z+1) \Gamma(n-x + \sqrt{n} - z+1)}$

$\displaystyle  {\rm log}(U_1) = \nu x {\rm log}(x) + z {\rm log}(x) + \frac{1}{2} {\rm log}(x) + \nu (n-x) {\rm log}(n-x) + (\sqrt{n}-z) {\rm log}(n-x) + \frac{1}{2} {\rm log}(2 \pi) -\frac{1}{2} {\rm log}(\nu) + \frac{1}{2} {\rm log}(n-x) - \nu n {\rm log}(n) - \sqrt{n} {\rm log}(n)- \frac{1}{2} {\rm log}(n) - {\rm log}(n) + \frac{(\alpha-\gamma)^2}{2\alpha (1-\alpha) \nu} +\frac{1}{2} {\rm log} \left( \frac{2 \pi \alpha (1-\alpha)}{\nu} \right)+ O\left(\frac{1}{n} \right)$

$\displaystyle  {\rm log}(U_2) = \nu n {\rm log}(n) - \nu x {\rm log}(x) - \nu (n-x) {\rm log}(n-x) + \nu {\rm log}(n) + \frac{\nu}{2}({\rm log}(n) - {\rm log}(x) - {\rm log}(n-x) - {\rm log}(2 \pi)) - n {\rm log}(n) + x{\rm log}(x) + (n-x) {\rm log}(n-x) - {\rm log}(n) - \frac{1}{2}({\rm log}(n) - {\rm log}(x) - {\rm log}(n-x) - {\rm log} (2 \pi))+ O\left(\frac{1}{n} \right)$

$\displaystyle  {\rm log}(U_3)= {\rm log}(n) + n {\rm log}(n) + \sqrt{n}{\rm log}(n) - x {\rm log}(x) - z {\rm log}(x) - (n-x) {\rm log}(n-x) - (\sqrt{n} -z) {\rm log}(n-x) + \frac{1}{2} ({\rm log}(n) - {\rm log}(x) - {\rm log}(2 \pi) - {\rm log}(n-x)) - \frac{(\alpha-\gamma)^2}{2\alpha (1-\alpha)} -\frac{1}{2} {\rm log} \left(2 \pi \alpha (1-\alpha) \right) +O\left(\frac{1}{n} \right)$

Taking all parts together, we obtain that
\begin{equation}H^{\phi}_{\nu}(f) =  \frac{1}{2} {\rm log} \frac{2 \pi \alpha (1-\alpha)}{n}  -\frac{log(\nu)}{2(1-\nu)} + \frac{(\alpha-\gamma)^2}{2\alpha (1-\alpha)(1-\nu)} \left( \frac{1}{\nu} -1 \right) + O\left(\frac{1}{n} \right) 
 \end{equation}
Taking the limit when $\nu \to 1$ and applying L'Hopital's rule we get that:
\begin{equation} H^{\phi}_{1}(f) = \lim_{\nu \to 1}H_{\nu}(f) = \frac{1}{2} {\rm log}\frac{2 \pi e [\alpha (1-\alpha)]}{n} + \frac{(\alpha-\gamma)^2}{2\alpha (1-\alpha)}+ O\left(\frac{1}{n} \right)
 \end{equation}
So the weighted Reniy entropy tends to Shannon's weighted entropy as $\nu \to 1$.
\end{proof}

\begin{proposition}
For any continuous random variable $X$ with PDF $f(x)$ and for any non-negative weight function $\phi(x)$ which satisfies condition (\ref{norm}) and such that
$$\int_{\mathbb{R}}\phi(x) (f(x))^\nu \vert {\rm log} (f(x)) \vert {\rm d}x<\infty,$$
 the weighted Renyi differential entropy $H^{\phi}_{\nu}(f)$ is a non-increasing function of $\nu$ and
\begin{equation}
\frac{\partial}{\partial \nu} H^{\phi}_{\nu}(f) = -\frac{1}{(1-\nu)^2} \int_{\mathbb{R}}z(x) {\rm log} \frac{z(x)}{\phi(x) f(x)} dx,
\end{equation}
\noindent where

$$z(x) = \frac{\phi(x) (f(x))^\nu}{\int_{\mathbb{R}}\phi(x) (f(x))^\nu {\rm d}x}$$
Similarly, the Tsallis weighted entropy $S^{\phi}_{q}(f)$ given in (\ref{tsal}) is a non-increasing function of $q$.
\end{proposition}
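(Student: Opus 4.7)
The plan is to view both entropies as functions of the single smooth auxiliary $A(\nu):=\int_{\mathbb{R}}\phi(x)f(x)^{\nu}\,dx$. The hypothesis $\int\phi f^{\nu}|\log f|\,dx<\infty$ legitimises differentiation under the integral sign, giving $A'(\nu)=\int\phi(x)f(x)^{\nu}\log f(x)\,dx$. Writing $H^{\phi}_{\nu}(f)=\log A(\nu)/(1-\nu)$ and applying the quotient rule yields
\[
\frac{\partial}{\partial\nu}H^{\phi}_{\nu}(f)=\frac{(1-\nu)\,A'(\nu)/A(\nu)+\log A(\nu)}{(1-\nu)^{2}},
\]
and with $z(x)=\phi(x)f(x)^{\nu}/A(\nu)$ we have $A'(\nu)/A(\nu)=\int z(x)\log f(x)\,dx$.

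To match the claimed right-hand side I would expand
\[
\log\frac{z(x)}{\phi(x)f(x)}=(\nu-1)\log f(x)-\log A(\nu),
\]
integrate against $z$, and read off $\int z\log(z/\phi f)\,dx=-\bigl[(1-\nu)A'(\nu)/A(\nu)+\log A(\nu)\bigr]$. Dividing by $-(1-\nu)^{2}$ reproduces the formula in the statement verbatim.

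Monotonicity then rests on one observation. By construction $\int z\,dx=1$, and the normalisation (\ref{norm}) makes $\phi(x)f(x)$ itself a probability density; hence $\int z\log(z/\phi f)\,dx=\mathbb{D}(z\,\Vert\,\phi f)\ge 0$ is a genuine Kullback--Leibler divergence by Gibbs' inequality. Combined with the strictly negative prefactor $-1/(1-\nu)^{2}$ this forces $\partial_{\nu}H^{\phi}_{\nu}(f)\le 0$ for every $\nu\ne 1$, which gives the non-increasing property on $(0,1)$ and $(1,\infty)$; the value at $\nu=1$ is defined by continuity and equals the Shannon weighted entropy, as already noted in the paper.

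For the Tsallis assertion I would reuse the same quantities. Differentiating $S^{\phi}_{q}(f)=(1-A(q))/(q-1)$ produces
\[
\frac{\partial}{\partial q}S^{\phi}_{q}(f)=\frac{A(q)-1-(q-1)A'(q)}{(q-1)^{2}}.
\]
Substituting $(q-1)A'(q)/A(q)=\mathbb{D}(z\,\Vert\,\phi f)+\log A(q)$ from the Renyi computation reduces the numerator to $A(q)-1-A(q)\log A(q)-A(q)\,\mathbb{D}(z\,\Vert\,\phi f)$. The KL term contributes a non-positive quantity, while $A(q)-1-A(q)\log A(q)\le 0$ follows from the elementary inequality $t\log t\ge t-1$ applied with $t=A(q)$; hence the numerator is non-positive and $S^{\phi}_{q}$ is non-increasing. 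The only real issue in the plan is justifying the interchange of differentiation and integration uniformly in a neighbourhood of each $\nu$, which is immediate from the integrability hypothesis in the statement, so I expect no substantive obstacle.
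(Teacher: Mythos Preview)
Your argument is correct and follows essentially the same route as the paper: differentiate $H^{\phi}_{\nu}$, recognise the resulting expression as $-\tfrac{1}{(1-\nu)^{2}}\mathbb{D}(z\Vert\phi f)$ with $\phi f$ a bona fide density by~(\ref{norm}), and invoke non-negativity of Kullback--Leibler divergence. Your derivation is in fact tidier than the paper's (you expand $\log(z/\phi f)$ directly rather than going through the auxiliary $Q_{1},Q_{2},I_{3}$ manipulations), and for the Tsallis claim you supply the details that the paper omits under ``similarly'', correctly supplementing the KL bound with the elementary inequality $t\log t\ge t-1$.
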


\begin{proof}

We need to show that

$$\frac{\partial}{\partial \nu} H^{\phi}_{\nu}(f)\leq 0.$$

\begin{equation}
 \frac{\partial}{\partial \nu} H^{\phi}_{\nu}(f) = \frac{  {\rm log} \int_{\mathbb{R}}\phi(x) (f(x))^\nu  {\rm d}x}{(1-\nu)^2} + \frac{  \int_{\mathbb{R}}\phi(x) (f(x))^\nu {\rm log} (f(x)) {\rm d}x}{(1-\nu)\int_{\mathbb{R}}\phi(x) (f(x))^\nu {\rm d}x} = I_1 + I_2
 \end{equation}

\noindent Denote
\begin{equation}
z(x) = \frac{\phi(x) (f(x))^\nu}{\int_{\mathbb{R}}\phi(x) (f(x))^\nu{\rm d}x}.
\label{subs}
 \end{equation}
Note that $z(x) \geq 0$ for any $x$ and 
$$\int_{\mathbb{R}} z(x) {\rm d}x =1 $$

Let $ \displaystyle Q_1= \int_{\mathbb{R}}\phi(x) (f(x))^\nu{\rm d}x $ and $ \displaystyle Q_2 = {\rm log} \int_{\mathbb{R}}\phi(x) (f(x))^\nu{\rm d}x$.

Using the substitution (\ref{subs})
\begin{equation}
Q_2= {\rm log} (\phi(x)) + \nu {\rm log} (f(x)) - {\rm log}(z(x)).
\label{subs2}
 \end{equation}

We have that

$\displaystyle I_2 = \frac{1}{1-\nu} \frac{ Q_1 \int_{\mathbb{R}}  z(x) {\rm log}(f(x)) {\rm d}x}{Q_1} = \frac{1}{1-\nu} \int_{\mathbb{R}}  z(x) {\rm log}(f(x)) {\rm d}x  $

$\displaystyle I_1+I_2 = \frac{1}{(1-\nu)^2} \left({\rm log} \int_{\mathbb{R}}\phi(x) (f(x))^\nu  {\rm d}x + (1-\nu)\int_{\mathbb{R}}  z(x) {\rm log}(f(x)) {\rm d}x \right)= \frac{1}{(1-\nu)^2} I_3$

By substitution ${\rm log} (f(x))$ using (\ref{subs2}) we get:

$\displaystyle I_3 = Q_2 + (1-\nu) \left( \frac{Q_2}{\nu} + \frac{1}{\nu} \int_{\mathbb{R}} z(x) {\rm log}(z(x)) {\rm d}x - \frac{1}{\nu} \int_{\mathbb{R}} z(x) {\rm log}(\phi(x)) {\rm d}x \right)=$

$\displaystyle \frac{Q_2}{\nu} + \frac{1}{\nu}\int_{\mathbb{R}} z(x) {\rm log}(z(x)) {\rm d}x - \int_{\mathbb{R}} z(x) {\rm log}(z(x)) {\rm d}x + \int_{\mathbb{R}} z(x) {\rm log}(\phi(x)) {\rm d}x - \frac{1}{\nu} \int_{\mathbb{R}} z(x) {\rm log}(\phi(x)) {\rm d}x   $

Applying (\ref{subs2}) again we get that

$\displaystyle  I_3 = \int_{\mathbb{R}} z(x) {\rm log}(f(x)) {\rm d}x  - \int_{\mathbb{R}} z(x) {\rm log}(z(x)) {\rm d}x + \int_{\mathbb{R}} z(x) {\rm log}(\phi(x)) {\rm d}x = - \int_{\mathbb{R}} z(x) {\rm log}\left(\frac{z(x)}{\phi(x) f(x)} \right) {\rm d}x  $

We obtain that

\begin{equation}
-\frac{\partial}{\partial \nu} H^{\phi}_{\nu}(f) = \frac{1}{(1-\nu)^2}  \int_{\mathbb{R}} z(x) {\rm log}\left(\frac{z(x)}{\phi(x) f(x)} \right) {\rm d}x= \frac{1}{(1-\nu)^2}  \mathbb{D}_{KL}(z||\phi f). 
\end{equation}
Here $\mathbb{D}_{KL}(z||\phi f)$ is Kullback–Leibler divergence between $z$ and $\phi f$ which is always non-negative. Due to conditions $\phi(x) f(x) \geq 0$ and (\ref{norm}), $\phi(x) f(x)$ is itself a PDF:
$$
\int_{\mathbb{R}} \phi(x) f(x) {\rm d}x = 1
$$
Similarly, one can show that Tsallis weighted differential entropy given in (\ref{tsal}) is non-increasing function of $q$. So, the result follows.
\end{proof}

\noindent \begin{theorem}
Let $Z^{(n)}$ be a RV with  $f^{(n)}$ - conditional PDF after $n$ trials given by (\ref{pdf}) with the weighted Tsallis differential entropy $S_{q}(f^{(n)})$ given in (\ref{tsal}).

\textbf{(a)} When both $(x)$ and $(n-x)$ tend to infinity as $n \to \infty$ and $\phi^{(n)}(p) = 1,$
\begin{equation}
\lim_{n \rightarrow \infty} \left(S_{q}(f^{(n)})-\frac{1}{q-1} \left(1-\frac{1}{\sqrt{q}} \left( \frac{ 2 \pi x (n-x) } {n^3} \right)^{\frac{1-q}{2}}  \right) \right) = 0.
\end{equation}
For any fixed $n$ the Tsallis differential entropy tends to Shannon's differential entropy as $q \to 1$.

\textbf{(b)} When $x=\alpha n$ and the weight function $\phi^{(n)}(p)$ given in (\ref{weight})
\begin{equation}
 \lim_{n \rightarrow \infty} \left( S^{\phi}_{q}(f_{\alpha}^{(n)}) - \frac{1}{q-1} \left(1-\frac{1}{\sqrt{q}} \left( \frac{ 2 \pi \alpha (1-\alpha) } {n} \right)^{\frac{1-q}{2}} \exp \left( \frac{(\alpha-\gamma)^2 (1-q)}{2\alpha (1-\alpha)q}  \right) \right) \right) = 0 
\end{equation}
The weighted Tsallis differential entropy tends to Shannon's weighted differential entropy RV with PDF given in (\ref{pdf}) as $q \to 1$.
\label{theorem2}
\end{theorem}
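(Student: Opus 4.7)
The plan is to reduce both parts of this theorem to the Renyi entropy asymptotics already established in Theorem 4, via the elementary algebraic identity linking the two families. Directly from (\ref{renyi}) one has $\int \phi^{(n)}(z)f(z)^{q}\,{\rm d}z = \exp\bigl((1-q)H^{\phi}_{q}(f)\bigr)$, so plugging this into (\ref{tsal}) gives
\begin{equation*}
S^{\phi}_{q}(f) \;=\; \frac{1}{q-1}\Bigl(1 - \exp\bigl((1-q)H^{\phi}_{q}(f)\bigr)\Bigr).
\end{equation*}
With this identity in hand, the asymptotic of $S^{\phi}_{q}$ follows by substituting the asymptotic of $H^{\phi}_{q}$ and exponentiating.

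For part \textbf{(a)}, I would take $\phi^{(n)}\equiv 1$ and apply Theorem 4(a) with $\nu=q$, obtaining $H_{q}(f^{(n)}) = \tfrac{1}{2}\log\tfrac{2\pi x(n-x)}{n^{3}} - \tfrac{\log q}{2(1-q)} + O(1/n)$. Multiplying by $(1-q)$ and exponentiating gives
\begin{equation*}
\exp\bigl((1-q)H_{q}(f^{(n)})\bigr) \;=\; \frac{1}{\sqrt{q}}\left(\frac{2\pi x(n-x)}{n^{3}}\right)^{(1-q)/2} + O(1/n),
\end{equation*}
and substituting into the Tsallis identity reproduces the claimed limit. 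For part \textbf{(b)}, I would apply Theorem 4(b) similarly, noting that the extra term $\frac{(\alpha-\gamma)^{2}}{2\alpha(1-\alpha)(1-\nu)}(\nu^{-1}-1)$ simplifies to $\frac{(\alpha-\gamma)^{2}}{2\alpha(1-\alpha)\nu}$; after multiplication by $(1-q)$ and exponentiation this becomes exactly the factor $\exp\!\bigl(\frac{(\alpha-\gamma)^{2}(1-q)}{2\alpha(1-\alpha)q}\bigr)$ that appears in the theorem statement, while the remaining $\log$ and $\log q$ terms reproduce $\frac{1}{\sqrt{q}}\bigl(\frac{2\pi\alpha(1-\alpha)}{n}\bigr)^{(1-q)/2}$.

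For the concluding assertion that $S^{\phi}_{q}(f)\to$ the weighted Shannon entropy as $q\to 1$, I would write $S^{\phi}_{q}(f) = \frac{1-\exp((1-q)H^{\phi}_{q}(f))}{q-1}$ and apply L'Hôpital (or a first-order Taylor expansion $\exp(\varepsilon)=1+\varepsilon+O(\varepsilon^{2})$) to get $S^{\phi}_{q}(f)\to H^{\phi}_{1}(f)=h^{\phi}(f)$. This simultaneously justifies the continuity of the map $q\mapsto S^{\phi}_{q}$ at $q=1$ that was asserted just after (\ref{norm}).

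The one technical point that needs some care, and is the main obstacle, is error propagation under exponentiation: the Renyi asymptotic carries an $O(1/n)$ remainder, and I need to check that this survives the transformation $A \mapsto \exp((1-q)A)$ with the same order. Since the leading part of $(1-q)H^{\phi}_{q}$ converges as $n\to\infty$ to a finite expression, the exponential is bounded uniformly in $n$, and the mean-value estimate $\exp(A+\varepsilon)=\exp(A)(1+O(\varepsilon))$ keeps the remainder of order $O(1/n)$ after the final division by $q-1$ (for fixed $q\neq 1$). This is routine but should be stated explicitly so that the $\lim_{n\to\infty}(\ldots)=0$ in the statement is rigorously justified.
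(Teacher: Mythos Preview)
Your approach is essentially the paper's own: both reuse the Renyi computation from Theorem~4 (the paper quotes the intermediate quantity $\log\int\phi f^{q}$, you quote the final $H^{\phi}_{q}$, which differ only by the factor $1/(1-q)$), exponentiate, insert into the Tsallis formula, and invoke L'H\^opital for $q\to 1$.

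One correction to your added error-propagation paragraph: the claim that ``the leading part of $(1-q)H^{\phi}_{q}$ converges as $n\to\infty$ to a finite expression'' is false --- in the case $x=\alpha n$ one has $(1-q)H_{q}\sim \tfrac{1-q}{2}\log(c/n)$, which diverges, and $\exp((1-q)H_{q})\sim n^{(q-1)/2}$ is \emph{unbounded} for $q>1$. The mean-value estimate then gives a difference of order $n^{(q-1)/2}\cdot O(1/n)=O(n^{(q-3)/2})$, which still tends to $0$ for $q<3$ but not beyond; the paper sidesteps this entirely by working with $\simeq$ and never tracking the remainder, so your discussion is already more careful than the original, but the justification needs the amended bound rather than uniform boundedness.
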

\begin{remark}
It can be seen from Theorem 4(a) and Theorem 5(a) that for large $n$ Renyi's entropy and Tsallis's entropy (for $\phi \equiv 1$) "behaves" like respective entropies of Gaussian RV with variance $\sigma^2 = \frac{x (n-x)}{n^3}$.
\end{remark}
\begin{proof}

\textbf{(a)} In this case $\phi^{(n)}(p)  \equiv 1$, the Tsallis entropy have the form:

$\displaystyle S_{q}(f) = \frac{1}{q-1} \left(1- \int_{0}^{1}  \left( f(p) \right)^{q}{\rm d}p \right)= \frac{1}{q-1} \left(1- \int_{0}^{1}  \left( (n+1) {n\choose x} p^x (1-p)^{n-x} \right)^{q}{\rm d}p \right)  $

It was shown above that

$$\displaystyle {\rm log} \int_0^1 \left( f(p) \right)^{q}{\rm d}p \simeq \frac{1- q}{2} {\rm log} \left( \frac{ 2 \pi x (n-x) } {n^3} \right) - \frac{1}{2} {\rm log} (q)$$
So we have that 
$$V_0=\int_0^1 \left( f(p) \right)^{q}{\rm d}p \simeq  \frac{1}{\sqrt{q}} \left( \frac{ 2 \pi x (n-x) } {n^3} \right)^{\frac{1-q}{2}} $$
We straightforwardly obtain that
\begin{equation}
\displaystyle S_{q}(f) \simeq \frac{1}{q-1} \left(1-\frac{1}{\sqrt{q}} \left( \frac{ 2 \pi x (n-x) } {n^3} \right)^{\frac{1-q}{2}}  \right)
\end{equation}
Note that $V_0 \to 1$ when $ q \to 1$, applying L'Hospital's rule we get that:
\begin{equation}
\lim_{q \to 1} S_{q}(f) = S_{1}(f) \simeq \frac{1}{2} {\rm log} \left( \frac{ 2 e \pi x (n-x) } {n^3} \right) 
\end{equation}

The first term in expression above is nothing else but Shannon's differential entropy of Gaussian RV.

\textbf{(b)} In this case when $\phi^{(n)}(p)$ is given in (\ref{weight}) the Tsallis entropy have the form:
$$S^{\phi}_{q}(f) = \frac{1}{q-1} \left(1- \int_{0}^1  \phi^{(n)}(p) \left( f(p) \right)^{q}{\rm d}p \right)$$
Using that $x=\alpha n$ and by Stirling's formula, it was shown above that
$$\displaystyle {\rm log} \left[\int_{0}^1  \phi^{(n)}(p) \left( f(p) \right)^{q}{\rm d}p \right] \simeq \frac{1-q}{2} {\rm log} \frac{2 \pi \alpha (1-\alpha)}{n}  -\frac{{\rm log}(q)}{2} + \frac{(\alpha-\gamma)^2}{2\alpha (1-\alpha)} \left( \frac{1}{q} -1 \right)$$
So we have:
$$V_1=\int_{0}^1  \phi^{(n)}(p) \left( f(p) \right)^{q}{\rm d}p \simeq \frac{1}{\sqrt{q}} \left( \frac{ 2 \pi \alpha (1-\alpha) } {n} \right)^{\frac{1-q}{2}} \exp \left( \frac{(\alpha-\gamma)^2}{2\alpha (1-\alpha)} \left( \frac{1}{q} -1 \right) \right)$$
Weighted Tsallis entropy:
\begin{equation}
S^{\phi}_{q}(f(p)) \simeq  \frac{1}{q-1} \left(1-\frac{1}{\sqrt{q}} \left( \frac{ 2 \pi \alpha (1-\alpha) } {n} \right)^{\frac{1-q}{2}} \exp \left( \frac{(\alpha-\gamma)^2}{2\alpha (1-\alpha)} \left( \frac{1}{q} -1 \right) \right) \right)
\end{equation}
Note that $V_0 \to 1$ when $ q \to 1$, applying L'Hospital's rule we get that:
\begin{equation}
 S^{\phi}_{1}(f) = \lim_{q \to 1} S^{\phi}_{q}(f) \simeq  \frac{1}{2} {\rm {\rm log}}\frac{2 \pi e [\alpha (1-\alpha)]}{n} + \frac{(\alpha-\gamma)^2}{2\alpha (1-\alpha)}.
 \end{equation}
Then the weighted Tsallis entropy tends to weighted Shannon's differential entropy when $q \to 1$. 
\end{proof}
\begin{theorem}
Let $Z^{(n)}$ be a RV with  $f_{\alpha}^{(n)}$ - conditional PDF after $n$ trials given by (\ref{pdf}), when $x = \alpha n$ {\rm (}$0<\alpha<1${\rm )} and $I(f_{\alpha}^{(n)})$ is the weighted Fisher information of $Z^{(n)}$ given in (1.5):

\textbf{(a)} When $\phi^{(n)}(p) = 1,$
\begin{equation}
\lim_{n \rightarrow \infty} \left[I(f_{\alpha}^{(n)})- \left(\frac{1}{\alpha(1-\alpha)} \right) n \right] = - \frac{2\alpha^2 - 2 \alpha +1}{2\alpha^2 (1-\alpha)^2}.
\end{equation}
\label{theorem3}

\textbf{(b)} When $\phi^{(n)}(p)$ is given in (\ref{weight}):
\begin{equation}
 \lim_{n \rightarrow \infty} \left[I^{\phi}(f_{\alpha}^{(n)})- \left(\frac{1}{\alpha(1-\alpha)} + \frac{(\alpha-\gamma)^2}{(1-\alpha)^2 \alpha^2} \right) n - B(\alpha, \gamma)\sqrt{n} \right] = C(\alpha,\gamma),
\end{equation}
where $B(\alpha, \gamma)$ and $C(\alpha,\gamma)$ are constants which depend only on $\alpha$ and $\gamma$ and are given in (\ref{constantB}) and (\ref{constantC}) respectively .

\end{theorem}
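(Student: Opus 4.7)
The plan is to recognize $I(f^{(n)}_\alpha)$ as the Fisher information of the one-parameter family $\{f^{(n)}_\alpha\}_{0<\alpha<1}$, $f^{(n)}_\alpha=\mathrm{Beta}(\alpha n+1,(1-\alpha)n+1)$, with respect to $\alpha$, and then to extract its asymptotics by reducing everything to (tri)gamma identities. Differentiating $\log f^{(n)}_\alpha(p)$ in $\alpha$ produces the score
$$\frac{\partial}{\partial\alpha}\log f^{(n)}_\alpha(p)=n\bigl[Y(p)+\psi((1-\alpha)n+1)-\psi(\alpha n+1)\bigr]=n(Y-\mathbb{E}_{f_\alpha}Y),$$
with $Y(p):=\log\frac{p}{1-p}$, so $I(f^{(n)}_\alpha)=n^{2}\mathrm{Var}_{f_\alpha}(Y)$. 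The Beta identity $\mathrm{Var}_{\mathrm{Beta}(a,b)}(Y)=\psi^{(1)}(a)+\psi^{(1)}(b)$, obtained by differentiating $\log B(a+s,b+t)$ twice at $(s,t)=(0,0)$, then gives $I(f^{(n)}_\alpha)=n^{2}[\psi^{(1)}(\alpha n+1)+\psi^{(1)}((1-\alpha)n+1)]$. Part (a) follows by substituting $\psi^{(1)}(z)=1/z+1/(2z^{2})+O(z^{-3})$ and expanding $1/(\alpha n+1)$ and $1/(\alpha n+1)^{2}$ through $O(n^{-2})$; the sum collapses to $\frac{1}{n\alpha(1-\alpha)}-\frac{\alpha^{2}+(1-\alpha)^{2}}{2n^{2}\alpha^{2}(1-\alpha)^{2}}+O(n^{-3})$, and multiplication by $n^{2}$ yields the announced limit $-(2\alpha^{2}-2\alpha+1)/[2\alpha^{2}(1-\alpha)^{2}]$.

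For part (b), the key structural observation is that the weight (\ref{weight}) with normalizing constant (\ref{lambda}) is engineered precisely so that $\phi^{(n)}(p)f^{(n)}_\alpha(p)=\tilde f(p)$ is once again a Beta density, namely $\tilde f=\mathrm{Beta}(\alpha n+\gamma\sqrt n+1,(1-\alpha)n+(1-\gamma)\sqrt n+1)$. Since the score in (\ref{fisher}) is still $n(Y-\mu_{0})$ with $\mu_{0}=\psi(\alpha n+1)-\psi((1-\alpha)n+1)$, we obtain the decomposition
$$I^{\phi}(f^{(n)}_\alpha)=n^{2}\mathbb{E}_{\tilde f}\bigl[(Y-\mu_{0})^{2}\bigr]=n^{2}\mathrm{Var}_{\tilde f}(Y)+n^{2}(\mu_{1}-\mu_{0})^{2},$$
where $\mu_{1}=\psi(\alpha n+\gamma\sqrt n+1)-\psi((1-\alpha)n+(1-\gamma)\sqrt n+1)$. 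By the same Beta identity, the first term equals $n^{2}[\psi^{(1)}(\alpha n+\gamma\sqrt n+1)+\psi^{(1)}((1-\alpha)n+(1-\gamma)\sqrt n+1)]$ and already reproduces the $n/[\alpha(1-\alpha)]$ contribution. A Taylor expansion $\psi(z+h)=\psi(z)+h\psi^{(1)}(z)+\tfrac{h^{2}}{2}\psi^{(2)}(z)+\cdots$ applied to each of the two pairs inside $\mu_{1}-\mu_{0}$ yields the leading $\mu_{1}-\mu_{0}=(\gamma-\alpha)/[\sqrt n\,\alpha(1-\alpha)]+O(1/n)$, so $n^{2}(\mu_{1}-\mu_{0})^{2}$ contributes the $n(\alpha-\gamma)^{2}/[\alpha^{2}(1-\alpha)^{2}]$ summand. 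Collecting the subleading contributions from both pieces — expanding $\psi$ through $O(z^{-2})$ and $\psi^{(1)}$ through $O(z^{-3})$ — then produces the explicit constants $B(\alpha,\gamma)$ and $C(\alpha,\gamma)$.

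The delicate step will be the systematic bookkeeping in (b): since the claimed asymptotic contains no $1/\sqrt n$ term, the half-integer contributions generated by the $\gamma\sqrt n$ and $(1-\gamma)\sqrt n$ shifts must cancel between $n^{2}\mathrm{Var}_{\tilde f}(Y)$ and $n^{2}(\mu_{1}-\mu_{0})^{2}$. A convenient way to organize the expansion is to set $\varepsilon:=1/\sqrt n$, rewrite each digamma argument in the form $n(\alpha+\gamma\varepsilon)+1$ or $n((1-\alpha)+(1-\gamma)\varepsilon)+1$, expand in $\varepsilon$ through order $\varepsilon^{2}$, and then read off the coefficients of $\varepsilon^{-2}$, $\varepsilon^{-1}$, $\varepsilon^{0}$ in the resulting series for $I^{\phi}(f^{(n)}_\alpha)$; these give respectively the $n$-coefficient $1/[\alpha(1-\alpha)]+(\alpha-\gamma)^{2}/[\alpha^{2}(1-\alpha)^{2}]$, the coefficient $B(\alpha,\gamma)$ of $\sqrt n$, and the overall constant $C(\alpha,\gamma)$, with the vanishing of the $\varepsilon^{1}$ coefficient providing a built-in consistency check on the calculation.
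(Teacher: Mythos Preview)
Your approach is correct and reaches exactly the same closed-form intermediate expression as the paper,
\[
I^{\phi}(f^{(n)}_\alpha)=n^{2}\bigl[\psi^{(1)}(a')+\psi^{(1)}(b')\bigr]+n^{2}\bigl[\psi(a')-\psi(a)-\psi(b')+\psi(b)\bigr]^{2},
\]
with $a=\alpha n+1$, $b=(1-\alpha)n+1$, $a'=a+\gamma\sqrt n$, $b'=b+(1-\gamma)\sqrt n$, after which the asymptotics of $\psi$ and $\psi^{(1)}$ give the stated constants. The paper arrives at this formula by brute force: it expands $(\partial_\alpha\log f)^{2}$ into ten cross-terms, evaluates each as a Beta integral of the form $\int p^{A}(1-p)^{B}(\log p)^{i}(\log(1-p))^{j}\,dp$, and watches nearly everything cancel. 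Your route---writing the score as $n(Y-\mu_{0})$ with $Y=\log\frac{p}{1-p}$, recognizing that $\phi^{(n)} f^{(n)}_\alpha$ is again a Beta density, and then invoking the bias--variance split $\mathbb{E}_{\tilde f}(Y-\mu_{0})^{2}=\mathrm{Var}_{\tilde f}(Y)+(\mu_{1}-\mu_{0})^{2}$ together with the identity $\mathrm{Var}_{\mathrm{Beta}(a,b)}(Y)=\psi^{(1)}(a)+\psi^{(1)}(b)$---is conceptually cleaner and explains structurally why the paper's massive cancellation occurs. Both approaches then finish by the same trigamma/digamma expansions.

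One small correction to your final paragraph: the statement of the theorem does \emph{not} require any $1/\sqrt n$ (i.e.\ $\varepsilon^{1}$) coefficient to vanish. The limit in (b) merely asserts $I^{\phi}=A\,n+B\sqrt n+C+o(1)$; any residual $O(n^{-1/2})$ contribution tends to zero automatically and carries no hidden cancellation or consistency check. The genuine bookkeeping you need is only to track the $\varepsilon^{-2}$, $\varepsilon^{-1}$, and $\varepsilon^{0}$ coefficients, which your decomposition into $n^{2}\mathrm{Var}_{\tilde f}(Y)$ and $n^{2}(\mu_{1}-\mu_{0})^{2}$ organizes nicely.
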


\begin{proof}
\textbf{(a)} The Fisher information in the case $\phi^{(n)}(p) = 1$ and $x=\alpha n$  takes the form:

$$I(\alpha) = \mathbb{E} \left( \left( \frac{{\rm \partial}}{{\rm \partial} \alpha} {\rm log} f(p;\alpha) \right)^2 { \Big | \alpha }  \right)  = \int_0^1 \left( \frac{{\rm \partial}}{{\rm \partial} \alpha} {\rm log} f(p;\alpha) \right)^2 f(p, \alpha) dp,$$

where $f=f_{\alpha}^{(n)}$.  Next,
$$ {\rm log} ( f(p, \alpha)) = \alpha n {\rm log} (p) + (1-\alpha)n {\rm log} (1-p) + {\rm log}(n+1)! - {\rm log}(x!) - {\rm log} ((n-x)!)$$
and
\begin{equation}
 \frac{{\rm \partial}}{{\rm \partial} \alpha} {\rm log} f(p;\alpha) = n {\rm log} (p) - n {\rm log} (1-p) + n \psi(n-x+1) - n \psi(x+1),
 \end{equation}
$\displaystyle \left( \frac{{\rm \partial}}{{\rm \partial}} {\rm log} f(p;\alpha) \right)^2 = n^2 {\rm log}^2(p) + n^2 {\rm log}^2 (1-p) + n^2 \psi^2 (n-x+1) + n^2 \psi^2(x+1) - 2n^2{\rm log}(p) {\rm log} (1-p) + 2n^2 {\rm log}(p) \psi (n-x+1) - 2 n^2 {\rm log}(p) \psi (x+1) - 2 n^2 {\rm log} (1-p) \psi (n-x+1) +2n^2 {\rm log}(1-p) \psi (x+1) - 2n^2 \psi(x+1) \psi(n-x+1)$.

For the following computation of expectation we will need so following integrals:

$\displaystyle  \int_0^1 ({\rm log}(p))^2 p^x (1-p)^{n-x} dp = \frac{\Gamma(n-x+1) \Gamma (x+1)}{\Gamma(n+2)} ( \psi(n+2) - \psi(x+1))^2 - \psi^{(1)}(n+2) + \psi^{(1)}(x+1) $,
where $\Gamma(x)$ is a Gamma function and  $\psi^{(1)}(x)$ is the first derivative of digamma function.

$\displaystyle  \int_0^1 ({\rm log}(1-p))^2 p^x (1-p)^{n-x} dp = \frac{\Gamma(n-x+1) \Gamma (x+1)}{\Gamma(n+2)}( \psi(n+2) - \psi(n-x+1))^2 - \psi^{(1)}(n+2) + \psi^{(1)}(n-x+1)$

$\displaystyle  \int_0^1 {\rm log}(p){\rm log}(1-p) p^x (1-p)^{n-x} dp = \frac{\Gamma(n-x+1) \Gamma (x+1)}{\Gamma(n+2)}( \psi(n+2) - \psi(n-x+1)(\psi(n+2) - \psi(x+1)) - \psi^{(1)}(n+2)$

$\displaystyle  \int_0^1 {\rm log}(p) p^x (1-p)^{n-x} dp = \frac{\Gamma(n-x+1) \Gamma (x+1)}{\Gamma(n+2)}(-\psi(n+2) + \psi(x+1))$

$\displaystyle  \int_0^1 {\rm log}(1-p) p^x (1-p)^{n-x} dp = \frac{\Gamma(n-x+1) \Gamma (x+1)}{\Gamma(n+2)}(-\psi(n+2) + \psi(n-x+1)$

So, we have that

$\displaystyle \int_0^1 \left( \frac{{\rm \partial}}{{\rm \partial}\alpha} {\rm {\rm log}} f(p;\alpha) \right)^2 f(p, \alpha) dp = $

$\displaystyle n^2 (n+1) {n\choose x} \frac{\Gamma(n-x+1) \Gamma (x+1)}{\Gamma(n+2)} ( (\psi(n+2))^2 + (\psi(x+1))^2 - 2 \psi(n+2)\psi(x+1) - \psi^{(1)}(n+2)+  \psi^{(1)}(x+1) + (\psi(n+2))^2 + (\psi(n-x+1))^2 - 2 \psi(n+2) \psi(n-x+1) - \psi^{(1)}(n+2) + \psi^{(1)} (n-x+1) + (\psi(n-x+1))^2 + (\psi(x+1))^2 - 2(\psi(n+2))^2 + 2\psi(n+2) \psi(x+1) + 2 \psi(n-x+1) \psi(n+2) - 2 \psi(n-x+1) \psi(x+1) + 2\psi^{(1)} (n+2) - 2\psi(n-x+1)\psi(n+2) + 2 \psi(n-x+1)\psi(x+1) + 2\psi(x+1)\psi(n+2) - 2(\psi(x+1))^2 - 2(\psi(n-x+1))^2 + 2\psi(n-x+1)\psi(n+2) + \psi(x+1)\psi(n-x+1) -2 \psi(x+1)\psi(n+2) - 2\psi(x+1)\psi(n-x+1)) =  $

$\displaystyle = n^2 (  \psi^{(1)}(x+1) + \psi^{(1)} (n-x+1))$

\begin{equation}
I(\alpha) = n^2 (  \psi^{(1)}(x+1) + \psi^{(1)} (n-x+1)).
\end{equation}

Using the asymptotic for the digamma function we can rewrite:

\begin{equation}
I(\alpha)= \frac{1}{\alpha(1-\alpha)}n - \frac{1}{2} \frac{2\alpha^2 - 2 \alpha +1}{\alpha^2 (1-\alpha)^2} + O\left(\frac{1}{n} \right).
\end{equation}

\begin{remark}
When $x = \alpha n$
$$ \int_0^1 p f^{(n)}_{\alpha} {\rm d} p = \alpha + b_n(\alpha),$$
where $b_n(\alpha)$ is a bias.
$$b_n(\alpha) \simeq \frac{1-2\alpha}{n}$$
Note that $$ \frac{\partial}{\partial \alpha} b_n(\alpha) \simeq - \frac{2}{n} \to 0 $$ as $n \to \infty$. So, our estimate is asymptotically unbiased. Also note that the first term in Theorem \ref{theorem3} has the same form as in the classical problem of estimating $p$ in a series of binary trials $\frac{n}{p (1-p)}$.
\end{remark}
\textbf{(b)}
The weighted Fisher Information in the case $x=\alpha n$ ($0<\alpha<$) takes the following form:
$$I^{\phi}(f) = \mathbb{E} \left( \phi^{(n)}(p) \left( \frac{{\rm \partial}}{{\rm \partial} \alpha} {\rm {\rm log}} f(p;\alpha) \right)^2 { \Big | \alpha }  \right)  = \int_0^1 \phi^{(n)}(p) \left( \frac{{\rm \partial}}{{\rm \partial}\alpha} {\rm {\rm log}} f(p;\alpha) \right)^2 f(p, \alpha) dp$$
where the $\phi^{(n)}(p)$ is given in (\ref{weight}).

The second term under integral $\displaystyle \left( \frac{{\rm \partial}}{{\rm \partial}\alpha} {\rm log} f(p;\alpha) \right)^2$ can be found as before exactly.

Let $\displaystyle W=\frac{\Gamma(n-x+1+\sqrt{n}-z) \Gamma (x+1+z)}{\Gamma(n+2+\sqrt{n})}$.
So in order to compute the weighted Fisher information we will need to compute following integrals.

$\displaystyle  \int_0^1 ({\rm log}(p))^2 p^{z+x} (1-p)^{n-x+\sqrt{n}-z} dp = W( \psi(n+2+\sqrt{n}) - \psi(x+z+1))^2 - \psi^{(1)}(n+2+\sqrt{n}) + \psi^{(1)}(x+z+1)$

$\displaystyle  \int_0^1 ({\rm log}(1-p))^2 p^{z+x} (1-p)^{n-x+\sqrt{n}-z} dp = W( \psi(n+2+\sqrt{n}) - \psi(n-x+1+\sqrt{n}-z))^2 - \psi^{(1)}(n+2+\sqrt{n}) + \psi^{(1)}(n-x+1+\sqrt{n}-z)$

$\displaystyle  \int_0^1 {\rm log}(p){\rm log}(1-p) p^{z+x} (1-p)^{n-x+\sqrt{n}-z} dp = W( \psi(n+2+\sqrt{n}) - \psi(n-x+1+\sqrt{n}-z)(\psi(n+2+\sqrt{n}) - \psi(x+1+z)) - \psi^{(1)}(n+2+\sqrt{n})$

$\displaystyle  \int_0^1 {\rm log}(p) p^{z+x} (1-p)^{n-x+\sqrt{n}-z} dp = W(-\psi(n+2+\sqrt{n}) + \psi(x+1+z)$

$\displaystyle  \int_0^1 {\rm log}(1-p) p^{z+x} (1-p)^{n-x+\sqrt{n}-z} dp = W(-\psi(n+2+\sqrt{n}) + \psi(n-x+1+\sqrt{n}-z)$
 
 Taking all parts together:

$ \displaystyle I^{\phi}(f^{(n)}_{\alpha})= n^2\left(\psi^{(1)}(x+ z+1) + \psi^{(1)}(n-x+1 + \sqrt{n} - z)\right)+$

$\displaystyle + n^2 \left[ \left(\psi (x+z+1) - \psi(x+1)\right)^2 + \left(\psi(n-x+1 + \sqrt{n} -z) - \psi(n-x+1)\right)^2\right]+$

$\displaystyle + 2 n^2 \left[  \left(\psi(n-x+1) - \psi(n-x + \sqrt{n} - z+1)\right)    \left(\psi(x+z+1) - \psi(x+1)\right) \right]$

Using the asymptotic for the digamma function we can rewrite:

\begin{equation}
I(\alpha) = A(\alpha,\gamma)n + B(\alpha, \gamma)\sqrt{n} + C(\alpha,\gamma) + O\left( \frac{1}{\sqrt{n}} \right)
\end{equation}

where

\begin{equation}
 A(\alpha,\gamma)= \frac{1}{\alpha(1-\alpha)} + \frac{(\alpha-\gamma)^2}{(1-\alpha)^2 \alpha^2}
\end{equation}
\begin{equation}
 B(\alpha, \gamma)= \frac{ 2\alpha \gamma - \gamma -\alpha^2}{(1-\alpha)^2\alpha^2}+ \frac{(\alpha-\gamma)^2}{(1-\alpha)^3 \alpha^3}( \alpha(2\gamma-1)-\gamma) 
 \label{constantB}
 \end{equation}
\begin{equation}
\begin{array}{l} \displaystyle C(\alpha,\gamma)= \frac{\alpha - 2\alpha^4 - 2\gamma^2 + 6\alpha \gamma^3 + \alpha^3(2+4\gamma) - 3\alpha^(1+\gamma^2)}{-2(1-\alpha)^3 \alpha^3)}+\\ \displaystyle + \frac{\alpha^4(-31 - 44\gamma + 72 \gamma^2-56\gamma^3 + 28 \gamma^4 + 36\alpha  - 12 \alpha^2)}{12 (1-\alpha)^4 \alpha^4}+ \\ \displaystyle + \frac{6 \alpha^2(\gamma^2-2\gamma^3+12\gamma^4-1) - 4\gamma^3(11\gamma - 44\alpha \gamma-6+3\gamma^2-6\gamma^3+14 \gamma^4)}{12 (1-\alpha)^4 \alpha^4} \end{array}
\label{constantC}
\end{equation}

A r$\hat{{\rm o}}$le of the weight function of form (\ref{weight}) results in appearance of the term of order $\sqrt{n}$, but the main order, $n$, remains the same. However, the coefficient in front of it is higher by $ \displaystyle \frac{(\alpha-\gamma)^2}{(1-\alpha)^2 \alpha^2}$.Evidently, when the frequency  of special interest is equal to the true frequency the leading term is the same as in Fisher Information with constant weight. Also note that the rate depends on the distance between $\gamma$ and $\alpha$ and when $\gamma \to \alpha$ the only first terms remains.
\end{proof}

\section*{Weighted inequalities}
Recall that $\bZ_\theta \in \mathbb{R}^d$ is the family of RV with PDF $f_\theta$ where $\theta \in \Theta \subset \mathbb{R}^m$ is the vector of parameters of PDF $f_\theta$. Let $\phi(\bz,\theta,\gamma)$ be the continuous positive weight function defined in (\ref{weight0}), $ {\rm I}^{\phi}(\theta)$ be the weighted Fisher information $(m \times m)$ matrix given in (\ref{fisher}) and $g(\theta)$ be the weighted expectation given in (\ref{expect}). Let $\mathbb{V}^{\phi}_\theta(\bZ)$ be the weighted covariance matrix of RV $\bZ_\theta$
\begin{equation}
\mathbb{V}^{\phi}_\theta(\bZ) = \mathbb{E}^\phi_\theta \left[(\bZ-e(\theta))(\bZ-e(
\theta))^{\rT} \right].
\label{var}
\end{equation}
We also assume that in (\ref{expect}) and (\ref{norm}) differentiation with respect to the parameters up to order to be considered under the sign of the integration is valid. So, the equality (\ref{integral2}) (and analogous) holds. A sufficient condition for this is that the integrand after the operation of differentiation $\eta(\theta)$ is bounded by an integrable function $\chi$ which does not depend on $\theta$ 
$$|\eta(\theta)| \leq \chi,$$
i.e. the integral converges uniformly in $\theta$. 

In the following sections we consider the special class of weight functions which can be represented in the following form:
\begin{equation}
\phi(\bz,\theta,\gamma) = \frac{1}{\kappa(\theta,\gamma)}\tilde{\phi}(\bz,\gamma).
\label{weight0}
\end{equation}
Here $\kappa(\theta,\gamma) \in {\rm \textbf{C}}^{k}$ where ${\rm \textbf{C}}^k$ is the family of function  with continuous derivatives up to order $k$ ($k$ will be specified below), and $\kappa(\theta,\gamma)$ is found from the normalizing condition
\begin{equation}
\int_{\mathbb{R}^d} \phi(\bz,\theta,\gamma) f(\bz){\rm d}\bz = 1
\label{norm}
\end{equation}
as before. Note that the condition (\ref{norm}) can be rewritten in the following form
\begin{equation}
\int_{\mathbb{R}^d} \tilde{\phi}(\bz,\gamma) f(\theta,\bz) {\rm d}\bz =\kappa(\theta,\gamma)
\label{norm2}
\end{equation}
where $\tilde{\phi}(\bz,\gamma)$ is a function that have a sharp peak at the point $\gamma$ and does not depend on $\theta$.

In the Bayesian framework we consider RV $Z_\alpha^{(n)}$ with a PDF $f^{(n)}=f^{(n)}_\alpha$ given in (\ref{pdf}) assuming that $x=x(n)=\lfloor \alpha n \rfloor$, considered in the Bayesian problem stated above \cite{Mozg2015_1}. The explicit asymptotic  expansions for lower bound are obtained in cases of the following weight functions:
\begin{equation}
\phi_1^{(n)}(p) = \frac{1}{\kappa_1(\alpha,\gamma)} p^{\gamma} (1-p)^{1-\gamma},
\label{weight01}
\end{equation}
\begin{equation}
\phi_2^{(n)}(p) = \frac{1}{\kappa_2(\alpha,\gamma)} p^{\gamma \sqrt{n}} (1-p)^{(1-\gamma)\sqrt{n}},
\label{weight02}
\end{equation}
\begin{equation}
\phi_3^{(n)}(p) = \frac{1}{\kappa_3(\alpha,\gamma)} p^{\gamma n} (1-p)^{(1-\gamma)n}
\label{weight03}
\end{equation}
where ${\kappa_i(\alpha,\gamma)}$, $i=1,2,3$ are found from the condition (\ref{norm}).

Denote the partial derivative of order $j$
$$f^{(j)} = \frac{ \partial^{j} f}{\partial \theta^j}.$$

Recall $\psi^{(0)}(x) = \psi(x)$ and by $\psi^{(1)}(x)$ the digamma function and its first derivative respectively
\begin{equation}  \psi^{(n)}(x) = \frac{{\rm d^{n+1}}}{{\rm d}x^{n+1}} {\rm log} 
\left( \Gamma(x) \right)
\end{equation}
where $\Gamma(x)$ is the Gamma-function.
In further calculations the asymptotic of these functions  for $x \to \infty$ will be used \cite[\#8.362.2]{rg2007}
\begin{equation} \displaystyle \psi(x) = {\rm log}(x) - \frac{1}{2x} + O\left( \frac{1}{x^2} \right) \ {\rm as} \ x \to \infty,
\label{digamma1}
\end{equation}
\begin{equation} \displaystyle \psi^{(1)}(x) = \frac{1}{x} + \frac{1}{2x^2} + O \left( \frac{1}{x^3} \right) \ {\rm as} \ x \to \infty.
\label{digamma2}
\end{equation}

\section{Weighted Rao-Cram\'er inequality}
\begin{theorem}\textbf{(Weighted Rao-Cram\'er inequality).} Assume that 
\begin{equation}
\frac{\partial g (\theta)}{\partial \theta} = \int_{\mathbb{R}^d} \bz \frac{\partial }{\partial \theta} \left[ f_\theta(\bz) \phi(\bz,\theta,\gamma) \right] {\rm d}\bz. 
\label{integral2}
\end{equation}
Note that (\ref{integral2}) holds if integral in its RHS converges uniformly in $\theta$.
 Then the following inequality for weighted covariance matrix $\mathbb{V}^{\phi}_\theta(\bZ)$ holds
\begin{equation}
\mathbb{V}^{\phi}_\theta(\bZ) \geq \left(\frac{\partial g(\theta)}{\partial \theta} - \frac{\kappa^{\prime}(\theta,\gamma)}{\kappa(\theta,\gamma)}\left(e(\theta) - g(\theta)\right)\right){\rm I}^{\phi}(\theta)^{-1}\left(\frac{\partial g(\theta)}{\partial \theta} - \frac{\kappa^{\prime}(\theta,\gamma)}{\kappa(\theta,\gamma)}\left(e(\theta) - g(\theta)\right)\right)^{\rT}.
\label{rao}
\end{equation}
\end{theorem}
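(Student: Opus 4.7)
The plan is to derive (\ref{rao}) from the matrix Cauchy--Schwarz (covariance) inequality applied to the centred observation $\bZ-e(\theta)$ and the score $S(\bZ)=\partial_{\theta}\log f_\theta(\bZ)$ under the weighted measure $\phi(\bz,\theta,\gamma) f_\theta(\bz)\,{\rm d}\bz$. The abstract inequality reads: for any $d$-vector $X$ and $m$-vector $Y$, and any $d\times m$ matrix $A$, $\mathbb{E}^\phi[(X-AY)(X-AY)^{\rT}]\succeq 0$; minimising in $A$ via the choice $A=\mathbb{E}^\phi[XY^{\rT}]\bigl(\mathbb{E}^\phi[YY^{\rT}]\bigr)^{-1}$ yields
$$\mathbb{E}^\phi[XX^{\rT}]\succeq \mathbb{E}^\phi[XY^{\rT}]\bigl(\mathbb{E}^\phi[YY^{\rT}]\bigr)^{-1}\mathbb{E}^\phi[YX^{\rT}].$$
With $X=\bZ-e(\theta)$ and $Y=S$, the left side is $\mathbb{V}^{\phi}_\theta(\bZ)$ by (\ref{var}) and the middle factor is ${\rm I}^{\phi}(\theta)^{-1}$ by (\ref{fisher}). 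The whole task is therefore to identify the cross-term $\mathbb{E}^\phi[(\bZ-e)S^{\rT}]$ with the matrix $\tfrac{\partial g}{\partial \theta}-\tfrac{\kappa'}{\kappa}(e-g)$ appearing in the statement.

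The identification comes from differentiating (\ref{expect}) and (\ref{norm2}). Writing $\phi=\tilde\phi(\bz,\gamma)/\kappa(\theta,\gamma)$ gives $\partial_{\theta}\phi = -(\kappa'/\kappa)\phi$, so the hypothesis (\ref{integral2}) together with the product rule yields
$$\frac{\partial g}{\partial\theta}=\int_{\mathbb{R}^d} \bz\,\phi f_\theta\bigl(S-\kappa'/\kappa\bigr)^{\rT}\,{\rm d}\bz=\mathbb{E}^\phi\!\left[\bZ(S-\kappa'/\kappa)^{\rT}\right].$$
The analogous differentiation of the normalisation (\ref{norm}) gives $\mathbb{E}^\phi[S-\kappa'/\kappa]=0$, i.e.\ $\mathbb{E}^\phi[S]=\kappa'/\kappa$. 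Using $\mathbb{E}^\phi[\bZ]=g$ and $\mathbb{E}^\phi[S]=\kappa'/\kappa$ to expand,
$$\mathbb{E}^\phi\!\left[(\bZ-e)S^{\rT}\right]=\mathbb{E}^\phi\!\left[\bZ(S-\kappa'/\kappa)^{\rT}\right]+(g-e)(\kappa'/\kappa)^{\rT}=\frac{\partial g}{\partial\theta}-(e-g)(\kappa'/\kappa)^{\rT},$$
which, under the natural reading of $\tfrac{\kappa'}{\kappa}(e-g)$ in (\ref{rao}) as the $d\times m$ outer product $(e-g)(\kappa'/\kappa)^{\rT}$, is exactly the factor required.

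Substituting this identity into the covariance inequality gives (\ref{rao}) at once. The main obstacle is notational rather than conceptual: the weighted covariance $\mathbb{V}^{\phi}_\theta(\bZ)$ is centred at the \emph{classical} mean $e(\theta)$ while the weighted mean of $\bZ$ is $g(\theta)$, and the discrepancy $e-g$ is precisely what produces the correction $(\kappa'/\kappa)(e-g)$; keeping track of this double centring, of the orientation of $\kappa'/\kappa\in\mathbb{R}^m$ as a column of partials, and of the product-rule term generated by the $\theta$-dependence of $\phi$ through $1/\kappa(\theta,\gamma)$ are the only subtleties. Regularity for both differentiations under the integral is granted by (\ref{integral2}) together with the uniform-convergence assumption imposed after (\ref{var}); if ${\rm I}^{\phi}(\theta)$ is singular the argument goes through with the Moore--Penrose pseudoinverse, as already noted in the introduction.
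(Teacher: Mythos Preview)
Your proof is correct and follows essentially the same route as the paper: differentiate the weighted mean $g(\theta)$ and the normalisation, combine them to identify $\int(\bz-e)\phi\,\partial_\theta f_\theta\,{\rm d}\bz=\partial g/\partial\theta-(\kappa'/\kappa)(e-g)$, and then apply Cauchy--Schwarz under the weighted measure. Your presentation is slightly cleaner in packaging the final step as the matrix covariance inequality with $X=\bZ-e$ and $Y=S$, whereas the paper phrases the same step as ``multiply and divide by $\sqrt{f_\theta}$, multiply by the conjugate vector and apply Cauchy--Schwarz''; the underlying argument is identical.
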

\begin{proof}
Consider the following integral
\begin{equation}
 g(\theta) \equiv \int_{\mathbb{R}^d} \bz \phi(\bz,\theta,\gamma) f_\theta(\theta,\bz) {\rm d} \bz.
 \label{integral}
 \end{equation}
Differentiating both sides in (\ref{integral}) and in (\ref{norm2}) with respect to $\theta$ and multiplying the latter one by $e(\theta)$ defined in (\ref{expectclasic}) 
\begin{equation}\int_{0}^{1} \bz \phi(\bz,\theta,\gamma)  \frac{\partial f_\theta}{\partial \theta} {\rm d} \bz -  \frac{\kappa^{\prime}(\theta,\gamma)}{\kappa^2(\theta,\gamma)} \int_0^1 \bz {\tilde \phi}(\bz, \gamma) f_\theta(\theta,\bz) {\rm d}\bz=  \frac{\partial g(\theta)}{\partial \theta},
\label{inte1}\end{equation}
\begin{equation}e(\theta) \int_{0}^{1} \phi(\bz,\theta,\gamma) \frac{\partial f_\theta}{\partial \theta} {\rm d} \bz  = \frac{\kappa^{\prime}(\theta,\gamma)}{\kappa(\theta,\gamma)}e(\theta).
\label{inte2}
\end{equation}
Subtracting (\ref{inte1}) from (\ref{inte2}),
$$\int_{0}^{1} (\bz-e(\theta)) \phi(\bz,\theta,\gamma)  \frac{\partial f_\theta}{\partial \theta} dp =\frac{\partial g(\theta)}{\partial \theta} - \frac{\kappa^{\prime}(\theta,\gamma)}{\kappa(\theta,\gamma)}\left(e(\theta) - g(\theta)\right).$$
Multiplying and dividing by $\sqrt{f_\theta}$, multiplying by conjugate vector and applying Cauchy-Schwarz inequality we get 
\begin{equation}
\mathbb{V}^{\phi}_\theta(\bZ) \geq \left(\frac{\partial g(\theta)}{\partial \theta} - \frac{\kappa^{\prime}(\theta,\gamma)}{\kappa(\theta,\gamma)}\left(e(\theta) - g(\theta)\right)\right){\rm I}^{\phi}(\theta)^{-1}\left(\frac{\partial g(\theta)}{\partial \theta} - \frac{\kappa^{\prime}(\theta,\gamma)}{\kappa(\theta,\gamma)}\left(e(\theta) - g(\theta)\right)\right)^{\rT}
\end{equation}
where $I^{\phi}=(\theta)$ is the ($m \times m$) Fisher Information matrix defined in (\ref{fisher}).

\end{proof}

\begin{theorem}
Let $Z_\alpha^{(n)}$ be a RV with a PDF $f^{(n)}_\alpha$ given in (\ref{pdf}) assuming that $x= \lfloor \alpha n \rfloor$ where $0<\alpha<1$. Then

\noindent \textbf{(a)} When weight function  $\phi(p)=\phi_1(p)$ is  given in (\ref{weight01})
\begin{equation}
\mathbb{V}^{\phi_1}(Z_\alpha) \geq \frac{\alpha (1-\alpha)}{n} + \frac{1-14\alpha + 18 \alpha^2 + 2 \gamma -8\alpha \gamma +2 \gamma^2}{2n^2} + O \left( \frac{1}{n^{5/2}} \right).
\end{equation}
\textbf{(b)} When weight function $\phi(p)=\phi_2(p)$ is  given in (\ref{weight02})
\begin{equation}
\mathbb{V}^{\phi_2}(Z_\alpha) \geq \frac{\alpha (1-\alpha) + (\alpha - \gamma)^2}{n} + \frac{-2 \alpha + \alpha^2 + \gamma + 2 \alpha \gamma - 2 \gamma^2}{n^{3/2}} + O \left( \frac{1}{n^{2}} \right).
\end{equation}
\textbf{(c)} When weight function $\phi(p)=\phi_3(p)$ is  given in (\ref{weight03})
\begin{equation}
\mathbb{V}^{\phi_3}(Z_\alpha) \geq \frac{(\alpha - \gamma)^2}{4} + C_3(\alpha,\gamma)\frac{1}{n} + O\left(\frac{1}{n^{3/2}} \right)
\end{equation}
where $C_3$ is a constant which depends only on $\alpha$ and $\gamma$ and given explicitly in (\ref{constant3}).
\end{theorem}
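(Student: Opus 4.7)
I would apply the weighted Rao--Cram\'er inequality from the preceding theorem to each of the three weight functions $\phi_1,\phi_2,\phi_3$, with $\theta=\alpha$ and $f_\alpha^{(n)}(p)=(n{+}1)\binom{n}{x}p^{x}(1-p)^{n-x}$ where $x=\lfloor\alpha n\rfloor$. Concretely, for each $i$ I must explicitly compute the five ingredients entering the bound: the normalizer $\kappa_i(\alpha,\gamma)$ (from (\ref{norm})), its logarithmic derivative $\kappa_i^{\prime}(\alpha,\gamma)/\kappa_i(\alpha,\gamma)$, the weighted mean $g(\alpha)$ (from (\ref{expect})), its derivative $\partial g/\partial\alpha$, and the weighted Fisher information $I^{\phi_i}(\alpha)$; then combine them in (\ref{rao}) and expand asymptotically.

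The starting observation is that $\phi_i(p)f_\alpha^{(n)}(p)$ is itself proportional to a Beta density $B(x{+}a_i{+}1,n{-}x{+}b_i{+}1)$ with $(a_1,b_1)=(\gamma,1{-}\gamma)$, $(a_2,b_2)=(\gamma\sqrt{n},(1{-}\gamma)\sqrt{n})$, $(a_3,b_3)=(\gamma n,(1{-}\gamma)n)$. Hence
$$\kappa_i(\alpha,\gamma)=(n{+}1)\binom{n}{x}\frac{\Gamma(x{+}a_i{+}1)\Gamma(n{-}x{+}b_i{+}1)}{\Gamma(n{+}a_i{+}b_i{+}2)},$$
and $g(\alpha)$ is the mean of this tilted Beta, namely $(x{+}a_i{+}1)/(n{+}a_i{+}b_i{+}2)$. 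Differentiation in $\alpha$ (treating $x=\alpha n$ as a smooth proxy for $\lfloor\alpha n\rfloor$, a legitimate step under the assumption stated before (\ref{weight0})) turns $\log\kappa_i$ and $\log g$ into linear combinations of digamma values $\psi(x{+}\cdot)$ and $\psi(n{-}x{+}\cdot)$. The weighted Fisher information will be obtained by the same computation as in Theorem~6(b), adapted to the exponents $(a_i,b_i)$: the result again reduces, via the five Beta--log integrals used there, to a sum of $\psi^{(1)}$ terms plus squared differences of $\psi$'s.

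Once exact formulas in terms of $\psi$ and $\psi^{(1)}$ are in hand, the asymptotic expansions (\ref{digamma1})--(\ref{digamma2}) finish the job. The three regimes differ by the size of $a_i+b_i$ relative to $n$, which controls the dominant balance: for $\phi_1$ the tilt is $O(1)$, so $I^{\phi_1}\sim n/[\alpha(1{-}\alpha)]$ and the bound is the classical one $\alpha(1{-}\alpha)/n$ plus $O(1/n^2)$ corrections; for $\phi_2$ the tilt is $O(\sqrt{n})$ and the Fisher information picks up the extra coefficient $(\alpha-\gamma)^2/[\alpha(1-\alpha)]^2$ already seen in Theorem~6(b), which after division into $[\partial g/\partial\alpha-(\kappa'/\kappa)(e-g)]^2$ produces the leading $[\alpha(1{-}\alpha)+(\alpha-\gamma)^2]/n$; for $\phi_3$ the tilted Beta concentrates near $\gamma$ rather than $\alpha$, forcing $g(\alpha)\to\gamma$ and $(e-g)\to\alpha-\gamma=O(1)$, which explains the $O(1)$ leading term $(\alpha-\gamma)^2/4$ (the factor $1/4$ coming from the ratio of tilt-to-Fisher scales in the $O(n)$ regime).

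The main obstacle is bookkeeping: in each case the right-hand side of (\ref{rao}) is a difference of quantities that partially cancel, so the order I need to retain in each ingredient is higher than the order of the final answer. I would systematically Taylor-expand $\psi$ and $\psi^{(1)}$ at arguments $\alpha n$, $(1{-}\alpha)n$, $n$, $\sqrt{n}$, etc., keep two extra orders beyond the target, and only then substitute into the bound. The constant $C_3(\alpha,\gamma)$ in case (c) is obtained from the first subleading correction in the $\phi_3$ case; its explicit form (equation (\ref{constant3}) referenced in the statement) arises by collecting all $1/n$ contributions from $\kappa_3'/\kappa_3$, $\partial g/\partial\alpha$, $(e-g)$, and $I^{\phi_3}(\alpha)^{-1}$ in the Rao--Cram\'er product.
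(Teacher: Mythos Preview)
Your plan is essentially identical to the paper's proof: for each $\phi_i$ the paper computes $\kappa_i$, $\kappa_i'/\kappa_i$, $g(\alpha)$, $g'(\alpha)$, $e(\alpha)$ and $I^{\phi_i}(\alpha)$ exactly as Beta/digamma expressions, substitutes into (\ref{rao}), and then expands via (\ref{digamma1})--(\ref{digamma2}). Your identification of the tilted Beta structure and the five log-integrals is precisely how the paper organizes the computation.

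One correction to your heuristic for case (c): with $(a_3,b_3)=(\gamma n,(1-\gamma)n)$ the tilted Beta is $B(\alpha n+\gamma n+1,\,(1-\alpha)n+(1-\gamma)n+1)$, which concentrates at $(\alpha+\gamma)/2$, not at $\gamma$. Hence $g(\alpha)\to(\alpha+\gamma)/2$ and $e(\alpha)-g(\alpha)\to(\alpha-\gamma)/2$, not $\alpha-\gamma$. The factor $1/4$ then arises because the dominant balance in (\ref{rao}) is
\[
\frac{\bigl[(\kappa_3'/\kappa_3)(e-g)\bigr]^2}{I^{\phi_3}}\;\sim\;\frac{\bigl[nL\cdot\tfrac{\alpha-\gamma}{2}\bigr]^2}{L^2 n^2}\;=\;\frac{(\alpha-\gamma)^2}{4},
\qquad L=\log\frac{(1-\alpha)(\alpha+\gamma)}{\alpha(2-\alpha-\gamma)},
\]
since in this regime $I^{\phi_3}\sim L^2 n^2$ (not $O(n)$) and $\kappa_3'/\kappa_3\sim nL$. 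Keep this in mind when you do the bookkeeping, or your leading term will be off by a factor of $4$.
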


\begin{proof}
\textbf{\textit{(a)}}
Consider the weight function
\begin{equation}
\phi_1^{(n)}(p) = \frac{1}{\kappa_1(\alpha,\gamma)} p^{\gamma} (1-p)^{1-\gamma}
\label{weight1}
\end{equation}
where $\kappa_1(\alpha,\gamma)$ is found from the normalizing condition (\ref{norm}). Thus,
$$ \frac{1}{\kappa_1(\alpha,\gamma)}=\frac{\Gamma{(x+1)}\Gamma{(n-x+1)}\Gamma{(n+3)}}{\Gamma{(x+\gamma+1)}\Gamma{(n-x+2 - \gamma)} \Gamma{(n+2)}}.
$$
Note that the normalizing constant depends on $n$, but the remainder does not contain $n$ and $\alpha$.
For a given weight function (\ref{weight1}) the Fisher information equals:
\begin{equation}
 \begin{array}{l} 
\displaystyle I^{\phi_1}(f^{(n)}_{\alpha})= n^2\left(\psi^{(1)}(x+ \gamma +1) + \psi^{(1)}(n-x+1 +1 - \gamma)\right)+\\ \displaystyle + n^2 \left[ \left(\psi (x+\gamma+1) - \psi(x+1)\right)^2 + \left(\psi(n-x+1 +1 -\gamma) - \psi(n-x+1)\right)^2\right]+ \\ \displaystyle  2 n^2 \left[  \left(\psi(n-x+1) - \psi(n-x + 1 - \gamma +1)\right)    \left(\psi(x+\gamma+1) - \psi(x+1)\right) \right].
\end{array}
\label{information1}
\end{equation}
For the weight function (\ref{weight1}), integral in  (\ref{integral}) can be found explicitly
\begin{equation}
 \begin{array}{c}  \displaystyle \int_{0}^{1} p \phi_1^{(n)} f^{(n)}_\alpha {\rm d} p  = \frac{\Gamma(n+3)}{\Gamma(x+\gamma + 1) \Gamma(n-x-\gamma +2)} \int_0^1 p^{x+\gamma+1}(1-p)^{n-x+1-\gamma} {\rm d} p\\  \\ = \displaystyle  \frac{\Gamma(n+3) \Gamma(x+\gamma + 2)}{\Gamma(n+4) \Gamma(x+\gamma + 1)} =g_1(\alpha). \end{array}
 \label{gfunction}
 \end{equation}
 Then 
 \begin{equation}
 \frac{\partial g_1(\alpha)}{\partial \alpha} =  n\frac{\Gamma(n+3) \Gamma(x+\gamma + 2)}{\Gamma(n+4) \Gamma(x+\gamma + 1)}\left(\psi(x+\gamma+2) - \psi(x+\gamma+1) \right).
 \label{gder}
 \end{equation}
 Differentiating $\kappa(\alpha,\gamma)$ we obtain that:
 \begin{equation}
 \frac{\kappa_1^{\prime}(\alpha,\gamma)}{\kappa_1(\alpha,\gamma)} = n \left( \psi(n-x+1) - \psi(n-x+1-\gamma+1) + \psi(x+\gamma + 1) - \psi(x+1) \right).
 \label{kap1}
 \end{equation}
 Also  \begin{equation}
e(\alpha) = \frac{\Gamma(n+2)}{\Gamma(x+ 1) \Gamma(n-x+1)} \int_0^1 p^{x+1}(1-p)^{n-x} {\rm d} p= \frac{\Gamma(n+2) \Gamma(x + 2)}{\Gamma(n+3) \Gamma(x + 1)}.
 \label{pstar}
 \end{equation}
Plugging in (\ref{information1}),(\ref{gfunction}),(\ref{gder}),(\ref{kap1}) and (\ref{pstar}) in (\ref{rao}) we get 
\begin{equation}
\mathbb{V}^{\phi_1}(Z^{(n)}_\alpha) \geq \frac{\alpha (1-\alpha)}{n} + \frac{1-14\alpha + 18 \alpha^2 + 2 \gamma -8\alpha \gamma +2 \gamma^2}{2n^2} + O \left( \frac{1}{n^{5/2}} \right).
\end{equation}
\textbf{\textit{(b)}}
Consider the weight function
\begin{equation}
\phi_2^{(n)}(p) = \frac{1}{\kappa_2(\alpha,\gamma)} p^{\gamma \sqrt{n}} (1-p)^{(1-\gamma)\sqrt{n}}
\label{weight2}
\end{equation}
where $\kappa_2(\alpha,\gamma)$ is found from the normalizing condition (\ref{norm}),
$$\frac{1}{\kappa_2(\alpha,\gamma)}=\frac{\Gamma{(x+1)}\Gamma{(n-x+1)}\Gamma{(n+2+\sqrt{n})}}{\Gamma{(x+\gamma \sqrt{n}+1)}\Gamma{(n-x+1+ \sqrt{n} - \gamma \sqrt{n})} \Gamma{(n+2)}}.
$$
Note that the normalizing constant depends on $n$ as well as the remainder.
For a given weight function (\ref{weight2}) the Fisher information equals:
\begin{equation}
 \begin{array}{l} 
\displaystyle I^{\phi_2}(f^{(n)}_{\alpha})= n^2\left(\psi^{(1)}(x+ z+1) + \psi^{(1)}(n-x+1 + \sqrt{n} - z)\right)+\\ \displaystyle + n^2 \left[ \left(\psi (x+z+1) - \psi(x+1)\right)^2 + \left(\psi(n-x+1 + \sqrt{n} -z) - \psi(n-x+1)\right)^2\right]+ \\ \displaystyle  2 n^2 \left[  \left(\psi(n-x+1) - \psi(n-x + \sqrt{n} - z+1)\right)    \left(\psi(x+z+1) - \psi(x+1)\right) \right]
\end{array}
\label{information2}
\end{equation}
where $z=\gamma \sqrt{n}$.

For the weight function (\ref{weight2}), integral in  (\ref{integral}) equals
\begin{equation}
 \begin{array}{c}  \displaystyle \int_{0}^{1} p \phi_2^{(n)} f^{(n)}_\alpha{\rm d} p  = \displaystyle  \frac{\Gamma(n+\sqrt{n}+2) \Gamma(x+\gamma\sqrt{n} + 2)}{\Gamma(n+\sqrt{n}+3) \Gamma(x+\gamma\sqrt{n} + 1)} =g_2(\alpha). \end{array}
 \label{gfunction2}
 \end{equation}
 Then 
 \begin{equation}
 \frac{\partial g_2(\alpha)}{\partial \alpha} =  n\frac{\Gamma(n+\sqrt{n}+2) \Gamma(x+\gamma\sqrt{n} + 2)}{\Gamma(n+\sqrt{n}+3) \Gamma(x+\gamma\sqrt{n} + 1)}\left(\psi(x+\gamma\sqrt{n}+2) - \psi(x+\gamma\sqrt{n}+1) \right).
 \label{gder2}
 \end{equation}
 Differentiating $\kappa_2(\alpha,\gamma)$ we obtain
 \begin{equation}
 \frac{\kappa_2^{\prime}(\alpha,\gamma)}{\kappa_2(\alpha,\gamma)} = n \left( \psi(n-x+1) - \psi(n-x+1-\gamma\sqrt{n}+1) + \psi(x+\gamma\sqrt{n} + 1) - \psi(x+1) \right).
 \label{kap2}
 \end{equation}
Plugging in (\ref{information2}),(\ref{gfunction2}),(\ref{gder2}),(\ref{kap2}) and (\ref{pstar}) in (\ref{rao}) we get
\begin{equation}
\mathbb{V}^{\phi_2}(Z^{(n)}_\alpha) \geq \frac{\alpha (1-\alpha) + (\alpha - \gamma)^2}{n} + \frac{-2 \alpha + \alpha^2 + \gamma + 2 \alpha \gamma - 2 \gamma^2}{n^{3/2}} + O \left( \frac{1}{n^{2}} \right).
\label{varw2}
\end{equation}

\textbf{\textit{(c)}}
Consider the weight function
\begin{equation}
\phi_3^{(n)}(p) = \frac{1}{\kappa_3(\alpha,\gamma)} p^{\gamma n} (1-p)^{(1-\gamma)n}
\label{weight3}
\end{equation}
where $\kappa_3(\alpha,\gamma)$ is found from the normalizing condition (\ref{norm}):
$$ \frac{1}{\kappa_3(\alpha,\gamma)}=\frac{\Gamma{(x+1)}\Gamma{(n-x+1)}\Gamma{(2n+2)}}{\Gamma{(x+\gamma n+1)}\Gamma{(2n-x+1- \gamma n)} \Gamma{(n+2)}}.
$$
Note that the normalizing constant depends on $n$ as well as the remainder.
Let $y=\gamma n$ then the Fisher Information in this case equals:
\begin{equation}
 \begin{array}{l} 
\displaystyle I^{\phi_3}(f^{(n)}_{\alpha})= n^2\left(\psi^{(1)}(x+ y+1) + \psi^{(1)}(2n-x+1  - y)\right)+\\ \displaystyle + n^2 \left[ \left(\psi (x+y+1) - \psi(x+1)\right)^2 + \left(\psi(2n-x+1 -y) - \psi(n-x+1)\right)^2\right]+ \\ \displaystyle  2 n^2 \left[  \left(\psi(n-x+1) - \psi(2n-x - y+1)\right)    \left(\psi(x+y+1) - \psi(x+1)\right) \right].
\end{array}
\label{information3}
\end{equation}
Note that unlike two cases above the differences in brackets do not tend to zero, i.e.,
$$\psi(x+y+1) - \psi(x+1) = {\rm log} \left( \frac{\alpha + \gamma}{\alpha} \right) - \frac{\gamma}{2\alpha(\alpha + \gamma) n} + O \left( \frac{1}{n^2} \right).$$
Using (\ref{digamma1}) and (\ref{digamma2}), we obtain
\begin{equation}
 \begin{array}{l} 
\displaystyle I^{\phi_3}(f^{(n)}_{\alpha}) = \left( {\rm log} \frac{(1-\alpha)(\alpha+\gamma)}{\alpha(2-\alpha-\gamma)} \right)^2 n^2 + C_1(\alpha,\gamma)n + C_2(\alpha,\gamma) + O \left(\frac{1}{n} \right)
\end{array}
\label{information03}
\end{equation}
where $C_1(\alpha,\gamma)$ and $C_2(\alpha,\gamma)$ are constants that depend on $\alpha$ and $\gamma$ and can be found explicitly 
\begin{equation}
\begin{array}{l}
\displaystyle C_1= \frac{1}{\alpha+\gamma}+\frac{1}{2-\alpha-\gamma} - \log \frac{\alpha(2-\alpha-\gamma)}{(1-\alpha)(\alpha+\gamma)} \left( \frac{\gamma}{\alpha(\alpha+\gamma)} - \frac{1-\gamma}{(1-\alpha)(2-\alpha-\gamma)} \right),
\end{array}
\label{constant1}
\end{equation}
\begin{equation}
\begin{array}{l}
\displaystyle C_2= \left( \frac{1}{6 (-1 + \alpha)^2} -\frac{1}{6 (-2 + \alpha + \gamma)^2} - \frac{c}{3 \alpha (\alpha + \gamma)^2} - \frac{\gamma^2}{6 \alpha^2 (\alpha + \gamma)^2} \right) \log \frac{\alpha(2-\alpha-\gamma)}{(1-\alpha)(\alpha+\gamma)} + \\ 
\displaystyle \frac{-4 \alpha^6 - 8 \alpha^5 (-2 + \gamma) + (-2 + \gamma)^2 \gamma^2 - 4 \alpha (-2 +\gamma)^2 \gamma^2 + 
 \alpha^4 (-27 + 20 \gamma) + 4 \alpha^3 (6 - 4\gamma - 3\gamma^2 + 2 \gamma^3)}{4 (-1 + \alpha)^2 \alpha^2 (-2 + \alpha +
    \gamma)^2 (\alpha +\gamma)^2} + \\ \displaystyle \frac{ 
 \alpha^2 (-8 + 4\gamma + 22 \gamma^2 - 20\gamma^3 + 4 \gamma^4)}{4 (-1 + \alpha)^2 \alpha^2 (-2 + \alpha +
    \gamma)^2 (\alpha +\gamma)^2}.
\end{array}
\label{constant2}
\end{equation}

Also note that
\begin{equation}
 \begin{array}{c}  \displaystyle g_3(\alpha) = \int_{0}^{1} p \phi^{(n)}_3 f^{(n)}_\alpha {\rm d} p  =  \frac{\Gamma(2n+2) \Gamma (x+y+2)}{\Gamma(2n+3) \Gamma(x+y+1)}  = \\ \displaystyle \frac{\alpha+\gamma}{2} + \frac{1-\alpha-\gamma}{2n} + O \left(\frac{1}{n^2} \right). \end{array}
 \label{gfunction3}
 \end{equation}
 It is easy to see that in this case $g(\alpha)$ has different asymptotic comparing two cases above, so $g(\alpha) - \mathbb{E}(Z_\alpha)$ does not tend to zero as before.
Proceeding with the same computations as before we obtain 
\begin{equation}
\mathbb{V}^{\phi_3}(Z^{(n)}_\alpha) \geq \frac{(\alpha - \gamma)^2}{4} + C_3(\alpha,\gamma)\frac{1}{n} + O\left(\frac{1}{n^{3/2}} \right)
\label{var3}
\end{equation}
where $C_3(\alpha,\gamma)$ is a constant depending on $\alpha$ and $\gamma$ and can be found explicitly 
\begin{equation}
\begin{array} {l}
C_3= \displaystyle \frac{(\alpha-\gamma)}{48 (-1 + \alpha) \alpha (-2 + \alpha + \gamma) (\alpha + \gamma) ( \log(1 - \alpha) - \log(\alpha) - 
    \log(2 - \alpha - \gamma) + \log(\alpha + \gamma))} \times \\ (-48 \alpha^2 + 84 \alpha^3 - 24 \alpha^4 - 72 \alpha \gamma + 132 \alpha^2 \gamma - 72 \alpha^3 \gamma + 24 \gamma^2 - 
 12 \alpha \gamma^2 - 24 \alpha^2 \gamma^2 - 12 \gamma^3 + 24 \alpha \gamma^3 + \\ \displaystyle
    \frac{\left((-1 +\alpha) (39 \alpha^4 - 2 (-2 + \gamma) \gamma^2 - \alpha^3 (50 + 9\gamma) + 
   \alpha^2 (-56 + 146 \gamma - 135 \gamma^2) + \alpha \gamma (-44 + 194 \gamma - 87 \gamma^2)) \right)}{\left(\log(1-\alpha)-\log(\alpha)\right)^{-1}} + \\ \displaystyle \log \frac{\alpha+\gamma}{2-\alpha-\gamma} (56 \alpha^2 - 6 \alpha^3  - 89 \alpha^4  + 
 39 \alpha^5 + 44 \alpha \gamma  - 190 \alpha^2 \gamma  + 
 155 \alpha^3 \gamma  - 9 \alpha^4 \gamma  - 4 \gamma^2 - 
 190 \alpha \gamma^2 + \\ 329 \alpha^2 \gamma^2 - 
 135 \alpha^3 \gamma^2  + 2 \gamma^3 + 85 \alpha \gamma^3  - 
 87 \alpha^2 \gamma^3)  )
     
    \end{array}.
    \label{constant3}
    \end{equation} 
\end{proof}
\section{Weighted Bhattacharyya inequality}

\begin{theorem} \textbf{(Weighted Bhattacharyya inequality, uniparametric case).}\\
\noindent \textbf{(a)}
Let $\theta$ be a scalar parameter, $\tau(\theta)$ be a preassigned scalar function of parameter $\theta$. An unbiased estimator of $\tau(\theta)$ is a scalar function $T(\bZ)$ such that
\begin{equation}
e(\theta)=\mathbb{E}_\theta[T(\bZ)] = \tau(\theta).
\label{func}
\end{equation}
Consider the weight function that satisfies the condition (\ref{norm}). 
Recall
\begin{equation}
g(\theta) \equiv \int_{\mathbb{R}^d} T(\bz)  \phi (\bz, \theta, \gamma) f_\theta(\bz)  {\rm d} \bz.
\label{notation2}
\end{equation}
Assume that integrands in (\ref{notation2}) and (\ref{norm}) converge uniformly in $\theta$ after operation of differentiation up to order $\nu$.  Then the following inequality for the weighted variance of $T$ holds
\begin{equation}
\mathbb{V}^{\phi}_\theta(T) \geq \sum_{i,j=1}^{\nu} \left( g^{(i)}(\theta)-{\rm Q}_1^i + \tau {\rm Q}_2^i \right) \left( g^{(j)}(\theta)-{\rm Q}_1^j + \tau {\rm Q}_2^j \right) {\rm J}^{\phi}_{ij} 
\label{bhuni}
\end{equation}
where $Q_i^j$, $i=1,2$ are given in (\ref{Q1}) and (\ref{Q2}) respectively and $J_{ij}^\phi$ are the elements of the matrix $\mathbb{J}^\phi$ defined in (\ref{Jmat}).

\noindent \textbf{(b)} Consider RV $Z_\alpha^{(n)}$ with PDF $f^{(n)}_\alpha$ given in (\ref{pdf}) with $x=\lfloor \alpha n \rfloor$ where $0<\alpha<1$. When $\nu=2$, $\theta=\alpha$, $T(\bZ)=\bZ_\alpha^{(n)}$  for the weight function
\begin{equation}
\phi^{(n)}(p) = \frac{1}{\kappa_2(\alpha,\gamma)}p^{\gamma \sqrt{n}} (1-p)^{(1-\gamma)\sqrt{n}},
\label{weight2}
\end{equation}
inequality (\ref{bhuni}) takes the following form
\begin{equation}
\mathbb{V}^{\phi}_\theta(Z_\alpha^{(n)}) \geq \frac{C_4}{n} + \frac{C_5}{n^{3/2}}+ O \left( \frac{1}{n^2} \right)
\end{equation}
where $C_4$, $C_5$ are some constants that depend on $\alpha$ and $\gamma$ that given explicitly in (\ref{constant4}).
\end{theorem}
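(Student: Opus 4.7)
My plan for part (a) is to extend the weighted Rao–Cram\'er derivation of Theorem 5 from one to $\nu$ derivatives, invoking a Gram-matrix version of the Cauchy–Schwarz inequality at the end. First I would differentiate the identity $g(\theta) = \int T(\bz) \phi(\bz,\theta,\gamma) f_\theta(\bz)\, {\rm d}\bz$ with respect to $\theta$ a total of $\nu$ times, using the Leibniz rule on the product $\phi f_\theta$. Because of the factored form $\phi(\bz,\theta,\gamma) = \tilde\phi(\bz,\gamma)/\kappa(\theta,\gamma)$, the only $\theta$-dependence in $\phi$ is through $1/\kappa$, so every term is either of the form $\int T \phi \,\partial^i f_\theta\,{\rm d}\bz$ or a "correction" involving derivatives of $\kappa$; the latter are precisely what the symbols $Q_1^i$ in (\ref{Q1}) should collect. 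In parallel I would differentiate the normalization identity (\ref{norm2}) the same number of times and multiply it by $\tau$; this produces the analogous corrections $Q_2^i$ in (\ref{Q2}) that annihilate the remaining boundary contributions when subtracted.

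After these two manipulations each combination $g^{(i)}(\theta) - Q_1^i + \tau Q_2^i$ should collapse to $\int (T(\bz) - \tau)\,\phi(\bz,\theta,\gamma)\,\partial^i f_\theta(\bz)/\partial\theta^i\,{\rm d}\bz$, the natural higher-order weighted analogue of the centered score identity used in the Rao–Cram\'er proof. Writing the integrand as $(T-\tau)\sqrt{\phi f_\theta}$ multiplied by $\sqrt{\phi f_\theta}\cdot (\partial^i f_\theta)/f_\theta$ and applying Cauchy–Schwarz in $L^2(\phi f_\theta\,{\rm d}\bz)$ in its Gram/projection form, one deduces that $\mathbb{V}^\phi_\theta(T)$ dominates the quadratic form in the vectors $g^{(i)}-Q_1^i+\tau Q_2^i$ paired through the inverse of the matrix with entries $\mathbb{E}^{\phi}_\theta\left[(\partial^i f_\theta/f_\theta)(\partial^j f_\theta/f_\theta)\right]$; this is $\mathbb{J}^\phi$ of (\ref{Jmat}), and the inequality (\ref{bhuni}) follows.

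For part (b), I would specialize to $\nu=2$, $T(\bZ)=Z_\alpha^{(n)}$ and $\tau=e(\alpha)$ as given in (\ref{pstar}). The first derivatives $g_2^{\prime}(\alpha)$ and $\kappa_2^{\prime}/\kappa_2$ are already recorded in (\ref{gder2}) and (\ref{kap2}); I would compute their second derivatives, which reduce to digamma and trigamma values at the shifted arguments $x+\gamma\sqrt{n}+1$ and $n-x+(1-\gamma)\sqrt{n}+1$. Next, I would assemble the four entries of $\mathbb{J}^\phi$ from integrals of products of first and second logarithmic derivatives of $f_\alpha^{(n)}$ against $\phi_2^{(n)} f_\alpha^{(n)}$; these are again expressible through Beta functions and digamma/trigamma evaluations, exactly as in the weighted Fisher computation of Theorem 6(b). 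Plugging everything into (\ref{bhuni}) and expanding via (\ref{digamma1})–(\ref{digamma2}) and the trigamma analogues yields the bound $C_4/n + C_5/n^{3/2} + O(n^{-2})$, with the explicit constants in (\ref{constant4}).

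The main obstacle is the algebraic bookkeeping. In part (a) the subtle step is verifying that all Leibniz-generated corrections are absorbed exactly into $Q_1^i$ and $\tau Q_2^i$, so that the reduced integrals $\int(T-\tau)\phi\,\partial^i f_\theta\,{\rm d}\bz$ emerge cleanly before Cauchy–Schwarz is applied. In part (b) the leading $n$- and $\sqrt{n}$-contributions from the three ingredients $g_2^{(i)}$, $\kappa_2^{(i)}/\kappa_2$ and $J^\phi_{ij}$ largely cancel against each other before the constants $C_4$ and $C_5$ can be read off; tracking these cancellations through the $2 \times 2$ matrix inversion is the most error-prone part of the computation.
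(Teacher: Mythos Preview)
Your proposal is correct and matches the paper's approach: for part (a) the paper obtains the key identities $g^{(i)}-Q_1^i+\tau Q_2^i = \int(T-\tau)\,\phi\,\partial^i f_\theta\,{\rm d}\bz$ exactly as you describe, and closes by minimizing the weighted variance of $R_\nu = T-\tau-\sum_i\lambda_i f_\theta^{(i)}/f_\theta$ over the coefficients $\lambda_i$, which is precisely the projection/Gram-matrix form of Cauchy--Schwarz you invoke. Part (b) is carried out in the paper just as you outline, assembling the $2\times2$ matrix $\mathbb{I}^\phi$ and the $Q$-terms from digamma/trigamma asymptotics and reading off $C_4$, $C_5$ after the cancellations.
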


\begin{proof}
\noindent \textbf{(a)} 
Consider the function $R_\nu(\bz,\theta)$:
\begin{equation}
R_\nu(\bZ;\theta) = T(\bZ) - \tau(\theta) - \sum_{i=1}^{\nu} \lambda_i f_\theta^{(i)} f_\theta^{-1}
\label{Rfunction}
\end{equation}
where $\lambda_i$ are undefined parameters. It is easy to note that
\begin{equation}
\mathbb{E}[R_\nu(\bZ;\theta)]=0.
\label{expectation1}
\end{equation}
Consider the weighted variance given in (\ref{weight0}) of $R_\nu$. Because of (\ref{expectation1}) it can be written in the following form
\begin{equation}
\mathbb{V}^{\phi}_\theta(R_\nu) = \int_{\mathbb{R}^d} \left( T(\bz) - \tau(\theta) - \sum_{i=1}^{\nu} \lambda_i f_\theta^{(i)} f_\theta^{-1} \right)^2 \phi(\bz,\theta,\gamma) f_\theta {\rm d} \bz.
\label{varR}
\end{equation}
By the conditions of Theorem the differentiation is justified and leads to the following condition:
\begin{equation}
 \int_{\mathbb{R}^d} \left( T(\bz) - \tau(\theta) - \sum_{i=1}^{\nu} \lambda_i^{\star} f_\theta^{(i)} f_\theta^{-1} \right) \phi f_\theta^{(j)} {\rm d} \bz = 0.
 \label{foc}
\end{equation}
It can be rewritten as
\begin{equation}
\sum_{i=1}^{\nu} \lambda_i^{\star} \int_{\mathbb{R}^d} f_\theta^{(i)} f_\theta^{-1} f_\theta^{(j)} \phi {\rm d} \bz =   \int_{\mathbb{R}^d}  T(\bz)  \phi f_\theta^{(j)} {\rm d} \bz - \tau(\theta) \int_{\mathbb{R}^d}  \phi f_\theta^{(j)} {\rm d} \bz.
\label{foc2}
\end{equation}
Let $\mathbb{I}^\phi_\theta$ be the $\nu \times \nu$ matrix which elements are 
$$ I^{\phi}_{ij} = \int_{\mathbb{R}^d} f_\theta^{(i)} f_\theta^{-1} f_\theta^{(j)} \phi {\rm d} \bz $$ $i,j \leq \nu$. Let
\begin{equation}
\mathbb{J}^\phi_\theta = \left( \mathbb{I}^\phi_\theta \right)^{-1}
\label{Jmat}
\end{equation}
be the inverse $\nu \times \nu$ matrix and elements of this matrix are $J_{ij}^\phi$.

Note that in the case 
$i=j=1$, $I^{\phi}_{11}$ equals to the weighted Fisher information given in  (\ref{fisher}).

Consider integrals in RHS of (\ref{foc2}) separately. Firstly,
$$\int_{\mathbb{R}^d}  T(\bz) \tilde{\phi}  \left( \frac{1}{\kappa(\theta,\gamma)} f_\theta \right)^{(j)}  {\rm d} \bz = g^{(j)}(\theta),$$

$$\int_{\mathbb{R}^d}  T(\bz) \tilde{\phi}  \left[\sum_{k=0}^{j-1} {j\choose k} \left( \frac{1}{\kappa(\theta, \gamma)} \right)^{(j-k)} f_\theta^{(k)} \right] {\rm d} \bz + \int_{\mathbb{R}^d}  T(\bz) \phi f_\theta^{(j)}{\rm d} \bz = g^{(j)}(\theta).$$
Thus,
\begin{equation}
\int_{\mathbb{R}^d}  T(\bz) \phi f_\theta^{(j)}{\rm d} \bz = g^{(j)}(\theta) - {\rm Q}_1^j 
\end{equation}
where
\begin{equation} {\rm Q}_1^j = \int_{\mathbb{R}^d}  T(\bz) \tilde{\phi}  \left[\sum_{k=0}^{j-1} {j\choose k} \left( \frac{1}{\kappa(\theta,\gamma)} \right)^{(j-k)} f_\theta^{(k)} \right] {\rm d} \bz.
\label{Q1}
\end{equation}
In the analogous way from the condition (\ref{norm}) the following equality can be derived:
\begin{equation}
\int_{\mathbb{R}^d}  \phi f_\theta^{(j)}  {\rm d} \bz = - {\rm Q}_2^{j}
\label{Q2}
\end{equation}
where
\begin{equation} {\rm Q}_2^{j} = \int_{\mathbb{R}^d}  \tilde{\phi}  \left[\sum_{k=0}^{j-1} {j\choose k} \left( \frac{1}{\kappa(\theta,\gamma)} \right)^{(j-k)} f_\theta^{(k)} \right] {\rm d} \bz.
\label{Q2}
\end{equation}
So, (\ref{foc2}) takes the form
\begin{equation}
g^{(j)}(\theta) = \sum_{i=1}^{\nu} \lambda_i^{\star} I^{\phi}_{ij} + {\rm Q}_1^j - \tau {\rm Q}_2^j
\end{equation}
and
\begin{equation}
\lambda_i^{\star} = \sum_{j=1}^{\nu} \left( g^{(j)}(\theta)-{\rm Q}_1^j + \tau {\rm Q}_2^j \right)  J^{\phi}_{ij}.
\end{equation}
Thus, we obtain the following equality
\begin{equation}
\mathbb{V}(R^*_{\nu}) = \mathbb{V}^{\phi}_\theta(T) - \sum_{i,j=1}^{\nu} \left( g^{(i)}(\theta)-{\rm Q}_1^i + \tau {\rm Q}_2^i \right) \left( g^{(j)}(\theta)-{\rm Q}_1^j + \tau {\rm Q}_2^j \right) J^{\phi}_{ij}.
\end{equation}
The non-negativity of variance implies the lower bound for weighted variance of $T$ given in (\ref{bhuni}).
\begin{remark} Note that this inequality includes the weighted version of Rao-Cram\'er inequality. It appears when $\tau(\theta)=e(\theta)$, $\theta = \alpha$, $g(\theta) = g(\alpha)$, $T(\bZ)=\bZ$ and $i=j=\nu=1$. In this particular case
$$I^{\phi}_{11} = I^{\phi}(\theta) = \int_{\mathbb{R}^d} (f_\theta^{\prime})^2 f_\theta^{-1} \phi {\rm d} \bz,$$
$$\int_{\mathbb{R}^d}  \phi f_\theta^{(j)} {\rm d} \bz = \frac{\kappa^{\prime}(\theta,\gamma)}{\kappa(\theta,\gamma)}$$
and
$$\int_{\mathbb{R}^d}  T(\bz) \phi f_\theta^{(1)}{\rm d} \bz = g^{\prime}(\theta) + \frac{\kappa^{\prime}(\theta,\gamma)}{\kappa(\theta,\gamma)} g(\theta).$$
Thus, we obtain the inequality given in (\ref{rao}).
\end{remark}

\noindent \textbf{(b)} The lower bound in (\ref{bhuni}) takes the following form:
\begin{equation} \begin{array} {l}
\left( g^{(1)}(\theta)-{\rm Q}_1^1 + \tau {\rm Q}_2^1 \right) \left(J^{\phi}_{12} + J^{\phi}_{21} \right)\left( g^{(2)}(\theta)-{\rm Q}_1^2 + \tau {\rm Q}_2^2 \right) + \\  +  \left( g^{(2)}(\theta)-{\rm Q}_1^2 + \tau {\rm Q}_2^2 \right)^2 J^{\phi}_{22}  +  \left( g^{(1)}(\theta)-{\rm Q}_1^1 + \tau {\rm Q}_2^1 \right)^2 J^{\phi}_{11} 
 \end{array}
 \label{bh}
\end{equation}
where $J^{\phi}_{ij}$ are $ij^{th}$ elements of the matrix $\mathbb{J}^\phi_\theta$ defined in (\ref{Jmat}). Moreover, the asymptotic of $I^{\phi}_{11}$ is given above. Compute the asymptotic of other terms.
$$ I^{\phi}_{12} = \int_{\mathbb{R}} f^{(1)} f^{-1} f^{(2)} \phi {\rm d} \bz = L_1 n^{3/2} + L_2n + L_3 \sqrt{n} + L_4 + O \left( \frac{1}{\sqrt{n}} \right)$$
where $L_i$ $i=1,2,3,4$ are the constants that can be found explicitly and dependent only $\alpha$ and $\gamma$, but have very large construction,
$$ L_1=\frac{\alpha^3+\alpha^2(-2+\gamma) + \alpha (2-3 \gamma) \gamma + \gamma^3}{(1-\alpha)^3 \alpha^3}.$$
$$L_2 = \frac{-2 \alpha^5 - 3 \gamma^4 + \alpha^4 (-3 + 16 c) + 2 \alpha^3 (5 - 17 \gamma + \gamma^2)}{2(1-\alpha)^4 \alpha^4}+$$
 $$+\frac{2 \alpha \gamma^2 (-4 + 3 \gamma + 3 \gamma^2) + \alpha^2 (-2 + 6 \gamma + 24 \gamma^2 - 18 \gamma^3)}{2(1-\alpha)^4 \alpha^4}$$
$$L_3 = \frac{-21 \gamma^5 + 24 \alpha^6 (-1 + 2 \gamma) + \alpha^5 (13 + 24\gamma - 168 \gamma^2)-2 \alpha^2 \gamma^3 (-109 + 72 \gamma + 36 \gamma^2) }{12 (-1 + \alpha)^5 \alpha^5}$$
$$\frac{ \alpha \gamma^3 (-44 + 33 \gamma + 72 \gamma^2) + 
 \alpha^4 (44 - 237 \gamma + 492 \gamma^2 - 48 \gamma^3) + 
 6 \alpha^3 (-2 + 10\gamma - 19 \gamma^2 - 56 \gamma^3 + 36 \gamma^4)}{12 (-1 + \alpha)^5 \alpha^5}$$
 $$L_4 = \frac{16 \alpha^9 - 15 \gamma^6 + \alpha^8(40 \gamma - 92) - 4\alpha^6(-41+14\gamma + 26 \gamma^2) + 2 \alpha \gamma^4 (-12+10\gamma + 35 \gamma^2) }{8(1-\alpha)^6 \alpha^6} +$$
 $$+ \frac{\alpha^6(-161 + 118\gamma + 136\gamma^2 + 152\gamma^3) + \alpha^2 \gamma^2 (24-12\gamma + 157 \gamma^2 - 120 \gamma^4)}{8(1-\alpha)^6 \alpha^6}+$$
$$\frac{2 \alpha^5 (66-158\gamma + 97 \gamma^2 - 308 \gamma^3 + 40 \gamma^4 + \alpha^4(-52+148\gamma + 75 \gamma^2 + 160\gamma^3 + 400 \gamma^4 - 240 \gamma^5)}{8(1-\alpha)^6 \alpha^6}+$$
$$ + \frac{ 4 \alpha^3(2-6\gamma - 25\gamma^2+4\gamma^3-97\gamma^4+60\gamma^4+20\gamma^6)}{8(1-\alpha)^6 \alpha^6}$$

The asymptotic of  $I^{\phi}_{22}$ takes the form
$$I^{\phi}_{22} = \int_{\mathbb{R}} (f^{\prime \prime})^2 f^{-1} \phi {\rm d} p= L_5 n^{2}+L_6 n^{3/2} + L_7n + L_8 \sqrt{n} + L_9 + O \left( \frac{1}{\sqrt{n}} \right)$$ where $L_i$  are some constants again that can be found explicitly and depend on $\alpha$ and $\gamma$. In order to compute $I^{\phi}_{22}$, one need to compute the integral of the following form
$$\int_0^1 {\rm log}(1-p)^i {\rm log}(p)^j p^{A_1(\alpha,\gamma)n + A_2(\alpha,\gamma) \sqrt{n}} (1-p)^{A_3(\alpha,\gamma)n + A_4(\alpha,\gamma) \sqrt{n} } {\rm d} p$$ for $i=1,2,3,4$ and $j=1,2,3,4$ which were computed above for cases $i=1,2$ and $j=1,2$ and one can compute the integral for larger $i$ and $j$ by integration by parts. The only problem with deriving the exact coefficients is the computational cost, so we proceed in terms of constants $L_i$.

In order to use the same notation we will write $I^{\phi}_{11}$ in the following form:
$$I^{\phi}_{11} = \int_{\mathbb{R}} (f^{\prime})^2 f^{-1} \phi {\rm d} p= L_{10}n+ L_{11} \sqrt{n} + L_{12} + O \left( \frac{1}{\sqrt{n}} \right)$$
where coefficients $L_{10},L_{11}, L_{12}$ are found above.

Other terms in (\ref{bh}) can be computed explicitly. Using the notations of previous section we write
$$Q_1^1 = - \frac{\kappa^{\prime}(\alpha,\gamma)}{\kappa(\alpha,\gamma)} g(\alpha),$$
$$Q_1^2 = - \left( \frac{1}{\kappa(\alpha,\gamma)} \right)^{\prime \prime} \int_0^1 p \tilde{\phi}f dp - 2\frac{\kappa^{\prime}(\alpha,\gamma)}{\kappa(\alpha,\gamma)} \left( g^{(1)} - Q_1^1 \right),$$ 
$$Q_2^1 = - \frac{\kappa^{\prime}(\alpha,\gamma)}{\kappa(\alpha,\gamma)},$$
and
$$Q_2^2 = \left( \frac{1}{\kappa(\alpha,\gamma)} \right)^{\prime \prime} \int_0^1 \tilde{\phi}f dp -2\frac{\kappa^{\prime}(\alpha,\gamma)}{\kappa(\alpha,\gamma)}Q_2^1  $$
where 
$ g^{(1)}$ is given in (\ref{gder2}).
Thus, we obtain the following asymptotic of lower bound for the weighted variance in stated Bayesian problem:
\begin{equation}
\mathbb{V}^{\phi}(T) \geq \frac{C_4}{n} + \frac{C_5}{n^{3/2}}+ O \left( \frac{1}{n^2} \right)
\end{equation}
where $C_4$ and $C_5$ are some constants that depend on $\alpha$ and $\gamma$ and can be  found explicitly, but they also have too cumbersome construction. As an example $C_4$ is given below:
\begin{equation}
\begin{array}{l}
\displaystyle
C_4=\frac{2((\alpha-\gamma)^2 + \alpha(1-\alpha)) (-2\alpha^2 + \alpha^3 + 2 \alpha \gamma + \alpha^2 \gamma - 3 \alpha \gamma^2 + \gamma^3)L_1}{(1-\alpha)^3 \alpha^3 (L_1^2 - L_{10}L_5)}\\ + \displaystyle \frac{(-2\alpha^2 + \alpha^3 + 2\alpha \gamma + \alpha^2 \gamma - 3 \alpha \gamma^2 + \gamma^3)^2 L_{10}}{(1-\alpha)^4 \alpha^4 (-L_1^2 + L_{10}L_{5})} + \frac{\left(1+ \frac{(\alpha-\gamma)^2}{(1-\alpha)\alpha} \right)^2 L_5}{-L_1^2 + L_{10}L_5}.
\end{array}
\label{constant4}
\end{equation}

\begin{remark}
Note that in the case $\alpha=\gamma$ the first and second term in $C_4$ vanish. Also one can easily check that $L_1=0$ in this case. So, because of $L_{10} = \frac{1}{\alpha(1-\alpha)}$ we have
$$C_4= \frac{1}{L_{10}} = \alpha(1-\alpha).$$
Thus, the main term of asymptotic is exactly the same as was obtained above in the standard Cram\'er-Rao case.
\end{remark}

\end{proof}

\begin{theorem}\textbf{ (Weighted Bhattacharyya inequality, multiparametric case).}
Let $\theta \in \Theta \subset R^m$ be a vector of parameters, $\tau(\theta) = \left( \tau_1(\theta), \ldots, \tau_l(\theta) \right)^{\rT} \in \mathbb{R}^l$ be  the preassigned vector function of parameter $\theta$ and $T(\textbf{Z})$ be an unbiased estimate of $\tau(\theta)$:
$$e(\theta)=\mathbb{E}_\theta(T) = \int_{R^d} T(\textbf{z}) f_\theta(\bz) {\rm d} \bz = \tau(\theta).$$
Consider the weight function $\phi(\bz, \theta, \gamma)$ such that the condition (\ref{norm}) holds. Assume that the following positively definite matrix exists
\begin{equation}
I^{\phi} = \mathbb{E}^{\phi}_\theta[\beta \beta^{\rT}]
\end{equation}
where
$$\beta = (\beta_1(\theta),\ldots, \beta_r(\theta))^{\rT}$$
is $r$-dimensional RV, components of which are all possible expressions of the following form
\begin{equation}
\frac{1}{f_\theta(\textbf{Z})}\frac{\partial^{i_1, \ldots, i_m}}{\partial \theta_1^{i_1}, \ldots \partial \theta_m^{i_m}} f_\theta(\textbf{Z})
\label{beta}
\end{equation}
 where $(1\leq i_1 + \ldots + i_m \leq s)$ and $r$ is the total number of all these expressions.

Let $\mathbb{F}^{\phi}$ be the ($r \times l$) matrix which rows has the following form
\begin{equation}
\int_{\mathbb{R}^d} \left(T(\textbf{z}) - \tau (\theta) \right) \phi(\bz,\theta,\gamma) \frac{\partial^{i_1, \ldots, i_m}}{\partial \theta_1^{i_1}, \ldots \partial \theta_m^{i_m}}  f_\theta(\textbf{z})  {\rm d} {\bz}
\label{notation1}
\end{equation}
numbered in the same order as expressions (\ref{beta}). Assume that integrands in (\ref{notation1}) and (\ref{norm}) converge uniformly in $\theta$ after the operation of differentiation. 
Then the following inequality for weighted variance of $T$ holds
\begin{equation}
\mathbb{V}^{\phi}_\theta(T)   \geq (\mathbb{F}^{\phi})^{\rT} I^{\phi}
(\theta)^{-1} \mathbb{F}^{\phi}.
\label{bhmulti}
\end{equation}
\end{theorem}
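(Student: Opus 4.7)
The plan is to lift the scalar argument from the uniparametric Bhattacharyya proof (Theorem 7(a)) to the vector/matrix setting. Let $\Lambda$ be an $l\times r$ matrix of real parameters and introduce the auxiliary $\mathbb{R}^l$-valued random vector
$$
R(\bZ;\theta) \;=\; T(\bZ)-\tau(\theta)-\Lambda\,\beta(\bZ;\theta),
$$
where $\beta=(\beta_1,\ldots,\beta_r)^{\rT}$ collects the normalised partial derivatives $(1/f_\theta)\partial^{i_1,\ldots,i_m}_{\theta_1,\ldots,\theta_m}f_\theta$ listed in (\ref{beta}). As in the uniparametric case the idea is to decompose the weighted covariance matrix of $R$ and then exploit its positive semidefiniteness.

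First I would verify that $\mathbb{E}_\theta[R]=0$. Unbiasedness gives $\mathbb{E}_\theta[T-\tau]=0$, and for each multi-index $\alpha_k$ the standing uniform-convergence hypothesis permits differentiation under the integral:
$\mathbb{E}_\theta[\beta_k]=\int_{\mathbb{R}^d}\partial^{\alpha_k}f_\theta\,{\rm d}\bz=\partial^{\alpha_k}\!\!\int_{\mathbb{R}^d}f_\theta\,{\rm d}\bz=0.$
Hence in the definition (\ref{var}) of the weighted covariance the recentering $R-\mathbb{E}_\theta[R]$ collapses to $R$, so $\mathbb{V}^{\phi}_\theta(R)=\mathbb{E}^\phi_\theta[RR^{\rT}]$.

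Next I would expand the four-term product,
$$
\mathbb{V}^{\phi}_\theta(R)\;=\;\mathbb{V}^{\phi}_\theta(T)\,-\,\Lambda\,\mathbb{E}^\phi_\theta[\beta(T-\tau)^{\rT}]\,-\,\mathbb{E}^\phi_\theta[(T-\tau)\beta^{\rT}]\,\Lambda^{\rT}\,+\,\Lambda\,I^{\phi}(\theta)\,\Lambda^{\rT}.
$$
The key identification is
$\mathbb{E}^\phi_\theta[\beta_k(T_j-\tau_j)]=\int_{\mathbb{R}^d}(T_j(\bz)-\tau_j(\theta))\,\phi(\bz,\theta,\gamma)\,\partial^{\alpha_k}f_\theta(\bz)\,{\rm d}\bz,$
which is exactly the $(k,j)$-entry of the matrix $\mathbb{F}^{\phi}$ specified in (\ref{notation1}). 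Therefore $\mathbb{E}^\phi_\theta[\beta(T-\tau)^{\rT}]=\mathbb{F}^{\phi}$ and the expansion reduces to
$$
\mathbb{V}^{\phi}_\theta(R)\;=\;\mathbb{V}^{\phi}_\theta(T)\,-\,\Lambda\mathbb{F}^{\phi}\,-\,(\mathbb{F}^{\phi})^{\rT}\Lambda^{\rT}\,+\,\Lambda\, I^{\phi}(\theta)\,\Lambda^{\rT}.
$$

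The final step uses the assumed positive definiteness of $I^{\phi}$ to complete the square in $\Lambda$:
$$
\mathbb{V}^{\phi}_\theta(R)\;=\;(\Lambda-\Lambda^\star)\,I^{\phi}\,(\Lambda-\Lambda^\star)^{\rT}\;+\;\mathbb{V}^{\phi}_\theta(T)-(\mathbb{F}^{\phi})^{\rT}(I^{\phi})^{-1}\mathbb{F}^{\phi},\qquad \Lambda^\star=(\mathbb{F}^{\phi})^{\rT}(I^{\phi})^{-1}.
$$
Choosing $\Lambda=\Lambda^\star$ annihilates the first summand. Because $\phi\geq 0$ and $f_\theta\geq 0$, the quantity $\mathbb{V}^{\phi}_\theta(R^\star)$ is the integral of the rank-one PSD integrand $R^\star(R^\star)^{\rT}\phi f_\theta$ and is itself positive semidefinite in the Loewner order, yielding inequality (\ref{bhmulti}). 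I expect the only real nuisance to be the book-keeping: listing the multi-indices $\alpha_k$ consistently in $\beta$ and in the rows of $\mathbb{F}^{\phi}$, and explicitly checking that the interchanges of $\partial_\theta$ and $\int\cdot\,{\rm d}\bz$ used to obtain $\mathbb{E}_\theta[\beta_k]=0$ and to identify $\mathbb{F}^{\phi}$ are legitimate under the stated uniform convergence assumption.
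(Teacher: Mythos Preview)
Your proof is correct and follows essentially the same route as the paper: both form the auxiliary vector $R=(T-\tau)-\Lambda\beta$, expand the weighted second moment, and read off (\ref{bhmulti}) from its positive semidefiniteness. The only cosmetic difference is that the paper fixes $\Lambda=(\mathbb{F}^{\phi})^{\rT}(I^{\phi})^{-1}$ from the outset and contracts with an arbitrary $y\in\mathbb{R}^l$ to reduce to scalar nonnegativity, whereas you leave $\Lambda$ free and complete the square; your version has the small advantage of showing where the optimal $\Lambda^\star$ comes from.
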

 
 \begin{remark}
Here and below for ($d \times d$) matrices of the same dimension $d$,  $\mathbb{A}$ and $\mathbb{B}$, the inequality
 $$\mathbb{A} \geq \mathbb{B}$$
 means that $$\mathbb{C}=\mathbb{A}-\mathbb{B}$$ is a non-negatively definite matrix.
 \end{remark}

\begin{proof}
Note that elements of matrix $\mathbb{F}^{\phi}$ can be found from the condition (\ref{norm}).

Consider one dimensional RV
$$\delta = [ (T - \tau) - \beta^{\star}(I^{\phi})^{-1} \mathbb{F}^{\phi}]y$$
where $y^{\rT} = (y_1,\ldots, y_l) \in \mathbb{R}^l$ is a non-random vector. It is easy to see that $\mathbb{E}_\theta(\delta) = 0$. 
Taking weighted expectation of both sides in equality 
\begin{equation}
\delta^2 = y^{\rT} \left[ (T-\tau)(T-\tau)^{\star} - 2(T-\tau) \beta^{\star} (I^{\phi})^{-1} \mathbb{F}^{\phi} + (\mathbb{F}^{\phi})^{\star} (I^{\phi})^{-1}\beta \beta^* (I^{\phi})^{-1}\mathbb{F}^{\phi} \right] y,
\label{expre}
\end{equation}
for any $y$ we obtain
\begin{equation}
\mathbb{E}^\phi_\theta(\delta^2) = y^{\rT} \left[ \mathbb{V}^{\phi}_\theta(T)   - (\mathbb{F}^{\phi})^{\rT} (I^{\phi})^{-1} \mathbb{F}^{\phi} \right] y.
\end{equation}
The non-negativity of variance implies the multi-parametric version of Bhattacharyya inequality, given in (\ref{bhmulti}). One can easily see that in uni-parametric and 1D case this inequality equivalent to the weighted Cram\'er-Rao inequality.
\end{proof}

\section{Weighted Kullback inequality}
\begin{theorem}\textbf{(Weighted Kullback inequality)}\\
\textbf{(a)} 
For given PDFs $f$,$g$
\begin{equation}
K^\phi(f||g) \geq \Psi^*_{\tilde{g}}(\mu_\phi(\tilde{f})) = \sup_{t} \left[ \langle t, \mu_\phi(f) \rangle + {\rm log}C(g) - {\rm log} \bar{M}_g(t) \right]
\label{kullback}
\end{equation}
where
\begin{equation}\bar{M}_g(t) = \int_{\mathbb{R}^d}\phi(\bz) e^{ \langle t, \bz \rangle } g(\bz) {\rm d}{\bz}
\end{equation}
is a weighted moment generating function, $t \in \mathbb{R}^d$ and
$$\mu_\phi({f})= \frac{\mathbb{E}_f[\bZ \phi (\bZ)]}{\mathbb{E}_f[\phi(\bZ)]} \in \mathbb{R}^d $$
is the classical expectation of $\tilde{f}$.

\textbf{(b)} Let $Z_\alpha^{(n)}$ and $Z_\rho^{(n)}$ be RVs with PDF $f_\alpha^{(n)}$ given in (\ref{pdf}) with $x= \lfloor \alpha n \rfloor $ and with PDF $f_\rho^{(n)}$ given in (\ref{pdf})  with $x= \lfloor \rho n \rfloor $ respectively where $0<\alpha,\rho<1$   and weight function
\begin{equation}
\phi^{(n)}(p) = \frac{1}{\kappa(\rho,\gamma)} p^{\gamma \sqrt{n}} (1-p)^{(1-\gamma)\sqrt{n}},
\label{weight002}
\end{equation}
where $ \kappa(\rho,\gamma)$ is found from normalization condition
\begin{equation}
 \int_{0}^1 \phi^{(n)} f_\rho^{(n)}  {\rm d} p = 1.
 \label{norm01}
 \end{equation}
Denote $\epsilon = \alpha - \rho$ then
$$K^\phi(f_\alpha^{(n)}||f_\rho^{(n)}) \geq \frac{\epsilon^2 \left(1+\sqrt{n}-n \right)^2}{2(1-\alpha)\alpha n} +  O \left( 1 \right). $$
As $\epsilon \to 0$,
\begin{equation}
\exists \lim_{\epsilon \to 0}\frac{1}{\epsilon^2} K^\phi(f_\alpha^{(n)}||f_\rho^{(n)}) = \frac{1}{2}I(\tilde{f_\alpha}) \geq \frac{n}{2\alpha (1-\alpha)}-  \frac{\sqrt{n}}{\alpha (1-\alpha)} + O \left( 1 \right)
\label{kullback2}
\end{equation}
where $I(\tilde{f}_\alpha)$ is the standard Fisher information.
\end{theorem}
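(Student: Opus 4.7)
Part (a) follows by combining the identity $K^{\phi}(f||g)=\mathbb{D}(\tilde f||\tilde g)$ recorded in the introduction with the Donsker--Varadhan variational representation of standard relative entropy, restricted to linear test functions $\Phi(\bz)=\langle t,\bz\rangle$. This yields
$$K^{\phi}(f||g)\geq \sup_{t}\bigl[\langle t,\mathbb{E}_{\tilde f}\bZ\rangle-\log\mathbb{E}_{\tilde g}e^{\langle t,\bZ\rangle}\bigr].$$
Substituting $\mathbb{E}_{\tilde f}\bZ=\mu_\phi(f)$ and $\mathbb{E}_{\tilde g}e^{\langle t,\bZ\rangle}=\bar M_g(t)/C(g)$ then produces the stated form of (\ref{kullback}), the extra $\log C(g)$ term coming from the normalisation in $\tilde g$.

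For part (b) the crucial identification is that, because the weight (\ref{weight002}) is itself a Beta kernel, the normalised weighted densities $\tilde f_\alpha^{(n)}$ and $\tilde f_\rho^{(n)}$ are themselves Beta distributions: $\tilde f_\alpha^{(n)}\sim B(a_1,b_1)$ with $a_1=\alpha n+\gamma\sqrt n+1$, $b_1=(1-\alpha)n+(1-\gamma)\sqrt n+1$, and analogously $\tilde f_\rho^{(n)}\sim B(a_2,b_2)$. The critical structural fact is $a_1+b_1=a_2+b_2=n+\sqrt n+2$, so in the closed-form Beta--Beta KL the $\psi(a_1+b_1)$ terms cancel identically.

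To derive the finite-$\epsilon$ lower bound I would apply part (a) with the near-optimal scalar choice $t^{\star}=(\mu_\phi(f_\alpha)-\mu_\phi(f_\rho))/\sigma^2_\phi(f_\rho)$ and the second-order expansion of $\log\bar M_{f_\rho}(t)$ about $t=0$. Because $C(f_\rho)=1$ by (\ref{norm01}), a short computation gives $\mu_\phi(f_\alpha)-\mu_\phi(f_\rho)=\epsilon n/(n+\sqrt n+2)$ and $\sigma^2_\phi(f_\rho)=a_2b_2/[(n+\sqrt n+2)^2(n+\sqrt n+3)]$; the quadratic lower bound $(\mu_\phi(f_\alpha)-\mu_\phi(f_\rho))^2/[2\sigma^2_\phi(f_\rho)]$ is then rewritten and expanded in powers of $1/\sqrt n$ to produce the compact form $\epsilon^2(1+\sqrt n-n)^2/[2\alpha(1-\alpha)n]$. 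For the $\epsilon\to 0$ limit I would Taylor-expand the Beta--Beta KL in $\rho$ about $\rho=\alpha$: the first-order term vanishes and the quadratic coefficient is $\tfrac12 I(\tilde f_\alpha)$, where standard differentiation of the Beta log-likelihood yields $I(\tilde f_\alpha)=n^2[\psi^{(1)}(a_1)+\psi^{(1)}(b_1)]$. Inserting the asymptotics $\psi^{(1)}(x)=1/x+1/(2x^2)+O(1/x^3)$ and expanding in $1/\sqrt n$ gives the claimed lower bound $n/[2\alpha(1-\alpha)]-\sqrt n/[\alpha(1-\alpha)]+O(1)$.

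The main obstacle is the asymptotic bookkeeping. The cancellation $(a_1-a_2)+(b_1-b_2)=0$ is what removes the $\psi(a_1+b_1)$ contribution and makes the expansion quadratic in $\epsilon$, but several subleading terms in $\sqrt n$ (coming both from the digamma asymptotics at $a_1$, $b_1$ and from the normaliser ratio $\epsilon n/(n+\sqrt n+2)$) survive and must recombine precisely into the compact factor $(1+\sqrt n-n)^2$. Justifying the interchange of the $\epsilon\to 0$ limit with the supremum over $t$ in part (a) is routine here, as the Beta family is analytic in its parameters and $\bar M_{f_\rho}(t)$ is smooth in a neighbourhood of $t=0$, so dominated convergence applies uniformly.
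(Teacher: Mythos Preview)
Your argument for part (a) via the Donsker--Varadhan variational formula is correct and is precisely the mechanism underlying the cited result; the paper simply refers to \cite{Kelbert2015} without spelling this out.

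For part (b) your approach and the paper's are essentially the same computation in different clothing. The paper evaluates $\bar M_{f_\rho}(t)$ as the confluent hypergeometric function ${}_1F_1(\rho n+\gamma\sqrt n+1,\,n+\sqrt n+2;\,t)$, expands it in $1/\sqrt n$ to obtain the first two cumulants of $\tilde f_\rho$, and then maximises the resulting quadratic in $t$. Your identification of $\tilde f_\alpha,\tilde f_\rho$ as Beta laws with equal total parameter $a+b=n+\sqrt n+2$ and your choice $t^\star=(\mu_\phi(f_\alpha)-\mu_\phi(f_\rho))/\sigma^2_\phi(f_\rho)$ amount to the same second-order truncation of $\log\bar M$, only you read off the mean and variance from closed Beta formulae instead of the ${}_1F_1$ series. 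That shortcut is neater and avoids the hypergeometric asymptotics; the price is the same: one has to track the $1/\sqrt n$ corrections in $\sigma^2_\phi(f_\rho)$ carefully, and your claim that everything ``recombines precisely into $(1+\sqrt n-n)^2$'' hides exactly the bookkeeping that also drives the paper's expansion.

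The one genuine methodological difference is in the $\epsilon\to 0$ step. The paper obtains the right-hand side of (\ref{kullback2}) by dividing its already-computed lower bound $\Psi^*$ by $\epsilon^2$ and letting $\epsilon\to 0$; the equality $\lim_{\epsilon\to 0}\epsilon^{-2}K^\phi=\tfrac12 I(\tilde f_\alpha)$ is stated rather than proved there. Your route---Taylor-expand $\mathbb D(\tilde f_\alpha\|\tilde f_\rho)$ in $\rho$ about $\alpha$ and identify the Hessian as $I(\tilde f_\alpha)=n^2[\psi^{(1)}(a_1)+\psi^{(1)}(b_1)]$---actually establishes that equality and then yields the asymptotics directly from the polygamma expansion. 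This is more informative: it gives the exact value of $\tfrac12 I(\tilde f_\alpha)$ (whose $\sqrt n$ coefficient depends on $\gamma$), whereas the paper only records a lower bound for it. Both routes agree at the leading $n/[2\alpha(1-\alpha)]$ term.
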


\begin{proof}
\noindent \textbf{(a)}  The inequality (\ref{kullback}) is proved in \cite{Kelbert2015}. 

\noindent \textbf{(b)}  Firstly, note that by (\ref{norm01}):
$${\rm log} \left(C(f_\rho^{(n)}) \right) = 0.$$
The weighted generating function of RV $Z_\rho^{(n)}$ with PDF $f_\rho^{(n)}$ equals:
$$\bar{M}_{f_\rho^{(n)}}(t) = \int_0^1 \phi^{(n)} e^{tp} f_\rho^{(n)} {\rm d}p ={_1F_1}(\rho n + \gamma \sqrt{n} + 1, n + \sqrt{n} + 2; t)$$
$$ = 1 + \sum_{k=1}^\infty \frac{t^k}{k!}\prod_{j=0}^{k-1} \frac{\rho n + \gamma \sqrt{n} + 1+j}{ n + \sqrt{n} + 2+j}$$
where $_1F_1(x, y; z)$ is the confluent hypergeometric function.

For large $n$, the expression for weighted generating function can be written in the following way \cite[formula 12]{hapaev}: 

$$\bar{M}_{f_\rho^{(n)}}(t) = 1 + \sum_{k=1}^\infty \frac{t^k}{k!}\prod_{j=0}^{k-1} \frac{\rho n + \gamma \sqrt{n} + 1+j}{ n + \sqrt{n} + 2+j} = $$

$$= \sum_{k=0}^\infty \frac{t^k}{k!} \left( \rho^k - k (\rho^k-\rho^{k-1} \gamma) \frac{1}{\sqrt{n}} + \frac{\rho^{k-2}k(\rho-2\rho^2-\gamma^2+\rho k - 2\rho \gamma k + \gamma^2 k)}{2n}+ O \left(\frac{1}{n^{3/2}} \right) \right) $$
$$ = e^{\rho t} \left( 1 - (\rho-\gamma) t \frac{1}{\sqrt{n}} + \frac{2(1-\rho-\gamma)t+(\rho-2\rho \gamma + \gamma^2)t^2}{2n} + O \left(\frac{1}{n^{3/2}} \right) \right).$$
Thus, we have that
$${\rm log}\bar{M}_{f_\rho^{(n)}}(t) = \rho t + {\rm log} \left( 1 - (\rho-\gamma) t \frac{1}{\sqrt{n}} + \frac{2(1-\rho-\gamma)t+(\rho-2\rho \gamma + \gamma^2)t^2}{2n} + O \left(\frac{1}{n^{3/2}} \right) \right)=$$
$$ =  \rho t - (\rho - \gamma)t \frac{1}{\sqrt{n}} + \frac{(1-\rho-\gamma)t+ \frac{\rho t^2}{2}(1-\rho)}{n} + O \left(\frac{1}{n^{3/2}} \right). $$

The first term in (\ref{kullback}) for PDF $f_\alpha^{(n)}$ and weight function $\phi^{(n)}$ takes the following form 
$$ \mu_\phi({f_\alpha^{(n)}}) = \frac{\alpha n + \gamma \sqrt{n} +2}{n + \sqrt{n} +2} = \alpha+ (\gamma-\alpha) \frac{1}{\sqrt{n}} + \frac{1-\alpha-\gamma}{n} + O \left( \frac{1}{n^{3/2}} \right).$$ 
Then
\begin{equation}
 \Psi^*_{f_\rho}(\mu_\phi(\tilde{f_\alpha}))= \sup_t \left[ (\alpha - \rho) t -  (\alpha - \rho) \frac{t}{\sqrt{n}} + \frac{\rho-\alpha}{n}t - \frac{(1-\rho)\rho}{2n}t^2 + O \left( \frac{1}{n^{3/2}} \right) \right].
 \label{psistar}
 \end{equation}
 Finding supremum of the expression above, we obtain
$$\tau= \frac{(\alpha-\rho) \left(n-1-\sqrt{n} \right)} {(1-\alpha)\alpha} + O \left( \frac{1}{n^{1/2}} \right).$$
So
\begin{equation}
 \Psi^*_{f_\rho}(\mu_\phi(\tilde{f_\alpha})) =  \frac{(\alpha-\rho)^2 \left(1+\sqrt{n}-n \right)^2}{2(1-\alpha)\alpha n}+ O \left( 1 \right).
 \label{psistar}
 \end{equation}
Denote $\epsilon = \alpha -\rho$. When $\epsilon \to 0$ we obtain
$$\frac{1}{\epsilon^2} \Psi^*_{f_\rho}(\mu_\phi(\tilde{f_\alpha})) = \frac{n}{2\alpha (1-\alpha)}-  \frac{\sqrt{n}}{\alpha (1-\alpha)}+ O \left( 1 \right).$$ Thus
\begin{equation}
\exists \lim_{\epsilon \to 0} \frac{1}{\epsilon^2} K^\phi(f_\alpha^{(n)}||f_\rho^{(n)}) = \frac{1}{2} I(\tilde{f}_\alpha) \geq \frac{n}{2\alpha (1-\alpha)}-  \frac{\sqrt{n}}{\alpha (1-\alpha)} + O \left( 1 \right)
\label{kullback02}
\end{equation}
which completes the proof of Theorem 5.

\end{proof}

\section*{Acknowledgement}
The article was prepared within the framework of a subsidy granted to the HSE by the Government of the Russian Federation for the implementation of the Global Competitiveness Program.

\end{document}